\def\F{{\mathcal{F}}}
\def\I{{\mathcal{I}}}
\def\J{{\mathcal{J}}}
\def\L{{\mathcal{L}}}
\def\M{{\mathcal{M}}}
\def\S{{\mathcal{S}}}
\def\X{{\mathcal{X}}}
\def\Y{{\mathcal{Y}}}
\def\O{{\mathcal{O}}}
\def\RR{{\mathbb{R}}}
\def\x{{\bf x}}
\def\u{{\bf u}}
\def\0{{\bf 0}}
\def\bnabla{\boldsymbol{\nabla}}
\def\bsigma{\boldsymbol{\sigma}}
\def\Dpartial#1#2{ {\partial #1 \over \partial #2} }
\def\Bmp#1{ \begin{minipage}{#1} }
\def\Emp{ \end{minipage} }
\newcommand{\argmin}{\operatorname{argmin}}
\newtheorem{theorem}{Theorem}[section]
\begin{document}

\title{Optimal Reconstruction of Material Properties
       in Complex Multiphysics Phenomena}
\author{Vladislav Bukshtynov$^{*}$ and Bartosz Protas$^{\dag}$}
\date{}

\maketitle

\begin{center}
$^{*}$School of Computational Science and Engineering, McMaster University \\
1280 Main Street West, Hamilton, Ontario, CANADA L8S 4K1\\
e-mail: bukshtu@math.mcmaster.ca
\vspace*{0.5cm} \\
$^{\dag}$Department of Mathematics and Statistics, McMaster University\\
1280 Main Street West, Hamilton, Ontario, CANADA L8S 4K1 \\
e-mail: bprotas@mcmaster.ca
\end{center}

\begin{abstract}
  We develop an optimization--based approach to the problem of
  reconstructing \newline temperature--dependent material properties
  in complex thermo--fluid systems described by the equations for the
  conservation of mass, momentum and energy. Our goal is to estimate
  the temperature dependence of the viscosity coefficient in the
  momentum equation based on some noisy temperature measurements,
  where the temperature is governed by a separate energy equation. We
  show that an elegant and computationally efficient solution of this
  inverse problem is obtained by formulating it as a PDE--constrained
  optimization problem which can be solved with a gradient--based
  descent method. A key element of the proposed approach, the cost
  functional gradients are characterized by mathematical structure
  quite different than in typical problems of PDE--constrained
  optimization and are expressed in terms of integrals defined over
  the level sets of the temperature field. Advanced techniques of
  integration on manifolds are required to evaluate numerically such
  gradients, and we systematically compare three different methods.
  As a model system we consider a two--dimensional unsteady flow in a
  lid--driven cavity with heat transfer, and present a number of
  computational tests to validate our approach and illustrate its
  performance. \\

  {\bf Keywords:} parameter estimation, material properties, optimization,
  adjoint analysis, integration on level sets.

\end{abstract}

\section{Introduction}
\label{sec:intro}

In this work we propose and validate a computational approach to the
reconstruction of material properties in complex multiphysics
phenomena based on incomplete and possibly noisy measurements. The
material properties we are interested in here are the transport
coefficients characterizing diffusion processes such as the viscosity
or the thermal conductivity, and we focus on problems in which these
coefficients depend on the state variables in the system. By the
``multiphysics'' aspect we mean situations in which the material
property used in one conservation equation is a function of a state
variable governed by a different conservation equation, e.g.,
reconstruction of the temperature dependence of the viscosity
coefficient used in the momentum equation, where the temperature is
governed by a separate energy equation, which is the specific model
problem investigated in this study. This research is motivated by
questions arising in the computational analysis and optimization of
advanced welding processes which involves modelling complex alloys
in the liquid phase at high temperatures \cite{vplg09a}.

Inverse problems of parameter estimation for partial differential
equations (PDEs) have received significant attention in the
literature, both regarding theoretical \cite{i06} and practical
aspects \cite{t05}. However, most of the attention focused on problems
in which the material properties are functions of the space variable
(i.e., the independent variable in the problem). Such problems are, at
least in principle, relatively well understood and represent the
foundation of, for example, numerous imaging techniques in medicine
\cite{ns02} and earth sciences \cite{gd96}.  The problem considered
here is in fact different, in that the material properties are sought
as functions of the state (dependent) variables in the system which
gives rise to a number of computational challenges absent in the
``classical'' parameter estimation problem. Other than the seminal
work of Chavent and Lemonnier \cite{chl74}, our earlier study
\cite{bvp10} concerning a simplified model problem and a few
investigations of fully discrete formulations surveyed in
\cite{bvp10}, there does not seem to be much literature concerning
computational methods for this type of parameter estimation problems.
One way to solve such inverse problems is to formulate them as
optimization problems and this is the approach we will follow focusing
on the ``optimize--then--discretize'' paradigm in which the optimality
conditions are formulated at the continuous (PDE) level and only then
discretized. The goal of this investigation is to extend the approach
formulated in \cite{bvp10} for a simple model to a realistic
multiphysics problem involving time--dependent fluid flow in a
two--dimensional (2D) domain. As will be shown below, a number of
computational difficulties will need to be overcome in order to
achieve this goal.

As a key contribution of this work, we address a number of
computational challenges related to accurate and efficient evaluation
of cost functional gradients which are critical to the implementation
of the proposed approach. More specifically, these gradients are given
in terms of integrals of expressions involving state and adjoint
variables defined on a grid over contours given by the level sets of
the temperature field.  A number of techniques have been proposed for
the numerical evaluation of integrals defined over manifolds defined
by level--set functions. Some of them rely on regularized Dirac delta
and Heaviside functions \cite{ZahTorn10, EngTornTs04}, or
discretization of the Dirac delta function \cite{Smereka2006, Mayo84,
  Beale08}.  Similar approaches, based on approximations of the Dirac
delta functions obtained using the level--set function and its
gradient, were developed by Towers \cite{Tow09, Tow07}.  The family of
geometric approaches proposed by Min and Gibou in \cite{MinGib07,
  MinGib08} relies on a decomposition of the entire domain into
simplices. We emphasize that the problem discussed in this work is in
fact more complicated, as the computation of our cost functional
gradients requires evaluation of the corresponding integrals for the
level--set values spanning the entire state space of interest, hence
there are also additional issues related to the discretization of the
state space which were outside the scope of references
\cite{EngTornTs04,ZahTorn10,Mayo84,Smereka2006,Beale08,MinGib07,MinGib08,
  Tow09,Tow07}. Thus, in order to address these questions and assess
the different trade--offs in the choice of the numerical parameters we
will compare the computational performance of three different methods
for the evaluation of cost functional gradients.

The structure of this paper is as follows: in the next Section we
formulate our model problem, in the following Section we cast the
problem of parameter estimation as an optimization problem, an
adjoint--based gradient--descent algorithm is formulated in Section
\ref{sec:grad_adj}, in Section \ref{sec:reg} we outline some
regularization strategies needed in the presence of measurement
noise, whereas in Section \ref{sec:grad_eval} we analyze three
different numerical approaches to the evaluation of the cost
functional gradients; extensive computational results are presented
in Section \ref{sec:results} with discussion and conclusions
deferred to Section \ref{sec:final}.

\section{Model Problem}
\label{sec:estim_model_problem}

Let $\Omega \subset \RR^d$, $d=2,3$, be the spatial domain on which
our model problem is formulated. To fix attention, but without loss of
generality, in the present investigation we focus on the problem of a
reconstruction of the temperature dependence $\mu = \mu(T)$ of the
viscosity coefficient $\mu\: : \: \RR \rightarrow \RR^+$ in the
momentum equation (Navier-Stokes equation), where the temperature $T$
is governed by a separate energy equation
\begin{subequations}
\label{eq:coupled_PDEs}
\begin{alignat}{2}
\partial_t \u + (\u \cdot \bnabla) \u + \bnabla p - \bnabla \cdot
\left[ \mu(T) [ \bnabla \u + (\bnabla \u)^T] \right] & = & 0
\qquad & \textrm{in} \ \Omega,
\label{eq:coupled_NS} \\
\bnabla \cdot \u & = & 0 \qquad & \textrm{in} \ \Omega,
\label{eq:coupled_incompress} \\
\partial_t T + (\u \cdot \bnabla) T - \bnabla \cdot
[ k \bnabla T] & = & 0 \qquad & \textrm{in} \ \Omega,
\label{eq:coupled_heat}
\end{alignat}
\end{subequations}
subject to appropriate Dirichlet (or Neumann) boundary and
initial conditions
\begin{subequations}
\label{eq:coupled_BC}
\begin{alignat}{2}
\u & = \u_B \qquad && \textrm{on} \ \partial\Omega, \label{eq:BC_u} \\
T & = T_B \qquad && \textrm{on} \ \partial\Omega, \label{eq:BC_T} \\
\u(\cdot,0) & = \u_0, \ T(\cdot,0) = T_0 \qquad && \textrm{in} \ \Omega. \label{eq:IC_u_T}
\end{alignat}
\end{subequations}
The specific inverse problem we address in this investigation is
formulated as follows. Given a set of time--dependent ``measurements''
$\{\tilde T_i(t) \}_{i=1}^M$ of the state variable (temperature)
$T$ at a number of points $\{\x_i\}_{i=1}^M$ in the domain $\Omega$
(or along the boundary $\partial \Omega$) and obtained within the time
window $t \in [0, t_f]$, we seek to reconstruct the constitutive
relation $\mu = \mu(T)$ such that solutions of problem
\eqref{eq:coupled_PDEs}-\eqref{eq:coupled_BC} obtained with this
reconstructed function will fit best the available measurements.

In regard to reconstruction of constitutive relations in general, it
is important that such relations be consistent with the second
principle of thermodynamics \cite{m89}. There exist two mathematical
formalisms, one due to Coleman and Noll \cite{cn63} and another one
due to Liu \cite{l72}, developed to ensure in a very general setting
that a given form of the constitutive relation does not violate the
second principle of thermodynamics. In continuous thermodynamical and
mechanical systems this principle is expressed in terms of the
Clausius--Duhem inequality \cite{tpcm08a} which in the case of our
present model problem \eqref{eq:coupled_PDEs}--\eqref{eq:coupled_BC}
reduces to the statement that $\mu(T) > 0$ for all possible values of
$T$.

In our discussion below we will also need definitions of the following
intervals, cf.~Figure \ref{fig:domains}:
\begin{itemize}
\item $[T_{\alpha}, T_{\beta}] \triangleq [\min_{\x \in
    \overline{\Omega}} T(\x), \max_{\x \in \overline{\Omega}} T(\x)]$
  which represents the temperature range spanned by the solution of problem
  \eqref{eq:coupled_PDEs}; thus, following \cite{k03}, we will refer
  to the interval $\I \triangleq [T_{\alpha}, T_{\beta}]$ as the {\em
    identifiability interval},

\item $\L \triangleq [T_a, T_b]$, where $T_a \le T_{\alpha}$ and $T_b
  \ge T_{\beta}$; this will be the temperature interval on which we will seek to
  obtain a reconstruction of the material property; we note that in
  general the interval $\L$ will be larger than the identifiability
  interval, i.e., $\I \subseteq \L$, and

\item $\M \triangleq [\min_{1 \leq i \leq M} \min_{0 < t \leq t_f}
  \tilde T_i(t), \max_{1 \leq i \leq M} \max_{0 < t \leq t_f} \tilde T_i(t)]$
  which defines the temperature range spanned by the measurements
  $\{\tilde{T}_i \}_{i=1}^M$; this interval is always contained
  the identifiability interval $\I$, i.e., $\M \subseteq \I$; we will refer
  to the interval $\M$ as the {\em measurement span}.
\end{itemize}

\begin{figure}
\begin{center}
\includegraphics[width=1.0\textwidth]{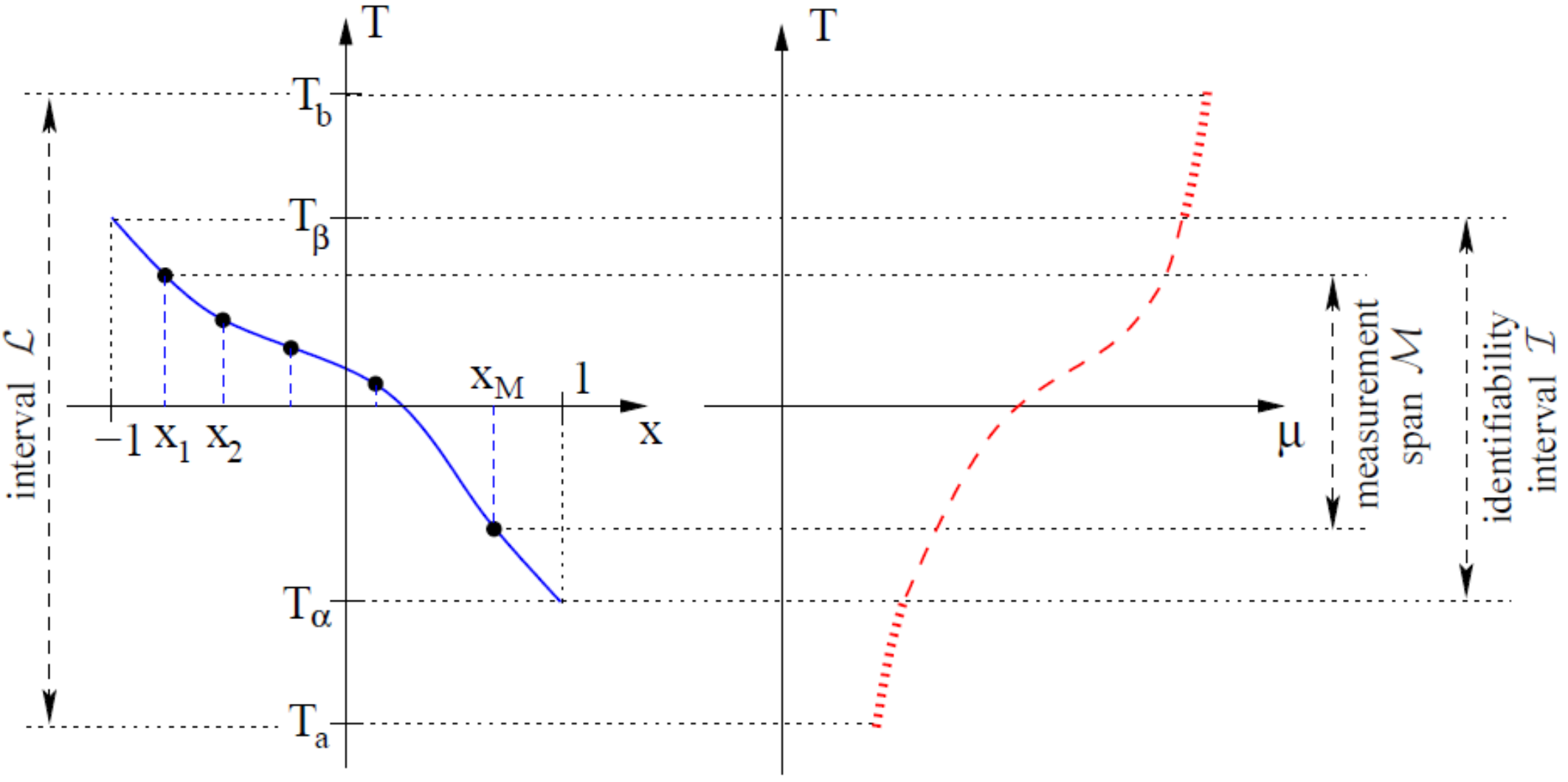}
\end{center}
\caption{Schematic showing (left) the solution $T(t_0,x)$ at some
  fixed time $t_0$ and (right) the corresponding constitutive relation
  $\mu(T)$ defined over their respective domains, i.e., $\Omega =
  (-1,1)$ and the identifiability region $\I$.  The thick dotted line
  represents an extension of the constitutive relation $\mu(T)$ from
  $\I$ to the interval $\L$. In the Figure on the right the horizontal
  axis is to be interpreted as the ordinate.}
\label{fig:domains}
\end{figure}

\section{Parameter Estimation as an Optimization Problem}
\label{sec:param}

It is assumed that the constitutive relations $\mu(T)$ are
differentiable functions of the state variable (temperature) and
belong to the following set
\begin{equation}
\S_{\mu} = \{\mu(T) \ \textrm{piecewise} \ C^1 \ \textrm{on} \ \L;
\ 0< m_{\mu} \leq \mu(T) \leq M_{\mu} < \infty,\ \forall \,T \in \L\},
\label{eq:S_mu}
\end{equation}
where $m_{\mu}, M_{\mu} \in \RR^+$. We will also assume that the set
$\S_{\mu}$ consisting of constitutive relations $\mu(T)$ defined on $\L$ is
embedded in a Hilbert (function) space $\X$ to be specified below.
Solving our parameter estimation problem is therefore equivalent to
finding a solution to the operator equation
\begin{equation}
\F(\mu) = T,
\label{eq:F}
\end{equation}
where $\F \, : \, \S_{\mu} \rightarrow \left( L_2([0,t_f]) \right)^M$ is
the map from the constitutive relations to the measurements. An
approach commonly used to solve such problems consists in
reformulating them as least--squares minimization problems which in
the present case can be done by defining the cost functional $\J \: :
\: \X \rightarrow \RR$ as
\begin{equation}
\J(\mu) \triangleq \frac{1}{2} \int_0^{t_f} \sum_{i=1}^M \left[
T(\tau,\x_i;\mu) - \tilde{T}_i(\tau) \right]^2 d\tau,
\label{eq:J0}
\end{equation}
where the dependence of the temperature field $T(\cdot;\mu)$ on the
form of the constitutive relation $\mu = \mu(T)$ is given by governing
system \eqref{eq:coupled_PDEs}--\eqref{eq:coupled_BC}.
The optimal reconstruction $\hat{\mu}$ is obtained
as a minimizer of cost functional \eqref{eq:J0}, i.e.,
\begin{equation}
\hat{\mu} = \underset{\mu \in \S_{\mu}}{\argmin}\,\J(\mu).
\label{eq:minJ}
\end{equation}
We recall that the constitutive property is required to satisfy the
positivity condition $\mu(T) > 0$ for all $T \in \L$,
cf.~\eqref{eq:S_mu}. Therefore, the optimal reconstruction $\hat{\mu}$
should in fact be obtained as an {\em inequality}--constrained
minimizer of cost functional \eqref{eq:J0}, i.e.,
\begin{equation}
\hat{\mu} = \underset{\stackrel{\mu \in \X,}{\mu(T)>0, \ T \in \L}}{\argmin} \J(\mu).
\label{eq:min}
\end{equation}
We add that in problems involving constitutive relations depending on
several state variables the inequality constraint $\mu(T) > 0$ will be
replaced with a more general form of the Clausius--Duhem inequality
\cite{tpcm08a}. Different computational approaches for converting
inequality--constrained optimization problems to unconstrained
formulations are surveyed in \cite{r06, v02}. Here we follow a
straightforward approach based on the so-called ``slack'' variable
\cite{bv04}. We define a new function $\theta(T)\: : \: \RR \rightarrow \RR$
such that
\begin{equation}
\mu(T) = \theta^2(T) + m_{\mu},
\label{eq:slack}
\end{equation}
where $m_{\mu}$ is a lower bound for $\mu(T)$, cf.~\eqref{eq:S_mu}.
This change of variables allows us to transform the {\em
  inequality}--constrained optimization problem \eqref{eq:min} to a
new unconstrained one
\begin{equation}
\hat{\theta} = \underset{\theta \in \X}{\argmin}\,\J(\theta),
\label{eq:minJslack}
\end{equation}
where the constraint $\mu(T) > 0$ is satisfied automatically when
minimization is performed with respect to the new variable $\theta(T)$.
In view of \eqref{eq:S_mu}, we note that the new
optimization variable $\theta$ belongs to the following set
\begin{equation}
\S_{\theta} = \{\theta(T) \ \textrm{piecewise} \ C^1 \ \textrm{on} \ \L; \
| \theta(T) | < \sqrt{M_{\mu}-m_{\mu}},\ \forall \,T \in \L\}.
\label{eq:S_theta}
\end{equation}
The governing PDE system \eqref{eq:coupled_PDEs} can thus be rewritten
in the form
\begin{subequations}
\label{eq:coupled_PDEs_slack}
\begin{alignat}{2}
\partial_t \u + (\u \cdot \bnabla) \u + \bnabla p - \bnabla \cdot
\left[ (\theta^2(T) + m_{\mu}) [ \bnabla \u + (\bnabla \u)^T] \right] & = & 0
\qquad & \textrm{in} \ \Omega,
\label{eq:coupled_NS_slack} \\
\bnabla \cdot \u & = & 0 \qquad & \textrm{in} \ \Omega,
\label{eq:coupled_incompress_slack} \\
\partial_t T + (\u \cdot \bnabla) T - \bnabla \cdot
[ k \bnabla T] & = & 0 \qquad & \textrm{in} \ \Omega,
\label{eq:coupled_heat_slack}
\end{alignat}
\end{subequations}
subject to Dirichlet boundary and initial conditions \eqref{eq:coupled_BC}.
The new problem \eqref{eq:minJslack} requires redefining cost functional
\eqref{eq:J0} in terms of the new variable
\begin{equation}
\J(\theta) = \frac{1}{2} \int_0^{t_f} \sum_{i=1}^M \left[ T(\tau,\x_i;\theta)
- \tilde{T}_i(\tau) \right]^2 \, d\tau.
\label{eq:J0_slack}
\end{equation}
Problem \eqref{eq:minJslack} is characterized by the first--order
optimality condition which requires the G\^ateaux differential of cost
functional \eqref{eq:J0_slack}, defined
as $\J'(\theta;\theta') \triangleq
\lim_{\epsilon\rightarrow 0} \epsilon^{-1} [\J(\theta+\epsilon \theta') -
\J(\theta)]$, to vanish for all perturbations $\theta' \in \X$ \cite{l69}, i.e.,
\begin{equation}
\forall_{\theta'\in\X} \ \ \J'(\hat{\theta};\theta') = 0.
\label{eq:dJ}
\end{equation}
The (local) optimizer $\hat{\theta}$ can be computed with the following
gradient descent algorithm as $\hat{\theta} = \lim_{n\rightarrow
\infty} \theta^{(n)}$, where	
\begin{equation}
\left\{
\begin{alignedat}{2}
&\theta^{(n+1)} && = \theta^{(n)} - \tau^{(n)} \bnabla_{\theta}\J(\theta^{(n)}), \qquad n=1,\dots, \\
&\theta^{(1)}   && = \theta_0,
\end{alignedat}
\right.
\label{eq:desc}
\end{equation}
in which $\bnabla_{\theta}\J(\theta)$ represents the gradient of cost
functional $\J(\theta)$ with respect to the control variable $\theta$
(we will adopt the convention that a subscript on the operator
$\bnabla$ will be used when differentiation is performed with respect
to variables other than $\x$), $\tau^{(n)}$ is the length of the step
along the descent direction at the $n$--th iteration, whereas
$\theta_0 = \sqrt{\mu_0 - m_{\mu}}$ is the initial guess taken, for
instance, corresponding to a constant $\mu_0$, or some other
appropriate initial approximation. For the sake of clarity,
formulation \eqref{eq:desc} represents the steepest--descent
algorithm, however, in practice one typically uses more advanced
minimization techniques, such as the conjugate gradient method, or one
of the quasi--Newton techniques \cite{nw00}. We note that, since
minimization problem \eqref{eq:minJslack} is in general nonconvex,
condition \eqref{eq:dJ} characterizes only a {\em local}, rather than
{\em global}, minimizer.

The key ingredient of minimization algorithm \eqref{eq:desc} is
computation of the cost functional gradient
$\bnabla_{\theta}\J(\theta)$. We emphasize that, since $\theta =
\theta(T)$ is a continuous variable, the gradient
$\bnabla_{\theta}\J(\theta)$ represents in fact the
infinite--dimensional sensitivity of $\J(\theta)$ to perturbations of
$\theta(T)$.  This gradient can be determined based on suitably
defined {\em adjoint variables} (Lagrange multipliers) obtained from
the solution of the corresponding {\em adjoint system}. Since this
derivation differs in a number of imported technical details from
analogous derivations in ``standard'' PDE--constrained optimization
problems, it will be reviewed in Section \ref{sec:grad_adj}. The
expression for the gradient is then validated for consistency in
Section \ref{sec:kappa}.

\section{Cost Functional Gradients via Adjoint--based Analysis}
\label{sec:grad_adj}

Since the new variable $\theta(T)$ belongs to set $\S_{\theta}$,
cf.~\eqref{eq:S_theta}, we will seek to reconstruct $\theta(T)$
as elements of the Sobolev space $H^1(\L)$, so that the gradient
$\bnabla_{\theta}\J$ will need to be obtained with respect to the
corresponding inner product. However, in order to make the derivation
procedure easier to follow, we will first obtain an expression for
the gradient in the space $L_2(\L)$, and only then will obtain the
Sobolev gradients which will be eventually used in the solution of
optimization problem \eqref{eq:minJslack}. In all these steps our
transformations will be formal. We begin by computing the directional
(G\^ateaux) differential of cost functional \eqref{eq:J0_slack} which yields
\begin{equation}
\J'(\theta;\theta') = \int_0^{t_f} \sum_{i=1}^M [T(\tau,\x_i;\theta) - \tilde{T}_i(\tau)] \,
T'(\tau,\x_i; \theta, \theta') \, d\tau,
\label{eq:dJ2_coupled}
\end{equation}
where the perturbation variable $T'(\theta, \theta')$ satisfies the
perturbation system obtained from
\eqref{eq:coupled_PDEs}--\eqref{eq:coupled_BC}. Next, we invoke the
Riesz representation theorem \cite{b77} for the directional
differential $\J'(\theta;\cdot)$ which yields
\begin{equation}
\J'(\theta; \theta') =
\Big\langle \bnabla_{\theta}\J, \theta' \Big\rangle_{\X},
\label{eq:riesz}
\end{equation}
where $\langle\cdot,\cdot\rangle_{\X}$ represents an inner product in
the Hilbert space $\X$ (we will first set $\X = L_2(\L)$ and
afterwards change this to $\X = H^1(\L)$). We note that the expression
on the right--hand side (RHS) in \eqref{eq:dJ2_coupled} is not
consistent with Riesz representation \eqref{eq:riesz}, since, as will
be shown below, the perturbation variable $\theta'$ is hidden in the
system defining $T'(\theta,\theta')$. However, this expression can be
transformed to Riesz form \eqref{eq:riesz} with the help of a
suitably--defined adjoint variable, a result which is stated in
Theorem \ref{thm:riesz} below. The main aspect in which this
derivation differs from the standard adjoint analysis \cite{g03} is
that the inner product in Riesz identity \eqref{eq:riesz} is defined
using the state variable (temperature) as the integration variable,
whereas the variational formulation of
\eqref{eq:coupled_PDEs}--\eqref{eq:coupled_BC} is defined using
integration with respect to the independent variables ($\x$ and $t$).

\begin{theorem}
  Let $\Omega$ be a sufficiently regular open bounded domain and
  $\theta'\in \X = H^1(\L)$. We assume that the solutions $\u$ and $T$
  of system \eqref{eq:coupled_PDEs}--\eqref{eq:coupled_BC} are
  sufficiently smooth. Then, the Riesz representation of directional
  differential \eqref{eq:dJ2_coupled} has the form
\begin{equation}
\J'(\theta;\theta') =
-2 \int_{-\infty}^{\infty} \int_{\Omega} \delta(T(\x)-s) \, \theta(s) \,
\left[ \int^{t_f}_0 [\bnabla \u + (\bnabla \u)^T] : \bnabla \u^*  \, d\tau \right ]
\, \theta'(s) \, d\x \, ds,
\label{eq:rrgT}
\end{equation}
where $\delta(\cdot)$ denotes Dirac delta function and the adjoint state
$\{ \u^*, T^* \}$ is defined as the solution of the system
\begin{subequations}
\label{eq:adjoint_coupled}
\begin{alignat}{2}
- \partial_t \u^* - (\u \cdot \bnabla) \u^* - \bnabla \cdot \bsigma^* +
\u^* \cdot (\bnabla \u)^T + T^* \bnabla T & = 0
\qquad & \mathrm{in} \ \Omega, \label{eq:adjoint_coupled_a} \\
\bnabla \cdot \u^* & = 0 \qquad & \mathrm{in} \ \Omega, \label{eq:adjoint_coupled_b} \\
- \partial_t T^* - (\u \cdot \bnabla) T^* - \bnabla \cdot
[k \bnabla T^*] + 2\theta(T)\,\dfrac{d\theta}{dT}(T) \,[\bnabla \u + (\bnabla \u)^T] : \bnabla^* \u
\nonumber &\\
= \sum_{i=1}^M [T(\x_i;\theta) - \tilde{T}_i] \delta (\x - \x_i)&
\qquad & \mathrm{in} \ \Omega, \label{eq:adjoint_coupled_c}
\end{alignat}
\end{subequations}
where $\bsigma^* \triangleq -p^*\I + (\theta^2(T) + m_{\mu})
\left[ \bnabla \u^* + (\bnabla \u^*)^T \right]$,
with the following boundary and terminal conditions
\begin{equation}
\label{eq:BC_adjoint}
\begin{aligned}
\u^* & = 0 \qquad && \mathrm{on} \ \partial\Omega,\\
T^* & = 0 \qquad && \mathrm{on} \ \partial\Omega,\\
\u^*(\cdot; t_f) & = 0, \ T^*(\cdot; t_f) = 0 \qquad && \mathrm{in} \ \Omega.
\end{aligned}
\end{equation}
\label{thm:riesz}
\end{theorem}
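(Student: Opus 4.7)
The plan is to follow the standard Lagrange--multiplier recipe for PDE--constrained optimization, but with a final change--of--variables step that recasts the resulting spatial integral as an integral over the temperature variable $s\in\L$. I would first linearize the governing system \eqref{eq:coupled_PDEs_slack}--\eqref{eq:coupled_BC} about the base state $(\u, p, T)$ to obtain a perturbation system for $(\u', p', T')$ driven by $\theta'$. The only nonstandard ingredient is that perturbing $\mu = \theta^2(T) + m_{\mu}$ yields $\mu' = 2\theta(T)\,\theta'(T) + 2\theta(T)\,(d\theta/dT)(T)\,T'$, so the viscous stress contributes a ``source'' proportional to $\theta'(T(\x))$ together with an implicit contribution proportional to $T'$ that has to be absorbed into the adjoint heat equation later on.

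Next, I would multiply the three perturbation equations by $\u^*$, $p^*$ and $T^*$ respectively, integrate over $\Omega\times[0,t_f]$, and integrate by parts in both space and time. The homogeneous initial data for $(\u', T')$ together with the terminal conditions $\u^*(\cdot; t_f) = 0$, $T^*(\cdot; t_f) = 0$ kill the time--boundary terms, while the homogeneous Dirichlet data for $(\u', T')$ together with $\u^* = T^* = 0$ on $\partial\Omega$ kill all spatial boundary integrals. I would then choose $\{\u^*, p^*, T^*\}$ so that the coefficients of $\u'$, $p'$ and $T'$ in the remaining volume integral vanish identically; this prescription is exactly the adjoint system \eqref{eq:adjoint_coupled}. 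Matching the coefficient of $T'$ forces both the measurement forcing $\sum_i [T(\x_i)-\tilde T_i]\,\delta(\x-\x_i)$ on the right--hand side of \eqref{eq:adjoint_coupled_c} (obtained by rewriting the point sum in \eqref{eq:dJ2_coupled} through Dirac masses) and the extra coupling term $2\theta(T)(d\theta/dT)(T)\,[\bnabla\u+(\bnabla\u)^T]:\bnabla\u^*$ that soaks up the implicit $T'$--contribution noted above.

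After cancellation, the directional differential collapses to a single volume integral of the form
\begin{equation*}
\J'(\theta;\theta') = -2\int_0^{t_f}\!\!\int_{\Omega} \theta(T(\x))\,\bigl[\bnabla\u+(\bnabla\u)^T\bigr]:\bnabla\u^*\;\theta'(T(\x))\,d\x\,d\tau,
\end{equation*}
with the overall factor $-2$ coming from $\mu' = 2\theta\theta' + \cdots$ together with the sign produced by integrating by parts the divergence of the viscous stress. The nonstandard final step is to recast this integral over $\Omega$ as an integral over $s\in\L$ so that it matches Riesz identity \eqref{eq:riesz} for the $L_2(\L)$ inner product. I would insert the two sifting identities $\theta(T(\x)) = \int_{-\infty}^{\infty} \theta(s)\,\delta(T(\x)-s)\,ds$ and $\theta'(T(\x)) = \int_{-\infty}^{\infty} \theta'(s)\,\delta(T(\x)-s)\,ds$, exchange the order of integration, and contract the two Dirac masses; what remains is precisely \eqref{eq:rrgT}. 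The $L_2(\L)$ gradient is then read off from the bracketed expression, and the $H^1(\L)$ Sobolev gradient is obtained afterwards by solving the usual Helmholtz problem on $\L$.

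The main obstacle is twofold. First, keeping the bookkeeping straight when collecting the coefficient of $T'$ is error--prone because of the $(d\theta/dT)$--mediated coupling between the momentum and energy perturbations; this is what generates the unfamiliar forcing in the adjoint energy equation \eqref{eq:adjoint_coupled_c}. Second, the Dirac--delta manipulation is only formal unless $T(\x)$ is smooth enough and its level sets $\{T=s\}$ are regular for almost every $s\in\L$, i.e.\ $|\bnabla T|\neq 0$ on them, so that by the coarea formula the inner integral in \eqref{eq:rrgT} is a well--defined surface integral weighted by $1/|\bnabla T|$; the smoothness assumptions explicitly placed in the statement of the theorem are exactly those needed to legitimize this manipulation.
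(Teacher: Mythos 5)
Your proposal is correct and follows essentially the same route as the paper's proof: linearize the governing system (with the key expansion $\mu' = 2\theta\theta' + 2\theta(d\theta/dT)T'$), integrate the perturbation equations against the adjoint variables, define the adjoint system so that the $\u'$, $p'$ coefficients vanish and the $T'$ coefficient matches the measurement forcing, and then convert the residual volume integral to an integral over $s$ via the Dirac sifting identity. Your double insertion of the sifting identity followed by contraction of the two delta masses is just a formally equivalent restatement of the paper's single application of the change--of--variable operator $\Pi$ to the product $\theta(T(\x))\theta'(T(\x))$.
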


\begin{proof}
We will denote the stress tensor $\bsigma \triangleq -p\I + (\theta^2(T) +
m_{\mu}) \left[ \bnabla \u + (\bnabla \u)^T] \right]$ and rewrite the
governing system \eqref{eq:coupled_PDEs_slack} as
\begin{equation}
\label{eq:coupled_PDEs_sigma}
\begin{aligned}
\partial_t \u + (\u \cdot \bnabla) \u - \bnabla \cdot \bsigma & = & 0
\qquad & \textrm{in} \ \Omega,\\
\bnabla \cdot \u & = & 0 \qquad & \textrm{in} \ \Omega,\\
\partial_t T + (\u \cdot \bnabla) T - \bnabla \cdot
[ k \bnabla T] & = & 0 \qquad & \textrm{in} \ \Omega
\end{aligned}
\end{equation}
with the boundary and initial conditions \eqref{eq:coupled_BC}.
Perturbing the state variables $\u$, $p$ and $T$, which are functions
of time and space, we get
\begin{equation}
\label{eq:states_perturbed}
\begin{aligned}
\u & = \u_0 + \epsilon \u' + \O(\epsilon^2), \\
p & = p_0 + \epsilon p' + \O(\epsilon^2), \\
T & = T_0 + \epsilon T' + \O(\epsilon^2),
\end{aligned}
\end{equation}
so that the corresponding expansion of the constitutive relation
$\theta(T)$ will have the following form
\begin{equation}
\label{eq:slack_perturbed}
\theta(T) = \theta_0(T) + \epsilon \theta'(T) + \O(\epsilon^2)
= \theta_0(T_0) + \epsilon \dfrac{d\theta}{dT}(T_0) \, T'(\theta_0;\theta') + \epsilon \theta'(T_0) + \O(\epsilon^2),
\end{equation}
where the subscript ``$0$'' is used to denote the unperturbed (reference) material property and the state variable,
whereas the prime denotes the corresponding perturbations. We also have
\begin{equation}
\label{eq:slack_2_perturbed}
\theta^2(T) = \theta^2_0(T_0) + 2 \epsilon \theta_0 (T_0) \, \dfrac{d\theta}{dT}(T_0) \, T'(\theta_0;\theta') +
2 \epsilon \theta_0 (T_0) \theta'(T_0) + \O(\epsilon^2).
\end{equation}
Substituting \eqref{eq:states_perturbed} and \eqref{eq:slack_2_perturbed}
into \eqref{eq:coupled_PDEs_sigma}, collecting terms corresponding to
$\epsilon$ in different powers and denoting
\begin{equation*}
\begin{aligned}{}
\hat \bsigma & \triangleq
-p'\I + (\theta^2(T) + m_{\mu}) \left[ \bnabla \u' + (\bnabla \u')^T \right], \\
\tilde \bsigma & \triangleq \left( 2\theta(T) \, \dfrac{d\theta}{dT}(T) \, T'(\theta_0;\theta') + 2\theta(T) \, \theta'(T)
\right) \left[ \bnabla \u + (\bnabla \u)^T \right],
\end{aligned}
\end{equation*}
we obtain the perturbation (sensitivity) system corresponding to
\eqref{eq:coupled_PDEs_slack}
\begin{subequations}
\label{eq:perturb}
\begin{alignat}{2}
\partial_t \u' + (\u' \cdot \bnabla) \u + (\u \cdot \bnabla) \u' -
\bnabla \cdot ( \hat \bsigma + \tilde \bsigma )  & = & 0
\qquad & \textrm{in} \ \Omega, \label{eq:perturb_a} \\
\bnabla \cdot \u' & = & 0 \qquad & \textrm{in} \ \Omega,  \label{eq:perturb_b} \\
\partial_t T' + (\u' \cdot \bnabla) T + (\u \cdot \bnabla) T' - \bnabla \cdot
[ k \bnabla T'] & = & 0 \qquad & \textrm{in} \ \Omega  \label{eq:perturb_c}
\end{alignat}
\end{subequations}
with the following boundary and initial conditions
\begin{subequations}
\label{eq:BC_perturb}
\begin{alignat}{2}
\u' & = 0 \qquad && \textrm{on} \ \partial\Omega,\\
T' & = 0 \qquad && \textrm{on} \ \partial\Omega,\\
\u'(\cdot,0) & = 0, \ T'(\cdot,0) = 0 \qquad && \textrm{in} \ \Omega.
\end{alignat}
\end{subequations}
Then, integrating equation \eqref{eq:perturb_a} against $\u^*$,
equation \eqref{eq:perturb_b} against $p^*$, and equation
\eqref{eq:perturb_c} against $T^*$ over the space domain $\Omega$ and
time $[0,t_f]$, integrating by parts
and factorizing $\u'$, $T'$ and $p'$, we arrive at the following relation
\begin{equation}
\label{eq:adjoint_final}
\begin{aligned}
&\int^{t_f}_0 \int_{\Omega} \left[ -\partial_t \u^* + \u^* \cdot (\bnabla \u)^T - (\u \cdot \bnabla) \u^*
-\bnabla \cdot \bsigma^* + T^* \bnabla T \right] \cdot \u' \, d\x \, d\tau\\
- &\int^{t_f}_0 \int_{\Omega} (\bnabla \cdot \u^*) p' \, d\x \, d\tau\\
+ &\int^{t_f}_0 \int_{\Omega} \left[- \partial_t T^* - (\u \cdot \bnabla) T^* - \bnabla \cdot (k \bnabla T^*)
+ 2\theta(T)\dfrac{d\theta}{dT}(T) [\bnabla \u + (\bnabla \u)^T] : \bnabla \u^* \right] T' \, d\x \, d\tau\\
+ &\int^{t_f}_0 \int_{\Omega} 2\theta(T)\theta'(T) [\bnabla \u + (\bnabla \u)^T] : \bnabla \u^* \, d\x \, d\tau = 0.
\end{aligned}
\end{equation}
We now require that the adjoint variables $\u^*$, $p^*$ and $T^*$
satisfy system \eqref{eq:adjoint_coupled}--\eqref{eq:BC_adjoint}. We
also note that owing to the judicious choice of the RHS term in
\eqref{eq:adjoint_coupled_c}, the last term in relation
\eqref{eq:adjoint_final} is in fact equal to the directional
differential $\J'(\theta;\theta')$, so that we have
\begin{equation}
\J'(\theta;\theta') =
-2 \int^{t_f}_0 \int_{\Omega} \theta(T(\x,\tau))\theta'(T(\x,\tau)) \,
[\bnabla \u(\x,\tau) + (\bnabla \u(\x,\tau))^T]
: \bnabla \u^*(\x,\tau) \,d\x \, d\tau,
\label{eq:rrgd}
\end{equation}
where, for emphasis, we indicated the integration variables as
arguments of the state and adjoint variables. We note that this
expression is still not in Riesz form \eqref{eq:riesz}, where
integration must be performed with respect to the state variable
(temperature $T$). Thus, we proceed to express for any given function
$f(T)$ its pointwise evaluation at $T(\x)$ through the following
integral transform. It is defined using a ``change--of--variable''
operator, denoted $\Pi$, such that for given functions $f \, : \, \RR
\rightarrow \RR$ and $T \, : \, \Omega \rightarrow \RR$, we have
\begin{equation}
f(T(\x)) = \int_{-\infty}^{+\infty} \delta (T(\x) - s) f(s) \, ds \triangleq (\Pi f)(\x).
\label{eq:change_vars}
\end{equation}
Using this transform to express $f(T(\x)) =
\theta(T(\x,\tau))\theta'(T(\x,\tau))$ in \eqref{eq:rrgd} and changing
the order of integration (Fubini's Theorem), we obtain expression
\eqref{eq:rrgT} which is the required Riesz representation
\eqref{eq:riesz} of directional differential \eqref{eq:dJ2_coupled}.
\end{proof}

We remark that we were able to prove an analogous result using a
simpler approach based on the Kirchhoff transform in \cite{bvp10},
where both the constitutive relation and the state variable were
governed by the same equation (i.e., the problem was not of the
``multiphysics'' type).

With the Riesz representation established in \eqref{eq:rrgT}, we now
proceed to identify expressions for the cost functional gradient
$\bnabla_{\theta} \J$ according to \eqref{eq:riesz} using different
spaces $\X$. While this is not the gradient that we will use in actual
computations, we analyze first the ``simplest'' case when $\X = L_2(\L)$,
i.e., the space of functions square integrable on $[T_a,T_b]$, as it
already offers some interesting insights into the structure of the problem.
The $L_2$ gradient of the cost functional hence takes the form
\begin{equation}
\label{eq:grad_T_coupled}
\bnabla_{\theta}^{L_2} \J(s) = -2 \int_0^{t_f} \int_{\Omega} \delta (T(\x) - s)
\, \theta(s) \, [\bnabla \u + (\bnabla \u)^T] : \bnabla \u^* \ d\x \, d\tau.
\end{equation}
As was discussed at length in \cite{bvp10}, the $L_2$ gradients are
not suitable for the reconstruction of material properties in the
present problem, because in addition to lacking necessary smoothness,
they are not defined outside the identifiability region (other than
perhaps through a trivial extension with zero). Given the regularity
required of the constitutive relations, cf.~\eqref{eq:S_theta}, the cost
functional gradients should be elements of the Sobolev space $H^1(\L)$
of functions with square--integrable derivatives on $\L$. Using
\eqref{eq:riesz}, now with $\X = H^1(\L)$, we
obtain
\begin{equation}
\begin{aligned}
\J'(\theta; \theta') & = \Big\langle \bnabla_{\theta}^{L_2} \J, \theta' \Big\rangle_{L_2(\L)} =
\Big\langle\bnabla_{\theta}^{H^1} \J, \theta' \Big\rangle _{H^1(\L)} \\
& = \int_{T_{a}}^{T_{b}} \left[ (\bnabla_{\theta}^{H^1} \J) \, \theta' +
\ell^2 \frac{d (\bnabla_{\theta}^{H^1}\J)}{ds} \frac{d\theta'}{ds} \right] \, ds
\end{aligned}
\label{eq:dJ4}
\end{equation}
in which $\ell \in \RR$ is a parameter with the meaning of a
``temperature--scale'' [we note that the $L_2$ inner product is
recovered by setting $\ell=0$ in \eqref{eq:dJ4}]. Performing
integration by parts with the assumption that the Sobolev gradient
$\bnabla_{\theta}^{H^1} \J$ satisfies the homogeneous Neumann boundary
conditions at $T = T_a, T_b$, and noting that relation \eqref{eq:dJ4}
must be satisfied for any arbitrary $\theta'$, we conclude that the
Sobolev gradient can be determined as a solution of the following
inhomogeneous elliptic boundary--value problem where the state
variable (temperature) acts as the independent variable
\begin{subequations}
\label{eq:helm}
\begin{alignat}{2}
\bnabla_{\theta}^{H^1} \J - \ell^2 \frac{d^2}{ds^2}
\bnabla_{\theta}^{H^1} \J & = \bnabla_{\theta}^{L_2} \J
\qquad && \textrm{on} \ (T_{a},T_{b}), \label{eq:helm_a} \\
\frac{d}{ds} \bnabla_{\theta}^{H^1} \J & = 0 && \textrm{for} \ s=T_{a}, T_{b}. \label{eq:helm_b}
\end{alignat}
\end{subequations}
We recall that by changing the value of the temperature--scale
parameter $\ell$ we can control the smoothness of the gradient
$\bnabla_{\theta}^{H^1} \J(\theta)$, and therefore also the relative
smoothness of the resulting reconstruction of $\theta(T)$, and hence
also the regularity of $\mu(T)$.  More specifically, as was shown in
\cite{pbh04}, extracting cost functional gradients in the Sobolev
spaces $H^p$, $p>0$, is equivalent to applying a low--pass filter to
the $L_2$ gradient with the quantity $\ell$ representing the
``cut-off'' scale. There are also other ways of defining the Sobolev
gradients in the present problem which result in gradients
characterized by a different behavior outside the identifiability
region. These approaches were thoroughly investigated in \cite{bvp10},
and since they typically lead to inferior results, they will not be
considered here. We finally conclude that iterative reconstruction of
the constitutive relation $\mu(T)$ involves the following computations
\begin{enumerate}
\item
solution of direct problem \eqref{eq:coupled_PDEs_slack} with
boundary and initial conditions \eqref{eq:coupled_BC},

\item
solution of adjoint problem \eqref{eq:adjoint_coupled}--\eqref{eq:BC_adjoint},

\item
evaluation of expression \eqref{eq:grad_T_coupled} for the cost functional gradient,

\item
computation of the smoothed Sobolev gradient by solving \eqref{eq:helm}.
\end{enumerate}
While steps (1), (2) and (4) are fairly straightforward, step (3) is
not and will be thoroughly investigated in Section \ref{sec:grad_eval}.

As we discussed in detail in \cite{bvp10}, while the Sobolev gradient
$\bnabla_{\theta}^{H^1} \J$ may be defined on an arbitrary interval
$\L \supset \I$, the actual sensitivity information is essentially
available only on the identifiability interval $\I$ (see Figure
\ref{fig:domains}). In other words, extension of the gradient outside
$\I$ via \eqref{eq:helm} does not generate new sensitivity
information.  Since, as demonstrated by our computational results
reported in \cite{bvp10}, such techniques are not capable of
accurately reconstructing the relation $\mu(T)$ on an interval $\L$
much larger than the identifiability region $\I$, here we mention a
different method to ``extend'' the identifiability region, so that the
relation $\mu(T)$ can be reconstructed on a larger interval. This can
be done in a straightforward manner by choosing appropriate
time--dependent boundary conditions for temperature $T_B$ in
\eqref{eq:BC_T} which will result in a suitable identifiability region
$\I$ and the measurement span $\M$.  Computational results
illustrating the performance of our approach with different
identifiability regions obtained using this method will be presented
in Section \ref{sec:estim}. We remark that extending the
identifiability region in this way is not possible in
time--independent problems where an iterative approach has to be used
involving solution of a sequence of reconstruction problems on shifted
identifiability regions \cite{bvp10}.

\section{Reconstruction in the Presence of Measurement Noise}
\label{sec:reg}

In this Section we discuss the important issue of reconstruction in
the presence of noise in the measurements. As can be expected based on
the general properties of parameter estimation problems \cite{t05}, and
as will be confirmed in Section \ref{sec:noise}, incorporation of
random noise into the measurements leads to an instability in the form
of small--scale oscillations appearing in the reconstructed
constitutive relations. In the optimization framework a standard
approach to mitigate this problem is Tikhonov regularization
\cite{ehn96} in which original cost functional \eqref{eq:J0_slack} is
replaced with a regularized expression of the form
\begin{equation}
\J_{\lambda}(\theta) \triangleq \J(\theta) +
\dfrac{\lambda}{2} \big\| \theta - \bar{\theta} \big\|^2_{\Y(\I)},
\label{eq:Jreg}
\end{equation}
where $\lambda \in \RR^+$ is an adjustable regularization parameter,
$\bar{\theta}(T)$ represents a constitutive relation which our
reconstruction $\theta(T)$ should not differ too much from, whereas
$\|\cdot\|_{\Y(\I)}$ is the Hilbert space norm in which we measure the
deviation $(\theta - \bar{\theta})$. Thus, the regularization term in
\eqref{eq:Jreg}, i.e., the second one on the RHS, involves some
additional information which needs to be specified a priori, namely,
the choice of the reference relation $\bar{\theta}(T)$ and the space
$\Y(\I)$. As regards the reference function $\bar{\theta}(T)$, one
natural possibility is to consider a constant value corresponding to a
constant material property, and this is the solution we will adopt
below. We recall here that $\theta(T)$ is in fact a ``slack'' variable
and is related to the actual constitutive relation via
\eqref{eq:slack}. As regards the choice of the space $\Y(\I)$, we will
follow the discussion in \cite{bvp10} and consider a regularization
term involving derivatives, namely $\Y(\I) = \dot{H}^1(\I)$, where
$\dot{H}^1(\I)$ denotes the Sobolev space equipped with the semi--norm
$\| z \|_{\dot{H}^1(\I)} \triangleq \int_{T_{\alpha}}^{T_{\beta}}
\left(\Dpartial{z}{s}\right)^2 \, ds$, $\forall_{z \in H^1(\I)}$; the
regularization term in \eqref{eq:Jreg} then becomes
\begin{equation}
\dfrac{\lambda}{2} \big\|\theta - \bar{\theta} \big\|^2_{\dot{H}^1(\I)} =
\dfrac{\lambda}{2} \int_{T_{\alpha}}^{T_{\beta}} \left( \dfrac{d\theta}{ds} -
\dfrac{d \bar \theta}{ds}  \right)^2 ds
\label{eq:regH1}
\end{equation}
yielding the following $L_2$ gradient of the regularized cost functional
\begin{equation}
\begin{aligned}
\bnabla_{\theta}^{L_2} \J_{\lambda}(s) = &- 2\int_0^{t_f} \int_{\Omega} \delta (T(\x) - s)
\, \theta(s) \, [\bnabla \u + (\bnabla \u)^T] : \bnabla \u^* \ d\x \, d\tau \\ &+
\lambda \left\{ \dfrac{d\theta}{ds} \left[\delta(s - T_{\beta}) - \delta(s - T_{\alpha})\right]
- \dfrac{d^2\theta}{ds^2} \right\}.
\end{aligned}
\label{eq:DJregH1}
\end{equation}
We remark that in obtaining \eqref{eq:DJregH1} integration by parts
was applied to the directional derivative of the regularization term.
Expression \eqref{eq:DJregH1} can now be used to compute the Sobolev
gradients as discussed in Section \ref{sec:grad_adj}. We add that
penalty term \eqref{eq:regH1} is defined on the identifiability
interval $\I$ which is contained in the interval $\L$ on which the
Sobolev gradients are computed. Computational tests illustrating the
performance of the Tikhonov regularization on a problem with noisy
data will be presented in Section \ref{sec:noise}.  In that Section we
will also briefly analyze the effect of the regularization parameter
$\lambda$. We add that the stability and convergence of Tikhonov
regularization using the Sobolev norm $H^1$ in the regularization term
and applied to an inverse problem with similar mathematical structure,
but formulated for a simpler PDE than \eqref{eq:coupled_PDEs}, was
established rigorously in \cite{k03}.

\section{Numerical Approaches to Gradient Evaluation}
\label{sec:grad_eval}

Without loss of generality, hereafter we will focus our discussion
on the 2D case. For some technical reasons we will also assume that
\begin{equation}
  \forall_{t \in [0,t_f]} \quad \textrm{meas} \left\{ \x \in \Omega, \
  | \bnabla T(t,\x) | = 0 \right\} = 0,
\label{eq:dt0}
\end{equation}
i.e., that the temperature gradient may not vanish on subregions with
finite area. (This assumption is naturally satisfied when the
temperature evolution is governed by an equation of the parabolic type
such as \eqref{eq:coupled_heat}.)

A key element of reconstruction algorithm \eqref{eq:desc} is
evaluation of the cost functional gradients given, in the $L_2$
case, by expression \eqref{eq:grad_T_coupled}. The difficulty consists
in the fact that at every instant of time $t \in [0,t_f]$ and for every value of $s$ (i.e., the dependent
variable), one has to compute a line integral defined on the level set
\begin{equation}
\Gamma_s(t) \triangleq \{ \x \in \Omega, \ T(t,\x) = s \}
\label{eq:level_set_def}
\end{equation}
of the temperature field $T(t,\x)$. To focus attention on the main
issue, the time dependence will be omitted in the discussion below.
The integrand expression in such integrals is given in terms of
solutions of the direct and adjoint problems
\eqref{eq:coupled_PDEs}--\eqref{eq:coupled_BC} and
\eqref{eq:adjoint_coupled}--\eqref{eq:BC_adjoint} which are
approximated on a grid.  As will be shown below, this problem is
closely related to approximation of one--dimensional Dirac measures in
$\RR^d$, an issue which has received some attention in the literature
\cite{EngTornTs04,ZahTorn10,Mayo84,Smereka2006,Beale08,MinGib07,MinGib08,Tow09,Tow07}.
We will compare different computational approaches to this problem,
and in order to better assess their accuracy, we will first test them
on the generic expression
\begin{equation}
\label{eq:grad_T_simple_LHS}
f(s) = \int_{\Omega} \delta (\phi(s,\x)) g(\x) \, d\x
\end{equation}
for which the actual formula for the cost functional gradient
\eqref{eq:grad_T_coupled} is a special case (except for the
time integration). In \eqref{eq:grad_T_simple_LHS} the function
$\phi(s,\x) : \RR \times \Omega \rightarrow \RR$
represents the field whose $s$-level sets define the contours of
integration $\Gamma_s$, whereas the function $g(\x) : \Omega
\rightarrow \RR$ represents the actual integrand expression. We note
that by setting $\phi(s,\x) = T(\x) - s$, $g(\x) = 2\theta(T(\x))
[\bnabla \u(\x) + (\bnabla \u(\x))^T] : \bnabla \u^*(\x)$ and adding
time integration in \eqref{eq:grad_T_simple_LHS}, we recover the
original expression \eqref{eq:grad_T_coupled} for the cost functional
gradient. We emphasize, however, that the advantage of using
\eqref{eq:grad_T_simple_LHS} with some simple, closed--form
expressions for $\phi(s,\x)$ and $g(\x)$ as a testbed is that
this will make our assessment of the accuracy of the proposed methods
independent of the accuracy involved in the numerical solution of the
governing and adjoint PDEs (needed to approximate $\u$, $T$ and
$\u^*$).

In anticipation of one of the proposed numerical approaches,
it is useful to rewrite \eqref{eq:grad_T_simple_LHS} explicitly as a
line integral
\begin{equation}
\label{eq:grad_T_simple_RHS}
f(s) = \int_{\Gamma_s} \dfrac{g(\x)}{| \bnabla \phi|}\, d\sigma
\end{equation}
which is valid provided $| \bnabla \phi| \neq 0$ for every $\x \in
\Gamma_s$, cf.~assumption \eqref{eq:dt0} (proof of the equivalence of
expressions \eqref{eq:grad_T_simple_LHS} and
\eqref{eq:grad_T_simple_RHS} may be found, for example, in
\cite{b11}).  Formula \eqref{eq:grad_T_simple_RHS} makes it clear that
for a fixed value of $s$ expression \eqref{eq:grad_T_coupled} for the
cost functional gradient can be interpreted as a sum of line integrals
defined on the instantaneous $s$--level sets of the temperature field
$T(t,\x)$.

The problem of accurate numerical evaluation of the expressions given
by either \eqref{eq:grad_T_simple_LHS} or \eqref{eq:grad_T_simple_RHS}
has received much attention, especially since the invention of the
level-set approach by Osher and Sethian \cite{OshSeth88}.
Traditionally, the problem of integration over codimension--1
manifolds defined by a level--set function $\phi(\x)$ is studied in
terms of the numerical evaluation of either the left--hand side (LHS)
or right--hand side (RHS) expression in the following relation,
analogous to \eqref{eq:grad_T_simple_LHS}--\eqref{eq:grad_T_simple_RHS},
\begin{equation}
\label{eq:line_area_int}
\int_{\Gamma: \ \phi(\x) = 0} h(\x) d\sigma =
\int_{\Omega} \delta (\phi(\x)) |\bnabla \phi(\x) | h(\x) \, d\x,
\end{equation}
where, by absorbing the factor $|\bnabla \phi(\x) |^{-1}$ into the
definition of the function $h \; : \; \Omega \rightarrow \RR$, one
bypasses the problem of the points $\x \in \Gamma_s$ where
$|\bnabla \phi(\x) | = 0$, cf.~\eqref{eq:dt0}. These approaches fall
into two main groups: \smallskip \\
\hspace*{0.5cm}
\Bmp{\textwidth}
\begin{itemize}
  \item[(A)] reduction to a {\it line (contour) integral},
  cf.~\eqref{eq:grad_T_simple_RHS}, or the LHS of
  \eqref{eq:line_area_int}, and
\smallskip

  \item[(B)] evaluation of an {\it area integral},
  cf.~\eqref{eq:grad_T_simple_LHS}, or the RHS of
  \eqref{eq:line_area_int}.
\end{itemize}
\Emp
\smallskip \\
In the context of this classification, the methods of geometric
integration developed by Min and Gibou \cite{MinGib07, MinGib08} fall
into the first category. This approach is based on decomposing the
domain $\Omega$ into simplices, which in the simplest 2D case can be
achieved via a standard triangulation, and then approximating the
level sets given by $\phi(\x) = 0$ with piecewise splines inside each
simplex. Expression \eqref{eq:grad_T_simple_RHS} then breaks up into a
number of definite integrals which can be evaluated using standard
quadratures.

In practice, however, area integration techniques (B) seem to have
become more popular. One family of such techniques relies on
regularization $\delta_{\epsilon}$ of the Dirac delta function with a
suitable choice of the regularization parameter $\epsilon$ which
characterizes the size of the support. While in the simplest case in
which the parameter $\epsilon$ is determined based on the mesh size
the error is $\O(1)$ \cite{EngTornTs04}, recently developed approaches
\cite{EngTornTs04,ZahTorn10} achieve better accuracy by adjusting
$\epsilon$ based on the local gradient $| \bnabla \phi |$ of the
level--set function. Another family of area integration approaches is
represented by the work of Mayo \cite{Mayo84} further developed by
Smereka \cite{Smereka2006} where a discrete approximation $\tilde
\delta$ of the Dirac delta function was obtained. This approach can
also be regarded as yet another way to regularize delta function
$\delta (\phi (\x))$ using a fixed compact support in the
one--dimensional (1D) space of values $\phi (\x)$.  In the second
group of approaches we also mention consistent approximations to delta
function obtained by Towers in \cite{Tow09, Tow07} using the
level--set function and its gradient computed via finite differences.

In our present reconstruction problem, we have to evaluate the
gradient expression \eqref{eq:grad_T_coupled} for the whole range of
$T \in \L$, hence the discretization of the interval $\L$ will also
affect the overall accuracy of the reconstruction, in addition to the
accuracy characterizing evaluation of the gradient for a particular
value of $T$. This is an aspect of the present problem which is
outside the scope of earlier investigations concerning evaluation of
the contour integrals of grid--based data
\cite{EngTornTs04,ZahTorn10,Mayo84,Smereka2006,Beale08,MinGib07,MinGib08,Tow09,Tow07}.
Thus, we need to understand how the interplay of the discretizations
of the physical space $\Omega$ with the step size $h$ and the state
space $\L$ with the step size $h_T$ affects the accuracy of the
reconstruction. In principle, one could also consider the effect of
discretizing the time interval $[0,t_f]$, however, the corresponding
step size is linked to $h$ via the CFL condition, hence this effect
will not be separately analyzed here. There are also questions
concerning the computational complexity of the different approaches.
We will consider below the following three methods to evaluate
expression \eqref{eq:grad_T_simple_LHS}, or equivalently
\eqref{eq:grad_T_simple_RHS}, which are representative of the
different approaches mentioned above
\begin{enumerate}
\item[\#1]
line integration over approximate level sets which is a method from
group A based on a simplified version of the geometric
integration developed by Min and Gibou in \cite{MinGib07, MinGib08},

\item[\#2]
approximation of Dirac delta measures developed by Smereka in
\cite{Smereka2006} which is an example of a  regularization
technique and utilizes the area integration strategy from group B, and

\item[\#3]
approximation of contour integrals with area integrals, a method which
also belongs to group B and combines some properties of
regularization and discretization of Dirac delta measures discussed
above \cite{ZahTorn10, EngTornTs04,Smereka2006}; more details about
this approach, including an analysis for its accuracy, are provided in
Section \ref{sec:domain_int}.
\end{enumerate}

To fix attention, we now introduce two different finite--element (FEM)
discretizations of the domain $\Omega$ based on
\begin{itemize}
  \item triangular elements $\Omega^{\vartriangle}_i$, $i=1,\dots,N_{\vartriangle}$, such that
  \begin{equation}
    \Omega = \bigcup^{N_{\vartriangle}}_{i=1} \Omega^{\vartriangle}_i, \ \ \textrm{and}
    \label{eq:discr_trg}
  \end{equation}
  \item quadrilateral elements $\Omega^{\square}_i$, $i=1,\dots,N_{\square}$, such that
  \begin{equation}
    \Omega = \bigcup^{N_{\square}}_{i=1} \Omega^{\square}_i,
    \label{eq:discr_qdr}
  \end{equation}
\end{itemize}
where $N_{\vartriangle}$ and $N_{\square}$ are the total numbers of
the elements for each type of discretization (in case of uniform
triangulation one has $N_{\vartriangle} = 2N_{\square}$).  In our
computational tests we will assume that the functions $\phi(s,\x)$ and
$g(\x)$ are given either analytically, or in terms of the following
FEM representations
  \begin{equation}
    \phi(s,\x)|_{\Omega^{\vartriangle}_i}  = \Sigma_{k=1}^3 \phi_k^i \psi_k^i (\x),
    \qquad g(\x)|_{\Omega^{\vartriangle}_i} = \Sigma_{k=1}^3 g_k^i \psi_k^i (\x),
    \qquad i=1,\dots,N_{\vartriangle},
    \label{eq:ffem_trg}
  \end{equation}
  \begin{equation}
    \phi(s,\x)|_{\Omega^{\square}_i}  = \Sigma_{k=1}^4 \phi_k^i \psi_k^i (\x),
    \qquad g(\x)|_{\Omega^{\square}_i} = \Sigma_{k=1}^4 g_k^i \psi_k^i (\x),
    \qquad i=1,\dots,N_{\square},
    \label{eq:ffem_qdr}
  \end{equation}
where $\phi_k^i$ and $g_k^i$ are the given nodal values of the functions $\phi(s,\x)$
and $g(\x)$, whereas $\psi_k^i(\x)$ are the basis functions (linear in
\eqref{eq:ffem_trg} and bilinear in \eqref{eq:ffem_qdr} \cite{g06}).
We also discretize the reconstruction interval (solution space) $\L = [T_a, T_b]$
with the step size $h_T$ as follows
\begin{equation}
T_i = T_a + i \, h_T, \ \ i = 0, \dots, N_T, \ \ h_T = \dfrac{T_b - T_a}{N_T}.
\label{eq:T_discr}
\end{equation}

\subsection{Line Integration Over Approximate Level Sets}
\label{sec:line_int}

This approach is a variation of the geometric integration technique
developed by Min and Gibou \cite{MinGib07,MinGib08}. The main idea
behind both methods is decomposition of the domain $\Omega$ into
simplices, which in our simplest 2D case is represented by
triangulation \eqref{eq:discr_trg}, and then approximating the level
sets given by $\phi(s,\x) = 0$ with piecewise linear splines inside
each simplex (triangle).  While in the geometric integration approach
of Min and Gibou one uses linear interpolation to refine locally the
finite elements which contain the level sets $\phi(s,\x) = 0$ and then
the second--order midpoint rule for approximating line integrals over
the selected simplices, in the present method we employ analogous
approximations of the level sets, but without local refinement, to
reduce line integral \eqref{eq:grad_T_simple_RHS} to a 1D definite
integral which is then evaluated using standard quadratures.

The starting point for this approach is formula
\eqref{eq:grad_T_simple_RHS}.  For a fixed value of $s$ the
corresponding level set can be described as
\begin{equation}
\Gamma_s = \bigcup^{M(s)}_{j=1} \Gamma^j_s,
\label{eq:level_set_discr}
\end{equation}
where $\Gamma^j_s \subset \Omega_j^{\vartriangle}$ and $M(s)$ is the
total number of the finite elements containing segments of the level
set $\Gamma_s$. We thus need to approximate $\int_{\Gamma^j_s}
\dfrac{g(\x)}{| \bnabla \phi(s, \x)|}\, d\sigma$, i.e., the line
integral over the part of the level--set curve contained in the $j$-th
finite element $\Omega^{\vartriangle}_j$. In view of
\eqref{eq:ffem_trg}, the integrand expression $\varrho(s, \x)
\triangleq \dfrac{g(\x)}{| \bnabla \phi(s, \x)|}$ can be approximated
as
\begin{equation*}
\tilde \varrho (s, \x)|_{\Omega^{\vartriangle}_i} \cong \Sigma_{k=1}^3 \varrho_k^i \psi_k^i (\x),
\end{equation*}
where $\varrho_k^i$ are the known nodal values of the function
$\varrho (s, \x)$.  An approximation $\tilde \Gamma_s^j$ of the part
of the level set $\Gamma_s^j$ belonging to the $j$-th finite element
can be obtained in an explicit form $y = y(x)$, $x \in [x',x'']$, or a
parametric form $x = x(t)$, $y = y(t)$ with $t \in [t',t'']$, based on
representation \eqref{eq:ffem_trg} of the level--set function $\phi(s,
\x)$. This leads to the following two possible reductions of the line
integral to a definite integral
\begin{subequations}
\label{eq:int}
\begin{align}
& \int_{\tilde \Gamma_s^j} \tilde \varrho(s, \x) d \sigma =
\int_{x'}^{x''} \tilde \varrho(x,y(x)) \sqrt{\left( \frac{dy}{dx} \right) ^2 + 1} \, dx
\label{eq:int_1} \\
& \int_{\tilde \Gamma_s^j} \tilde \varrho(s, \x) d \sigma =
\int_{t'}^{t''} \tilde \varrho(x(t),y(t))
\sqrt{\left( \frac{dx}{dt} \right)^2 + \left( \frac{dy}{dt} \right)^2} \, dt
\label{eq:int_2}
\end{align}
\end{subequations}
which can be evaluated using standard quadratures for 1D definite
integrals. We then have
\begin{equation}
f(s)  \approx \Sigma_{j=1}^{M(s)} \int_{\tilde \Gamma_s^j} \tilde \varrho(s, \x)\, d \sigma.
\label{eq:FEM_line_int}
\end{equation}
We note that the accuracy of this approach is mainly determined by the
order of interpolation used to represent the level set $\tilde
\Gamma^j_s$ and the integrand expression $\tilde \varrho(s, \x)$ which
depend on the type of the finite elements used \cite{g06}. (The error
of the quadrature employed to evaluate \eqref{eq:int} does not have a
dominating effect.) As was mentioned in \cite{MinGib07}, the use of
triangulation \eqref{eq:discr_trg} together with linear interpolation
of $\phi(s,\x)$ and $\varrho(s, \x)$ results in the overall
second--order accuracy of this method.

\subsection{Approximation of Dirac Delta Measures}
\label{sec:Smereka_int}

This approach has formula \eqref{eq:grad_T_simple_LHS} for its
starting point and relies on a discrete approximation of the Dirac
delta function obtained by Smereka in \cite{Smereka2006}. It is
derived via truncation of the discrete Laplacian of the corresponding
Green's function. Suppose the domain $\Omega$ is covered with a
uniform Cartesian grid corresponding to \eqref{eq:discr_qdr}, i.e.,
with nodes $x_i = x_0 + ih$, $y_j = y_0 + jh$, where $i, j$ are
integer indices, $x_0, y_0 \in \RR$ and $h$ is the step size. The
first--order accurate approximation of the discrete Dirac delta
function at the node $(x_i,y_j)$ is
\begin{equation}
\tilde \delta (\phi_{i,j}) = \tilde \delta_{i,j}^{(+x)} +
\tilde \delta_{i,j}^{(-x)} + \tilde \delta_{i,j}^{(+y)} +
\tilde \delta_{i,j}^{(-y)},
\label{eq:delta_Smereka}
\end{equation}
where
%\begin{equation*}
\begin{align*}
& \tilde \delta_{i,j}^{(+x)} \triangleq \left\{
\begin{aligned}
\dfrac{|\phi_{i+1,j}D^0_x \phi_{i,j}|}{h^2 |D^+_x \phi_{i,j}| |\bnabla^{\epsilon}_0 \phi_{i,j}|}
\qquad & \textrm{if} \ \phi_{i,j} \phi_{i+1,j} \leq 0, \\
0, \qquad & \textrm{otherwise},
\end{aligned}
 \right. \\
%\end{equation*}
%\begin{equation*}
& \tilde \delta_{i,j}^{(-x)} \triangleq \left\{
\begin{aligned}
\dfrac{|\phi_{i-1,j}D^0_x \phi_{i,j}|}{h^2 |D^-_x \phi_{i,j}| |\bnabla^{\epsilon}_0 \phi_{i,j}|}
\qquad & \textrm{if} \ \phi_{i,j} \phi_{i-1,j} < 0, \\
0, \qquad & \textrm{otherwise},
\end{aligned}
 \right. \\
%\end{equation*}
%\begin{equation*}
& \tilde \delta_{i,j}^{(+y)} \triangleq \left\{
\begin{aligned}
\dfrac{|\phi_{i,j+1}D^0_y \phi_{i,j}|}{h^2 |D^+_y \phi_{i,j}| |\bnabla^{\epsilon}_0 \phi_{i,j}|}
\qquad & \textrm{if} \ \phi_{i,j} \phi_{i,j+1} \leq 0, \\
0, \qquad & \textrm{otherwise},
\end{aligned}
 \right. \\
%\end{equation*}
%\begin{equation*}
& \tilde \delta_{i,j}^{(-y)} \triangleq \left\{
\begin{aligned}
\dfrac{|\phi_{i,j-1}D^0_y \phi_{i,j}|}{h^2 |D^-_y \phi_{i,j}| |\bnabla^{\epsilon}_0 \phi_{i,j}|}
\qquad & \textrm{if} \ \phi_{i,j} \phi_{i,j-1} < 0, \\
0, \qquad & \textrm{otherwise},
\end{aligned}
 \right.
\end{align*}
%\end{equation*}
where for the discretized level--set function $\phi_{i,j} \triangleq
\phi(s, x_i,y_j)$ we have the following definitions
\begin{equation*}
D^+_x \phi_{i,j} \triangleq \dfrac{\phi_{i+1,j}-\phi_{i,j}}{h}, \quad
D^-_x \phi_{i,j} \triangleq \dfrac{\phi_{i,j}-\phi_{i-1,j}}{h}, \quad
D^0_x \phi_{i,j} \triangleq \dfrac{\phi_{i+1,j}-\phi_{i-1,j}}{2h}
\end{equation*}
and
\begin{equation*}
| \bnabla ^{\epsilon}_0 \phi_{i,j} | \triangleq \sqrt{ (D^0_x \phi_{i,j})^2 +
(D^0_y \phi_{i,j})^2 + \varepsilon},
\end{equation*}
in which $\varepsilon \ll 1$ is used for regularization
\cite{Smereka2006}. The expressions $D^+_y \phi_{i,j}$, $D^-_y
\phi_{i,j}$, $D^0_y \phi_{i,j}$ are defined analogously. Using the
definition of the discrete delta function from
\eqref{eq:delta_Smereka}, the value $f(s)$ in
\eqref{eq:grad_T_simple_LHS} can be thus approximated in the following
way
\begin{equation}
f(s) \approx h^2 \sum_{i,j} \tilde \delta_{i,j} \, g_{i,j},
\label{eq:FEM_Delta}
\end{equation}
where $g_{i,j}$ are the nodal values of the function $g(\x)$.
We note that this method was validated in \cite{Smereka2006}
exhibiting the theoretically predicted first order
of accuracy only in cases in which the level sets $\Gamma_s$ do not
intersect the domain boundary $\partial \Omega$, a situation which may
occur in the present reconstruction problem.

\subsection{Approximation of Contour Integrals with Area Integrals}
\label{sec:domain_int}

Our third method, in which the level--set integral
\eqref{eq:grad_T_simple_LHS} is approximated with an area integral
defined over a region containing the level set $\Gamma_s$,
cf.~\eqref{eq:level_set_def}, appears to be a new approach and will be
presented in some detail here. It consists of the following three
steps
\begin{enumerate}
\item for a fixed value of the state variable $s = T_i$ we define the interval
  $[T_{i-\frac{1}{2}}, T_{i+\frac{1}{2}}] = [s - \frac{1}{2}h_T, s +
  \frac{1}{2}h_T] \subset \L$; then, we have
\begin{equation}
f(s) \approx \frac{1}{h_T} \int_{s - \frac{1}{2}h_T}^{s + \frac{1}{2}h_T}
f(\zeta)\,d\zeta,
\label{eq:fs}
\end{equation}

\item now we define a subdomain $\Omega_{s,h_T} \subset \Omega$
  which contains all the points of $\Omega$ that lie between the two
  level--set curves $\Gamma_{s-\frac{1}{2}h_T}$ and
  $\Gamma_{s+\frac{1}{2}h_T}$
  \begin{equation}
    \Omega_{s,h_T} \triangleq \left\{ \x \in \Omega, \, T(\x) \in
    \left[s - \frac{1}{2}h_T, s + \frac{1}{2}h_T\right] \right\},
    \label{eq:Omega_cont}
  \end{equation}
  see Figure \ref{fig:Dom_Int}a; we then approximate $\Omega_{s,h_T}$
  with the region
  \begin{equation}
    \tilde \Omega_{s,h_T} \triangleq \bigcup^{N_{s,h_T}}_{j=1}
    \Omega^{\square}_{s,h_T;j}, \quad \textrm{where} \ \ \Omega^{\square}_{s,h_T;j} = \left\{ \Omega^{\square}_{j} \; : \;  \x^0_j \in
    \Omega^{\square}_{j} \ \textrm{and} \ T(\x^0_j) \in \left[s - \frac{1}{2}h_T, s + \frac{1}{2}h_T \right] \right\},
    \label{eq:Omega_FEM}
  \end{equation}
  see Figure \ref{fig:Dom_Int}b, which consists of the quadrilateral
  finite elements $\Omega^{\square}_{s,h_T;j}$, $j = 1, \dots, N_{s,h_T}$, with
  the center points $\x^0_j$ satisfying the condition $T(\x^0_j) \in
  [s - (1/2) h_T, s + (1/2) h_T]$,

\item in view of \eqref{eq:fs}, expression \eqref{eq:grad_T_simple_LHS} is
  approximated with an area integral over the region contained between
  the level--set curves $\Gamma_{s-\frac{1}{2}h_T}$ and
  $\Gamma_{s+\frac{1}{2}h_T}$, which is in turn approximated by the
  FEM region $\tilde \Omega_{s,h_T}$ given by \eqref{eq:Omega_FEM};
  finally, the integral over this region is approximated using the
  standard 2D compound midpoint rule as
    \begin{equation}
      \label{eq:FEM_DomainInt}
      f(s)  \approx \frac{1}{h_T} \int_{s - \frac{1}{2}h_T}^{s + \frac{1}{2}h_T}
      \int_{\Omega} \delta (s(\x) - \zeta) g(\x) \, d\x \, d\zeta \approx
      \dfrac{h^2}{h_T} \sum_{j=1}^{N_{s,h_T}} g(\x^0_j).
    \end{equation}
\end{enumerate}

As regards the accuracy of this approach, we have the following
\begin{theorem}
  Formula \eqref{eq:FEM_DomainInt} is second order accurate with
  respect to the discretization of the space domain and first order
  accurate with respect to the discretization of the state domain,
  i.e.,
  \begin{equation}
    \label{eq:direct_frm_acc}
    f(s) = \int_{\Omega} \delta (T(\x) - s) g(\x) \ d\x =
    \dfrac{h^2}{h_T} \sum_{j=1}^{N_{s,h_T}} g(\x^0_j) + \O(h^2) + \O(h_T).
  \end{equation}
\label{thm:m3}
\end{theorem}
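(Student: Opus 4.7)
The plan is to decompose the total error into three contributions corresponding to the three successive approximations in the construction of \eqref{eq:FEM_DomainInt}, and to bound each one separately. The central identity is the coarea formula, which under assumption \eqref{eq:dt0} yields
\begin{equation*}
\int_{s - h_T/2}^{s + h_T/2} f(\zeta)\,d\zeta = \int_{\Omega_{s,h_T}} g(\x)\,d\x.
\end{equation*}
This lets me interpolate between $f(s)$ and the discrete sum $(h^2/h_T)\sum_j g(\x_j^0)$ through the intermediate quantities $(1/h_T)\int_{\Omega_{s,h_T}} g\,d\x$ and $(1/h_T)\int_{\tilde\Omega_{s,h_T}} g\,d\x$, so the overall error splits cleanly into a state-space averaging error (step 1 of the scheme), a geometric set-approximation error $\Omega_{s,h_T} \to \tilde \Omega_{s,h_T}$ (step 2), and a two-dimensional cubature error on $\tilde \Omega_{s,h_T}$ (step 3).

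For step 1, I would Taylor expand $f(\zeta)$ about $\zeta = s$ on the symmetric interval $[s - h_T/2, s + h_T/2]$: the smoothness of $T$, $\u$ and $\u^*$ inherited from the regularity assumed on the direct and adjoint solutions passes to $f$ (via \eqref{eq:dt0}, which keeps $|\bnabla T|$ away from zero on the level sets), and a standard midpoint-rule estimate produces an $O(h_T)$ defect, which accounts for the $O(h_T)$ term in \eqref{eq:direct_frm_acc}. For step 3, the compound 2D midpoint rule applied to the smooth function $g$ over the $N_{s,h_T}$ square cells composing $\tilde \Omega_{s,h_T}$ has a local error of $O(h^4)$ per cell; since $N_{s,h_T} = O(h_T/h^2)$ by construction, the aggregate quadrature error is $O(h^2 h_T)$, which after division by $h_T$ contributes the $O(h^2)$ term.

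Step 2, namely bounding $\int_{\Omega_{s,h_T}} g\,d\x - \int_{\tilde \Omega_{s,h_T}} g\,d\x$, is the delicate piece and in my view the main obstacle. The symmetric difference $\Omega_{s,h_T} \triangle \tilde \Omega_{s,h_T}$ is contained in the union of cells crossed by the two bounding level sets $\Gamma_{s \pm h_T/2}$; a crude perimeter-based bound gives its area as $O(h)$ and contributes $O(h/h_T)$ to the final rate, which is weaker than the claim. To reach the target I would exploit the symmetry built into the center-based selection rule of \eqref{eq:Omega_FEM}: because $T(\x_j^0)$ differs from the cell average of $T$ by $O(h^2)$ (by smoothness and assumption \eqref{eq:dt0}), the over-included and under-included boundary cells cancel to leading order, sharpening the symmetric-difference contribution by an extra factor of $h$ and bringing it within the $O(h^2) + O(h_T)$ envelope already produced by steps 1 and 3. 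Combining the three pieces then yields the stated result.
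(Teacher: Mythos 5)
Your three--way error split is exactly the paper's: the proof there also integrates $f$ over $[s-\tfrac12 h_T,\, s+\tfrac12 h_T]$ to turn the delta function into the characteristic function of $\Omega_{s,h_T}$ (your coarea identity, cf.~\eqref{eq:dom_int_int}), applies the 1D midpoint rule to produce the $\O(h_T)$ term, and then splits the spatial error into a domain--approximation error $E_1$ (from $\Omega_{s,h_T}\to\tilde\Omega_{s,h_T}$) and a compound--midpoint cubature error $E_2$. Where you differ is in the bookkeeping, and in both places your version is the stronger one. Your accounting of the cubature error as $\O(h^4)$ per cell times $N_{s,h_T}=\O(h_T/h^2)$ cells, hence $\O(h^2h_T)$ \emph{before} dividing by $h_T$, is what is actually required for the final division to deliver $\O(h^2)$; the paper simply states $E_2=\O(h^2)$, which after division by $h_T$ would formally read $\O(h^2/h_T)$. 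More importantly, for $E_1$ the paper applies the triangle inequality and bounds the error by $\max_{\x\in\Omega}|g(\x)|\,\bigl(|\tilde\Omega'_{s,h_T}|+|\tilde\Omega''_{s,h_T}|\bigr)$, asserting that this is $\O(h^2)$; but those are the cells crossed by the level curves $\Gamma_{s\pm h_T/2}$, of which there are $\O(1/h)$ with area $h^2$ each, so the quantity is only $\O(h)$ --- precisely the crude perimeter bound you reject. The signed cancellation between $\tilde\Omega'_{s,h_T;out}$ (over--included) and $\tilde\Omega''_{s,h_T;in}$ (under--included) that you propose to exploit is exactly what the paper's absolute--value step discards, so on this point your proposal diagnoses, rather than reproduces, the weak link of the published argument.

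That said, your step 2 remains a sketch, and the cancellation you invoke is not uniform in the geometry: if a level curve runs locally parallel to a grid line, the center--selection rule \eqref{eq:Omega_FEM} incurs a systematic one--sided bias of up to $h/2$ in the strip width along that stretch, no cancellation occurs, and a genuine $\O(h)$ contribution to $E_1$ survives. Hence the $\O(h^2)$ rate in \eqref{eq:direct_frm_acc} can only be expected for generic (curved, non--grid--aligned) level sets; to close your argument you would need either to quantify the cancellation under a transversality or curvature hypothesis on $\Gamma_{s\pm h_T/2}$, or to state the result with the weaker uniform rate. This caveat applies equally --- indeed more severely --- to the paper's own proof.
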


\begin{proof}
  We start by integrating both sides of \eqref{eq:grad_T_simple_LHS} over the
  interval $[T_{i-\frac{1}{2}}; T_{i+\frac{1}{2}}] = [s - \frac{1}{2}h_T;
  s + \frac{1}{2}h_T]$ obtaining
  \begin{equation}
    \int_{s - \frac{1}{2}h_T}^{s + \frac{1}{2}h_T} f(\zeta) \ d\zeta =
    \int_{\Omega} \left[ \int_{s - \frac{1}{2}h_T}^{s + \frac{1}{2}h_T}
    \delta (T(\x) - \zeta) \ d\zeta \right] g(\x) \ d\x =
    \int_{\Omega} \chi_{[s - \frac{1}{2}h_T, s + \frac{1}{2}h_T]}
    (T(\x)) \ g(\x) \ d\x,
    \label{eq:dom_int_int}
  \end{equation}
  where the characteristic function
  \begin{equation*}
    \chi_{[T', T'']} (T(\x)) = \left\{
    \begin{alignedat}{2}
    &1\quad & & \textrm{for} \ T(\x) \in [T', T''], \\
    &0& & \textrm{for} \ T(\x) \notin [T', T'']
    \end{alignedat}
    \right.
  \end{equation*}
  describes the subdomain $\Omega_{s, h_T}$ introduced earlier in
  \eqref{eq:Omega_cont}, cf.~Figure \ref{fig:Dom_Int}a.
  Now, using a second--order accurate midpoint rule for 1D integration, we
  can express the LHS of \eqref{eq:dom_int_int} as
  \begin{equation}
    \int_{s - \frac{1}{2}h_T}^{s + \frac{1}{2}h_T} f(\zeta) \ d\zeta =
    f \left( \dfrac{T_{i-\frac{1}{2}}+T_{i+\frac{1}{2}}}{2} \right) \cdot h_T
    + \O(h_T^2) = f(s) h_T + \O(h_T^2).
    \label{eq:dom_int_T}
  \end{equation}
  Approximation of the RHS in \eqref{eq:dom_int_int} takes place in two steps.
  In the first step we approximate the actual integration domain $\Omega_{s,h_T}$
  with the union of the finite elements $\tilde \Omega_{s,h_T}$,
  cf.~\eqref{eq:Omega_FEM}. In order to estimate the error
  \begin{equation*}
    E_1 \triangleq \left| \int_{\Omega_{s,h_T}} g(\x) \ d\x -
    \int_{\tilde \Omega_{s,h_T}} g(\x) \ d\x \right|
  \end{equation*}
  of this step, we divide the set of cells $\tilde \Omega_{s, h_T}$ into two
  subsets, see Figure \ref{fig:Dom_Int}b,
  \begin{equation*}
    \tilde \Omega_{s, h_T} = \tilde \Omega^*_{s, h_T} \cup \tilde \Omega'_{s, h_T},
  \end{equation*}
  where $\tilde \Omega^*_{s, h_T}$ consists of the cells with all 4 vertices
  $\{\x_k\}_{k=1}^4$ satisfying the condition $T(\x_k) \in
  [s - \frac{1}{2}h_T, s + \frac{1}{2}h_T]$. The subregion $\tilde \Omega'_{s, h_T}$,
  defined as the compliment of $\tilde \Omega^*_{s, h_T}$ in $\tilde \Omega_{s, h_T}$,
  represents the union of ``truncated'' cells, i.e., cells which have at least one node
  outside $\Omega_{s, h_T}$. This subregion is in turn further subdivided into two subsets,
  i.e.,
  \begin{equation*}
    \tilde \Omega'_{s, h_T} = \tilde \Omega'_{s, h_T; in} \cup \tilde \Omega'_{s, h_T; out},
  \end{equation*}
  where
  \begin{equation*}
    \tilde \Omega'_{s,h_T;in} \triangleq \left\{ \x \in \tilde \Omega'_{s,h_T}, \, T(\x) \in
    \left[s - \frac{1}{2}h_T, s + \frac{1}{2}h_T\right] \right\},
  \end{equation*}
  \begin{equation*}
    \tilde \Omega'_{s,h_T;out} \triangleq \left\{ \x \in \tilde \Omega'_{s,h_T}, \, T(\x) \notin
    \left[s - \frac{1}{2}h_T, s + \frac{1}{2}h_T\right] \right\}.
  \end{equation*}
  We have to define one more set $\tilde \Omega''_{s, h_T}$ which consists of the cells with
  at least one vertex $\{\x_k\}_{k=1}^4$ satisfying the condition $T(\x_k) \in
  [s - \frac{1}{2}h_T, s + \frac{1}{2}h_T]$, but whose center points $\x^0_j$ lie outside
  $\Omega_{s,h_T}$. We also further subdivide this set into two subsets
  \begin{equation*}
    \tilde \Omega''_{s, h_T} = \tilde \Omega''_{s, h_T; in} \cup \tilde \Omega''_{s, h_T; out},
  \end{equation*}
  where
  \begin{equation*}
    \tilde \Omega''_{s,h_T;in} \triangleq \left\{ \x \in \tilde \Omega''_{s,h_T}, \, T(\x) \in
    \left[s - \frac{1}{2}h_T, s + \frac{1}{2}h_T\right] \right\},
  \end{equation*}
  \begin{equation*}
    \tilde \Omega''_{s,h_T;out} \triangleq \left\{ \x \in \tilde \Omega''_{s,h_T}, \, T(\x) \notin
    \left[s - \frac{1}{2}h_T, s + \frac{1}{2}h_T\right] \right\}.
  \end{equation*}
  We thus have
  \begin{equation}
    \int_{\Omega_{s,h_T}} g(\x) \ d\x = \int_{\tilde \Omega^*_{s,h_T}} g(\x) \ d\x +
    \int_{\tilde \Omega'_{s,h_T;in}} g(\x) \ d\x + \int_{\tilde \Omega''_{s,h_T;in}} g(\x) \ d\x,
    \label{eq:int_exact}
  \end{equation}
  \begin{equation}
    \begin{aligned}
    \int_{\tilde \Omega_{s,h_T}} g(\x) \ d\x &= \int_{\tilde \Omega^*_{s,h_T}} g(\x) \ d\x +
    \int_{\tilde \Omega'_{s,h_T}} g(\x) \ d\x \\
    &= \int_{\tilde \Omega^*_{s,h_T}} g(\x) \ d\x +
    \int_{\tilde \Omega'_{s,h_T;in}} g(\x) \ d\x + \int_{\tilde \Omega'_{s,h_T;out}} g(\x) \ d\x,
    \end{aligned}
    \label{eq:int_approx}
  \end{equation}
  so that the domain approximation error can be estimated as follows
  \begin{equation}
    \begin{aligned}
    E_1 = \left| \int_{\tilde \Omega'_{s,h_T;out}} g(\x) \ d\x -
    \int_{\tilde \Omega''_{s,h_T;in}} g(\x) \ d\x \right| &\leq
    \left| \int_{\tilde \Omega'_{s,h_T;out}} g(\x) \ d\x \right| +
    \left| \int_{\tilde \Omega''_{s,h_T;in}} g(\x) \ d\x \right| \\
    &\leq \max_{\x \in \Omega} |g(\x)| \
    \left( \left| \tilde \Omega'_{s,h_T} \right| + \left| \tilde \Omega''_{s,h_T} \right|
    \right) = \O(h^2),
    \end{aligned}
  \end{equation}
  where $| \Omega | \triangleq \textrm{meas} \ \Omega$. The second
  error in the approximation of the RHS of \eqref{eq:dom_int_int} is
  related to the accuracy of the quadrature applied to $\int_{\tilde
    \Omega_{s,h_T}} g(\x) \ d\x$ and for the 2D compound midpoint rule
  is $E_2 = \O(h^2)$, so that we obtain
  \begin{equation}
    \left| \int_{\tilde{\Omega}_{s,h_T}} g(\x) \, d\x - \sum_{j=1}^{N_{s,h_T}} g(\x^0_j) \, h^2
    \right| = \O(h^2).
    \label{eq:dom_int_x}
  \end{equation}
  Comparing \eqref{eq:FEM_DomainInt} with \eqref{eq:dom_int_T} and
  dividing both sides into $h_T$ we finally obtain
  \eqref{eq:direct_frm_acc} which completes the proof.
\end{proof}

\begin{figure}
\begin{center}
\mbox{
\subfigure[]{
\includegraphics[width=0.45\textwidth]{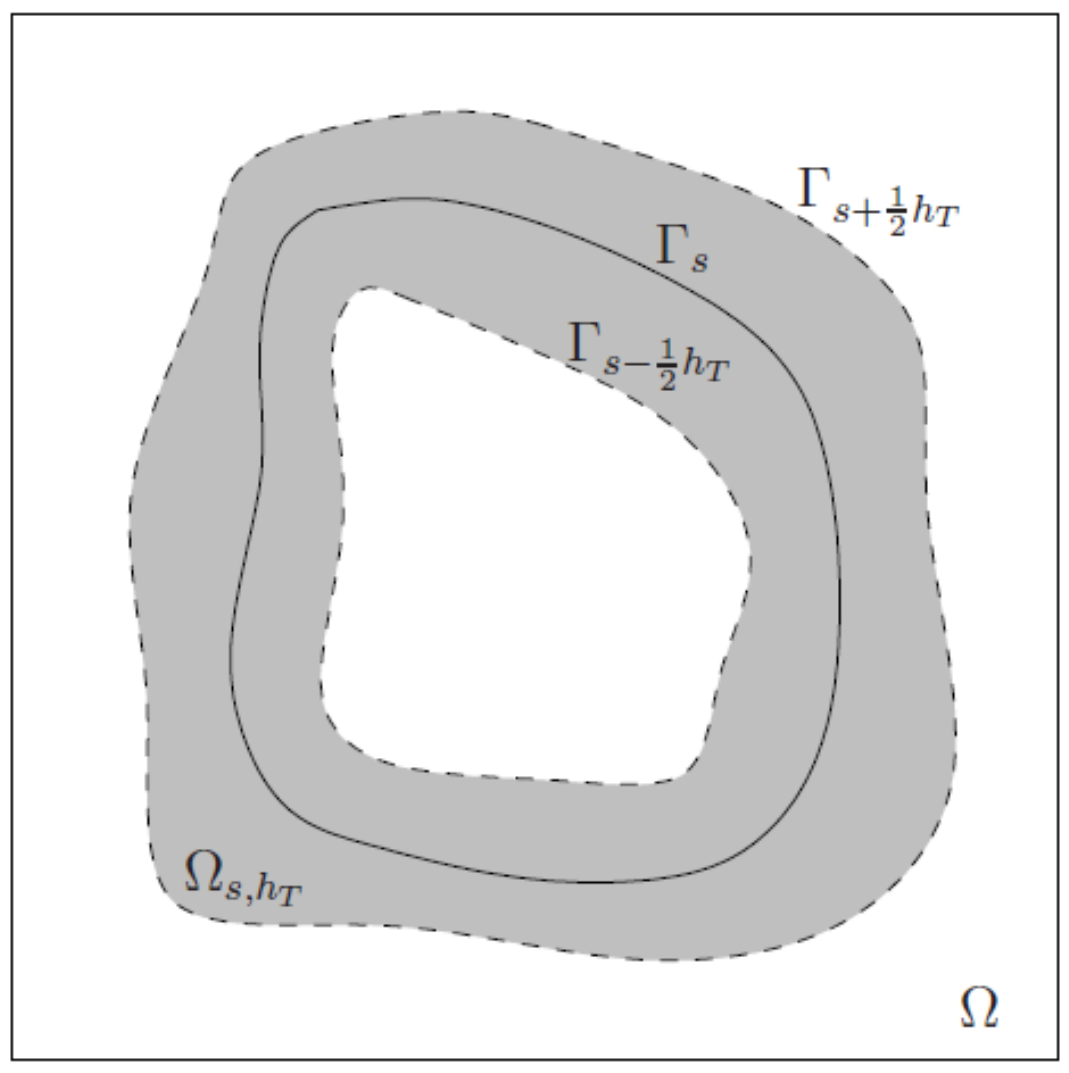}}
\subfigure[]{
\includegraphics[width=0.45\textwidth]{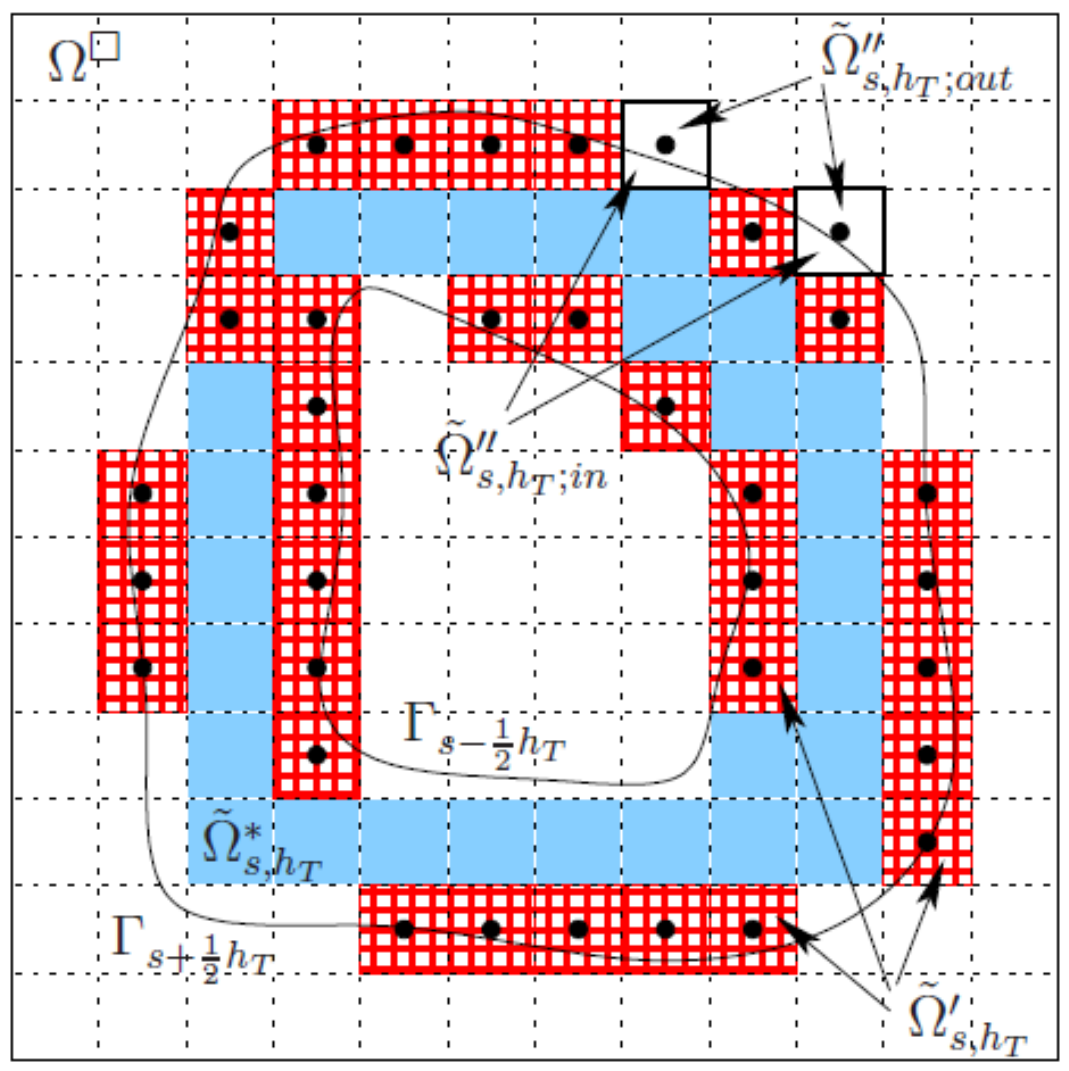}}}
\end{center}
\caption{Illustration of approach \#3 where line integral
  \eqref{eq:grad_T_simple_LHS} is approximated with an area integral
  (see Section \ref{sec:domain_int}): (a) region $\Omega_{s, h_T}$
  which lies between the two level--set curves
  $\Gamma_{s-\frac{1}{2}h_T}$ and $\Gamma_{s+\frac{1}{2}h_T}$ and (b)
  its approximation with the region $\tilde \Omega_{s, h_T} = \tilde
  \Omega^*_{s, h_T} \cup \tilde \Omega'_{s, h_T}$, where checked cells
  represent $\tilde \Omega'_{s, h_T}$ and shaded cells represent
  $\tilde \Omega^*_{s, h_T}$.  Figure (b) also shows a part of the
  region $\tilde \Omega''_{s, h_T} = \tilde \Omega''_{s, h_T; in} \cup
  \tilde \Omega''_{s, h_T; out}$ represented by 2 elements in the top
  right corner.}
\label{fig:Dom_Int}
\end{figure}
So far, we have considered the discretizations of the physical and state spaces,
$\Omega$ and $\L$, as independent. We remark that using the relationship
\begin{equation}
  \min_{\x \in \Omega} | \bnabla T(\x) | \cdot h \leq h_T \leq
  \max_{\x \in \Omega} | \bnabla T(\x) | \cdot h
\end{equation}
one could link the corresponding discretization parameters $h$ and $h_T$ to
each other.

\section{Computational Results}
\label{sec:results}

\subsection{Comparison of Different Approaches to Gradient Evaluation}
\label{sec:grads_approaches}

In this Section we discuss the accuracy and efficiency of the three
methods for evaluation of expression \eqref{eq:grad_T_simple_LHS}
presented in Sections \ref{sec:line_int}, \ref{sec:Smereka_int} and
\ref{sec:domain_int}. In order to assess their utility for the
parameter reconstruction problem studied in this work, we will
consider the following three test cases
\begin{enumerate}
\renewcommand{\labelenumi}{(\roman{enumi})}
\item single (fixed) value of $s$ with $\phi(s,\x)$ and $g(\x)$ given
  analytically,
\item parameter $s$ varying over a finite range with $\phi(s,\x)$ and $g(\x)$
  given analytically,
\item parameter $s$ varying over a finite range with $\phi(s,\x)$ and $g(\x)$
  given in terms of solutions of the direct and adjoint problem.
\end{enumerate}
Tests (ii) and (iii) with $s$ spanning the entire interval $\L$ are
particularly relevant for the present reconstruction problem, as they
help us assess the accuracy of the cost functional gradients over
their entire domains of definition, including the values of $s$ for
which the level sets $\Gamma_s$ intersect the domain boundary
$\partial \Omega$.  Results of tests (i)--(iii) are presented below.

\subsubsection{Tests for a Single Value of $s$ with $\phi(s,\x)$
and $g(\x)$ Given Analytically}
\label{sec:test_1}

Here we employ our three methods to compute numerically the value of a line
integral over the circle $x^2 + y^2 = 1$
\begin{enumerate}
\item[(a)] in domain $\Omega_1 = [-2, 2]^2$ which contains the entire curve
  \begin{equation}
    \label{eq:test1a_int}
    I_{ex,1} = \int_{x^2 + y^2 = 1} (3x^2 - y^2) \ d \sigma = 2 \pi
  \end{equation}
(this test problem is actually borrowed from \cite{Smereka2006}),

\item[(b)] and in domain $\Omega_2 = [0, 2]^2$ which contains only a part of the curve
  in the first quadrant
  \begin{equation}
    \label{eq:test1b_int}
    I_{ex,2} = \int_{x^2 + y^2 = 1, \, x,y > 0} (3x^2 - y^2) \ d \sigma = \frac{\pi}{2}.
  \end{equation}
\end{enumerate}
The main difference between test cases (a) and (b) is that while in
(a) the contour is entirely contained in the domain $\Omega_1$, it
intersects the domain boundary $\partial \Omega_2$ in case (b). As
shown in Figure \ref{fig:test1}, methods \#1 (line integration) and
\#3 (area integration) in both cases show the expected accuracy of $\O(h^2)$, where
$h = \Delta x = \Delta y = 2^{-(3+i)}$, $i = 1 \dots 6$, while method
\#2 (delta function approximation) is $\O(h^{3/2})$ accurate in case
(a) and only $\O(h^1)$ accurate in case (b). We also add that the line
integration method exhibits the smallest constant prefactor
characterizing the error.

\begin{figure}
\begin{center}
\mbox{
\subfigure[]{\includegraphics[width=0.5\textwidth]{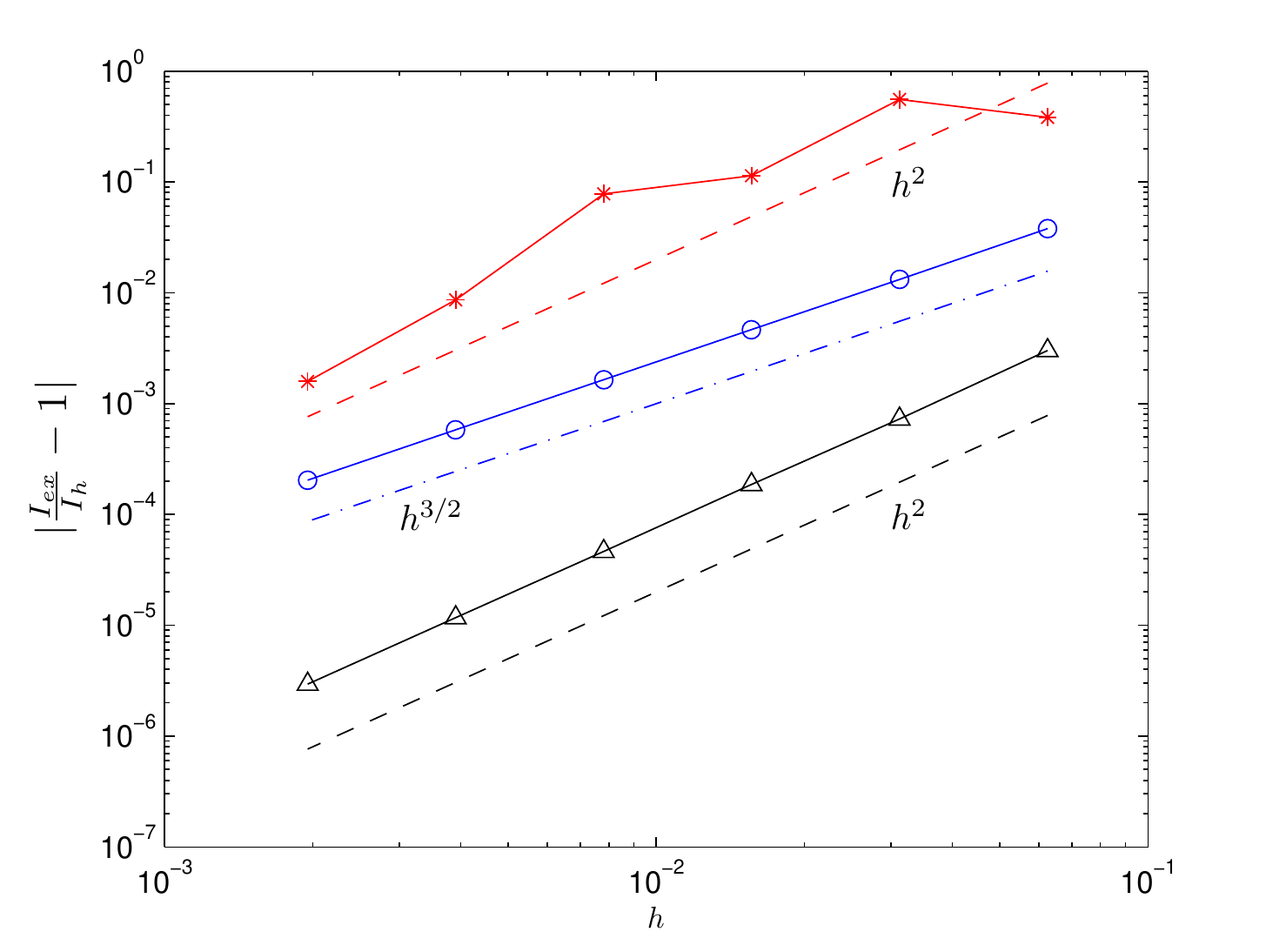}} \quad
%\subfigure[]{\includegraphics[width=0.455\textwidth]{Figs/test_analyt_full}} \quad
\subfigure[]{\includegraphics[width=0.5\textwidth]{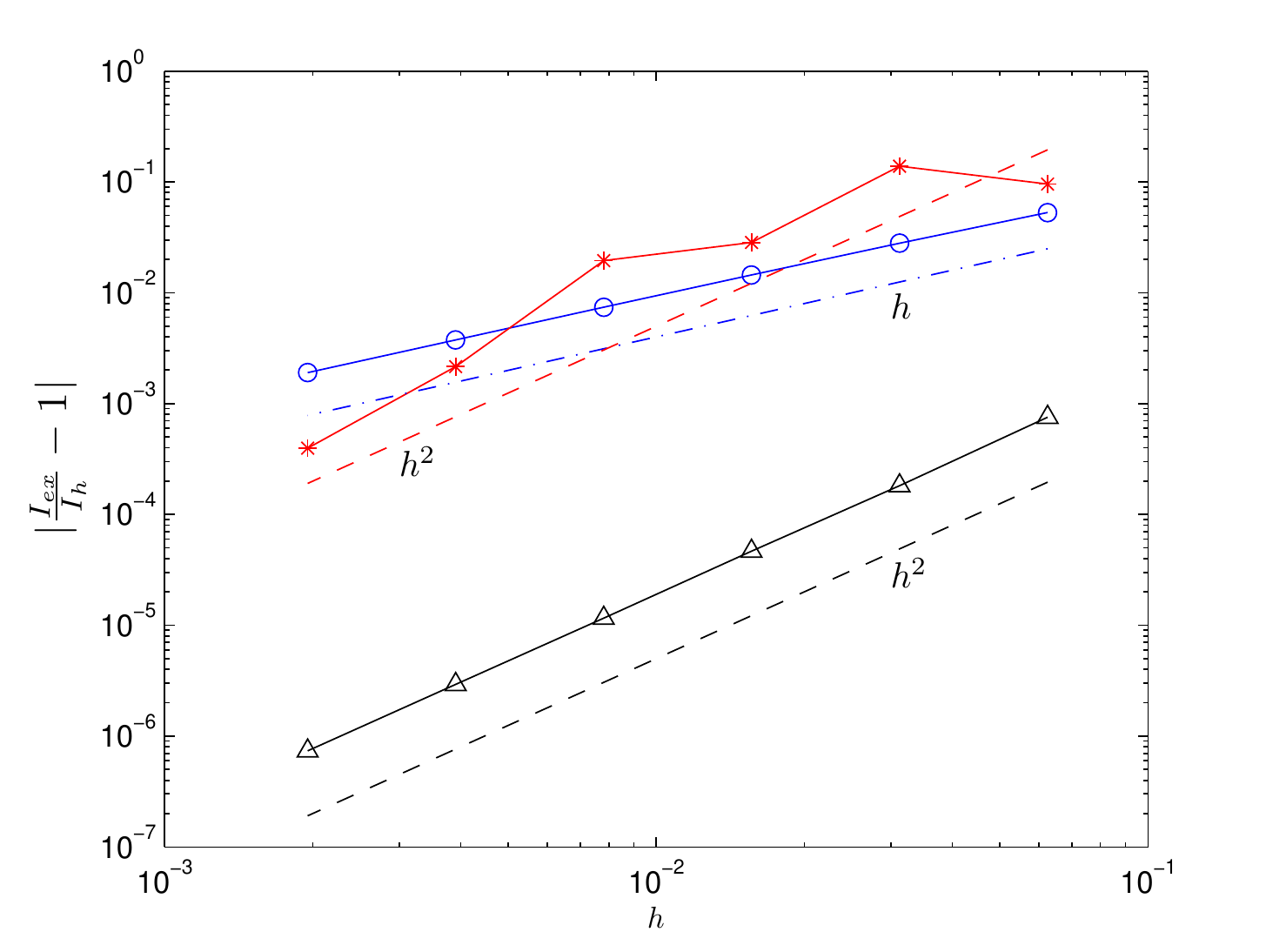}}}
\caption{Relative error $| \frac{I_{ex,i}}{I_h} - 1 |$, $i=1,2$, versus
  discretization step $h = \Delta x = \Delta y$ in the numerical
  approximation $I_h$ of (a) line integral \eqref{eq:test1a_int} and (b)
  line integral \eqref{eq:test1b_int}, see Section
  \ref{sec:test_1}. Triangles represent the line integration approach
  (method \#1), circles represent the results obtained using the delta
  function approximation (method\ \#2), whereas asterisks show the
  data for the area integration approach (method \#3).}
\label{fig:test1}
\end{center}
\end{figure}

\subsubsection{Tests for $s$ Varying Over a Finite Range with $\phi(s,\x)$ and $g(\x)$
  Given Analytically}
\label{sec:test_2}

In order to analyze this case we will introduce a new diagnostic quantity.
We begin with the integral transform formula \eqref{eq:change_vars} applied
to some perturbation $\mu'(T(\x))$
\begin{equation}
\mu'(T(\x)) = \int_{-\infty}^{+\infty} \delta(\phi(s,\x)) \mu'(s) \, ds,
\label{eq:perturb_transf}
\end{equation}
where $\phi(s,\x) = T(\x) - s$. Multiplying both sides of
\eqref{eq:perturb_transf} by $g(\x)$, integrating over the domain
$\Omega$ and changing the order of integration we obtain the following
useful identity
\begin{equation}
\label{eq:test2_int}
\int_{\Omega} \mu'(T(\x)) g(\x) \, d\x =
\int_{-\infty}^{+\infty} f(s) \mu'(s) \, ds
\end{equation}
with $f(s)$ defined in \eqref{eq:grad_T_simple_LHS}, where the RHS has
the structure of the Riesz identity for the G\^ateaux differential of
the cost functional, cf.~\eqref{eq:dJ4}, whereas the LHS is a simple
area integral which can be easily evaluated using high--accuracy
quadratures. Given the formal similarity of the RHS of
\eqref{eq:test2_int} and the Riesz formula \eqref{eq:dJ4}, this test
is quite relevant for the optimization problem we are interested in
here. We will thus use our three methods to evaluate the RHS of \eqref{eq:test2_int} and
compared it to the LHS, which is evaluated with high precision on a
refined grid in $\Omega$, so that it can be considered ``exact''. Our
tests are based on the following data
\begin{itemize}
\item spatial domain $\Omega = [0,1]^2$ discretized with $h = \Delta x =
  \Delta y = 2^{-(4+i)}$, $i = 1 \dots 7$,

\item state domain $\L = [T_a, T_b]$, where $T_a = 100$, $T_b = 700$,
discretized using $N_T = 200, 1000, 10000$  points for methods \#2 and \#3 and
$N_T = 200, 2000, 20000$ points for method \#1; $h_T = \dfrac{T_b - T_a}{N_T}$,

\item $T(\x) = 100(x^2+y^2)+300$, \, $g(\x) = \cos(x)+3\sin(2y-1)$, $\x \in \Omega$,

\item perturbations used
    $\mu'_1(s) = \exp(-\frac{s}{1000})$,
    $\mu'_2(s) = \frac{10}{s^2}$ and
    $\mu'_3(s) = -\frac{s^2}{90000}+\frac{2s}{225}+\frac{2}{9}$,
    $s \in [T_a, T_b]$.
\end{itemize}

\begin{figure}
\begin{center}
\mbox{
\subfigure[]{\includegraphics[width=0.33\textwidth]{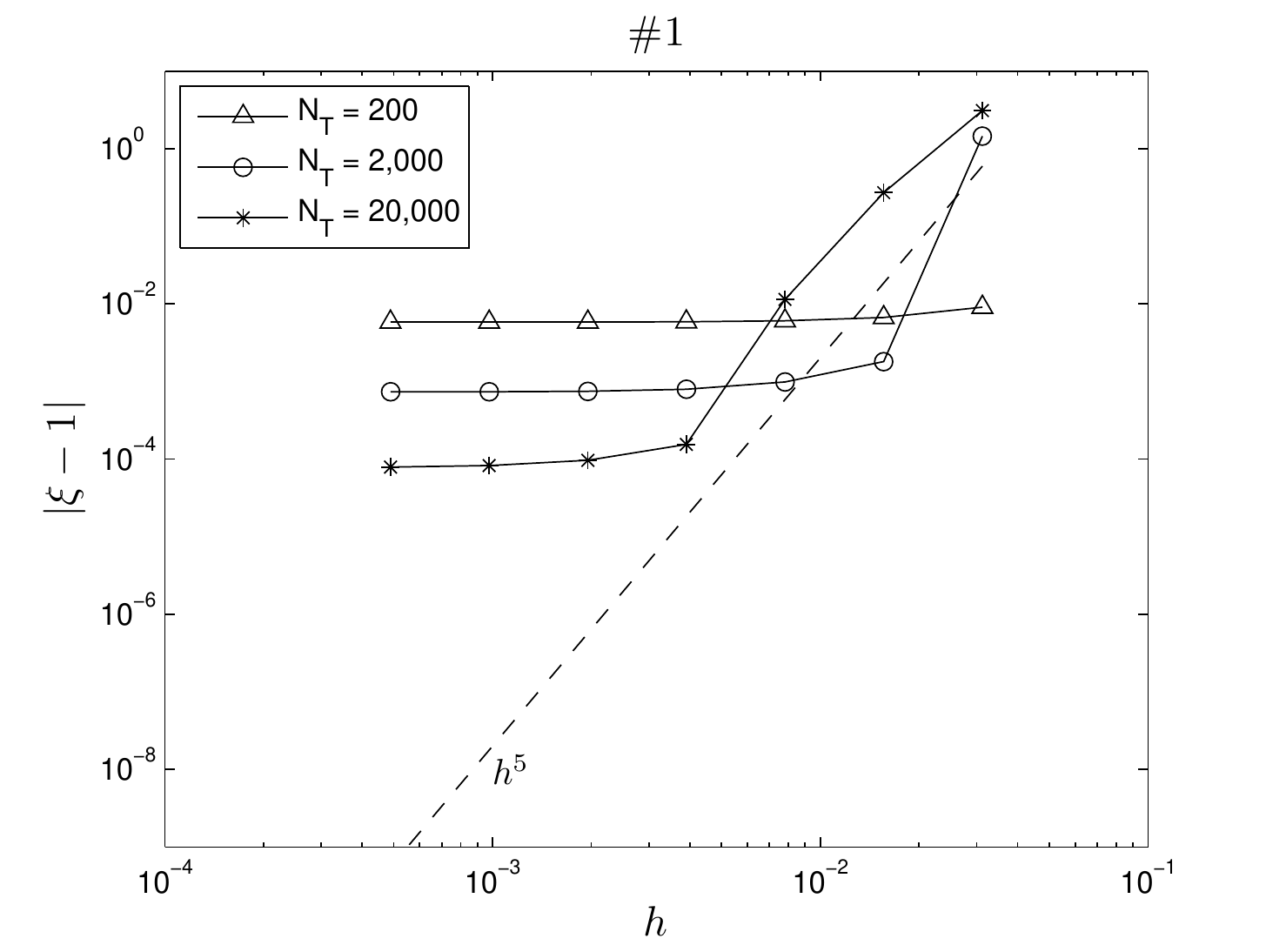}}
\subfigure[]{\includegraphics[width=0.33\textwidth]{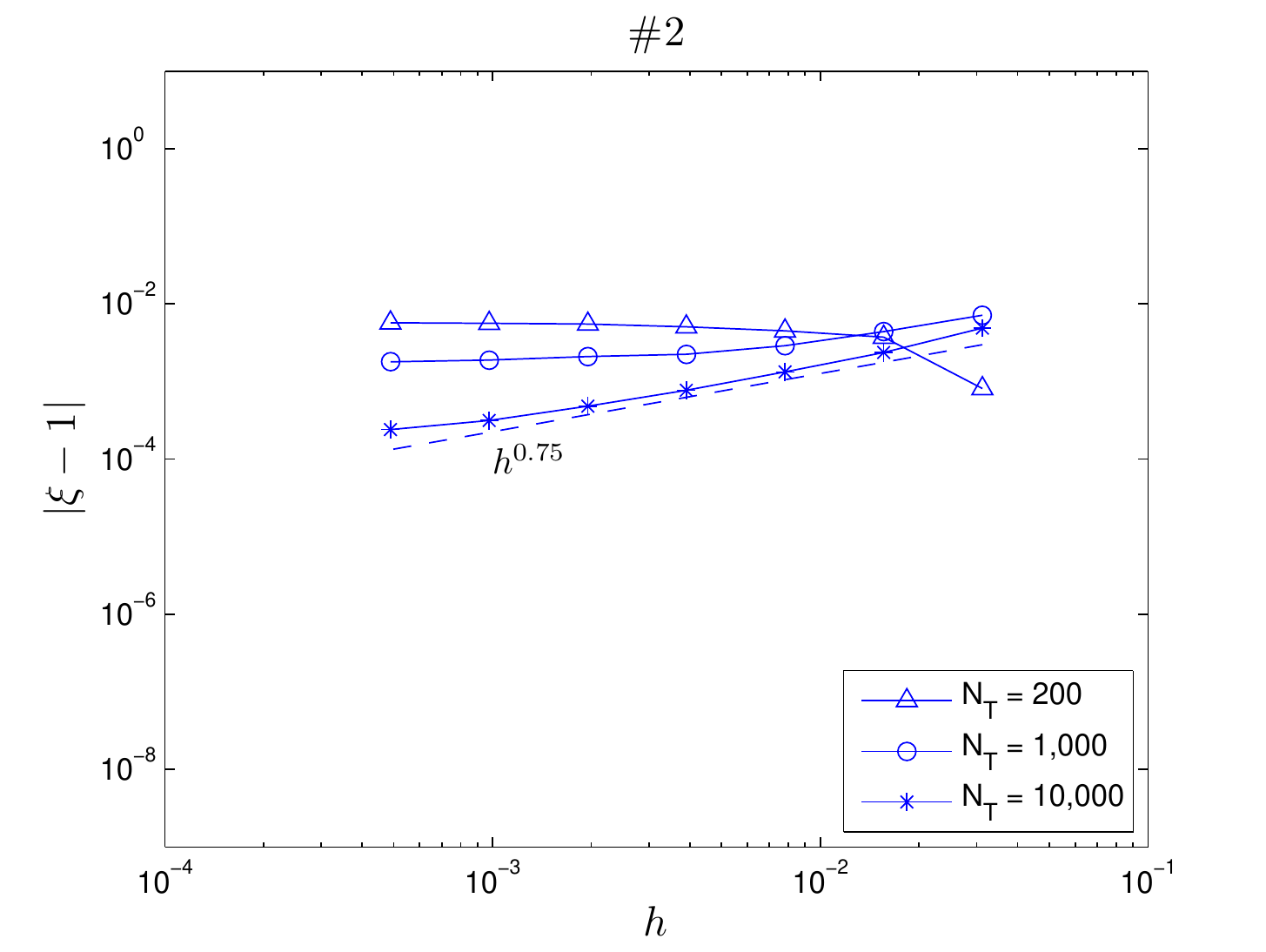}}
\subfigure[]{\includegraphics[width=0.33\textwidth]{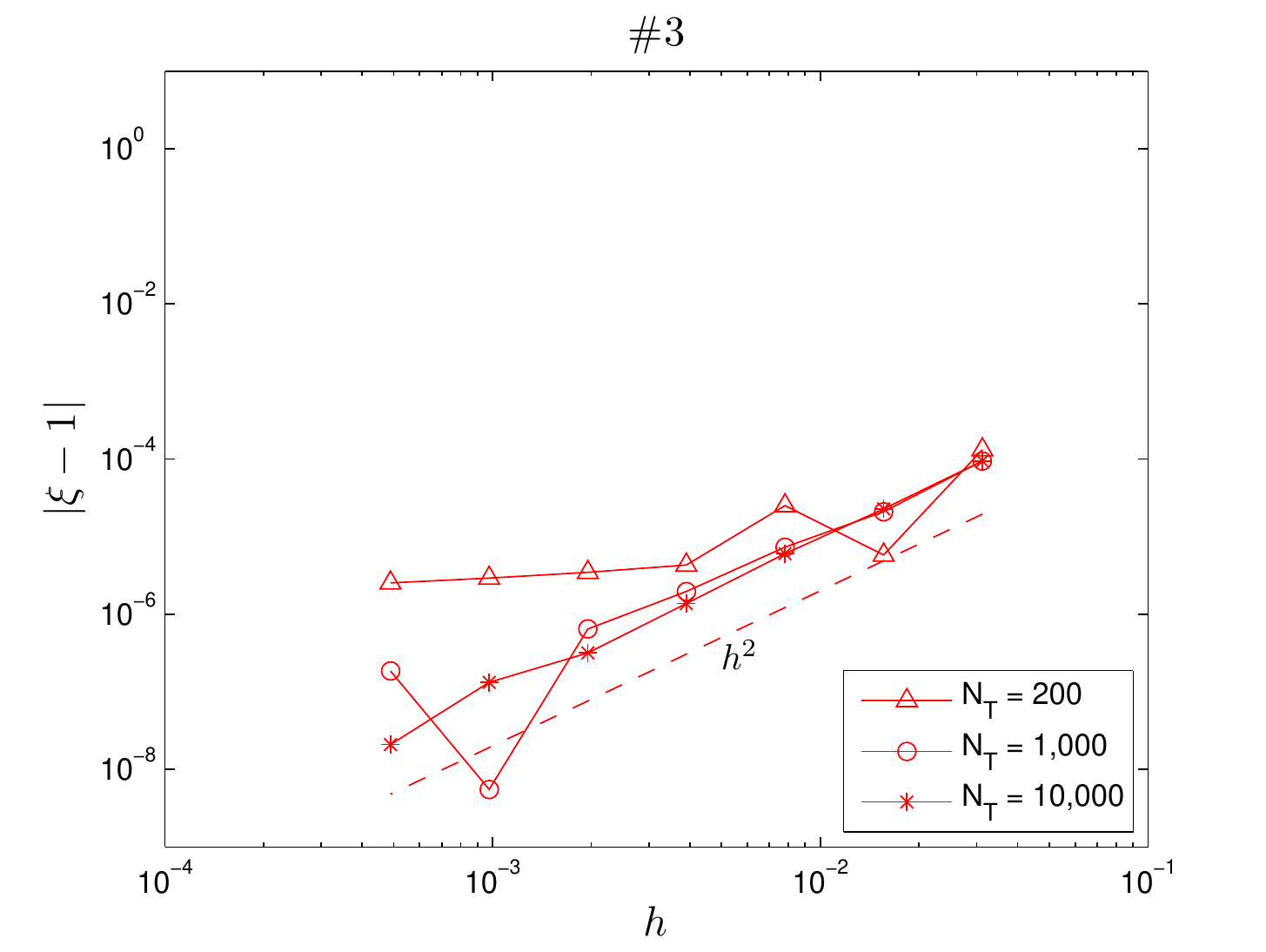}}}
\mbox{
\subfigure[]{\includegraphics[width=0.33\textwidth]{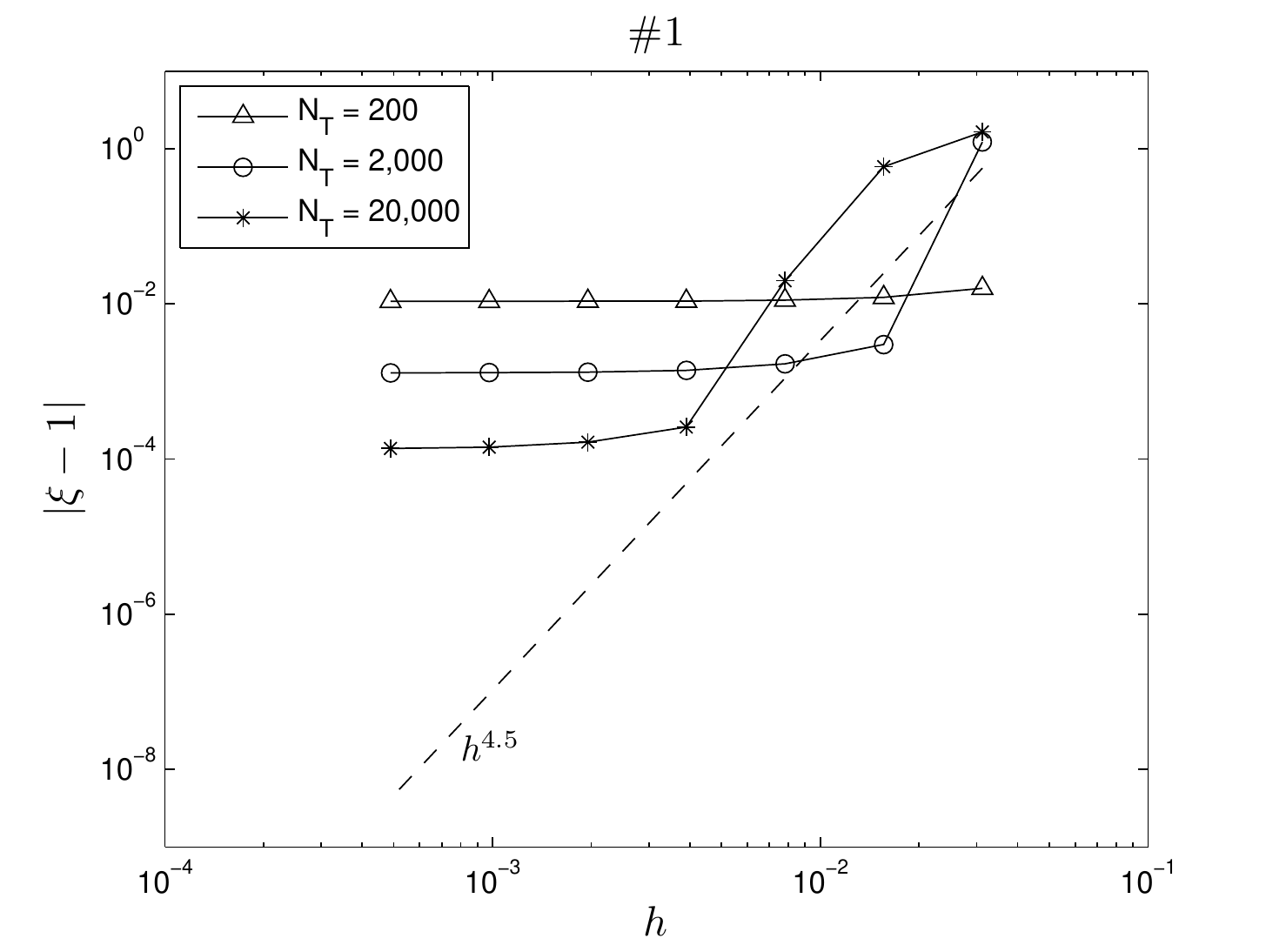}}
\subfigure[]{\includegraphics[width=0.33\textwidth]{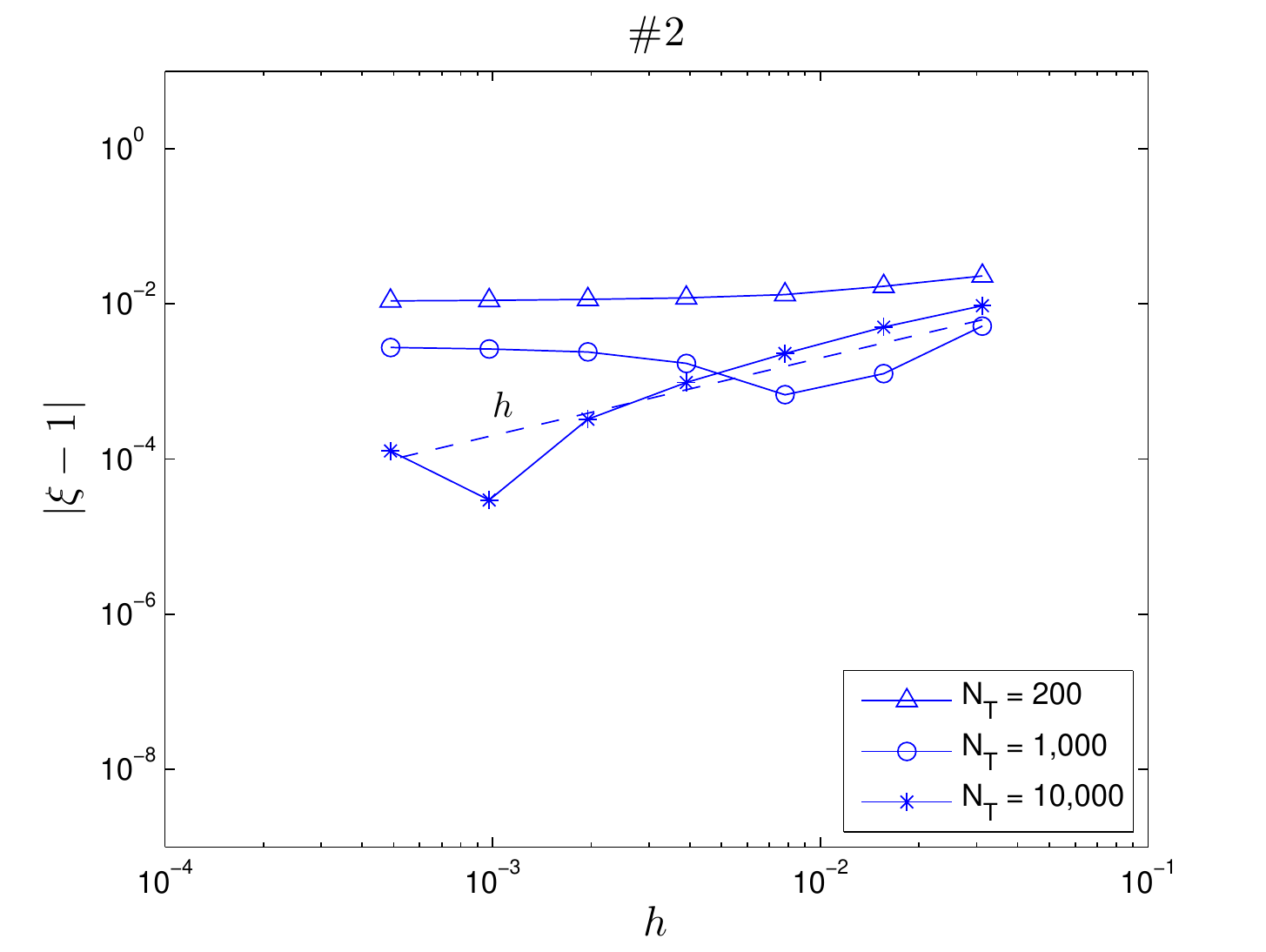}}
\subfigure[]{\includegraphics[width=0.33\textwidth]{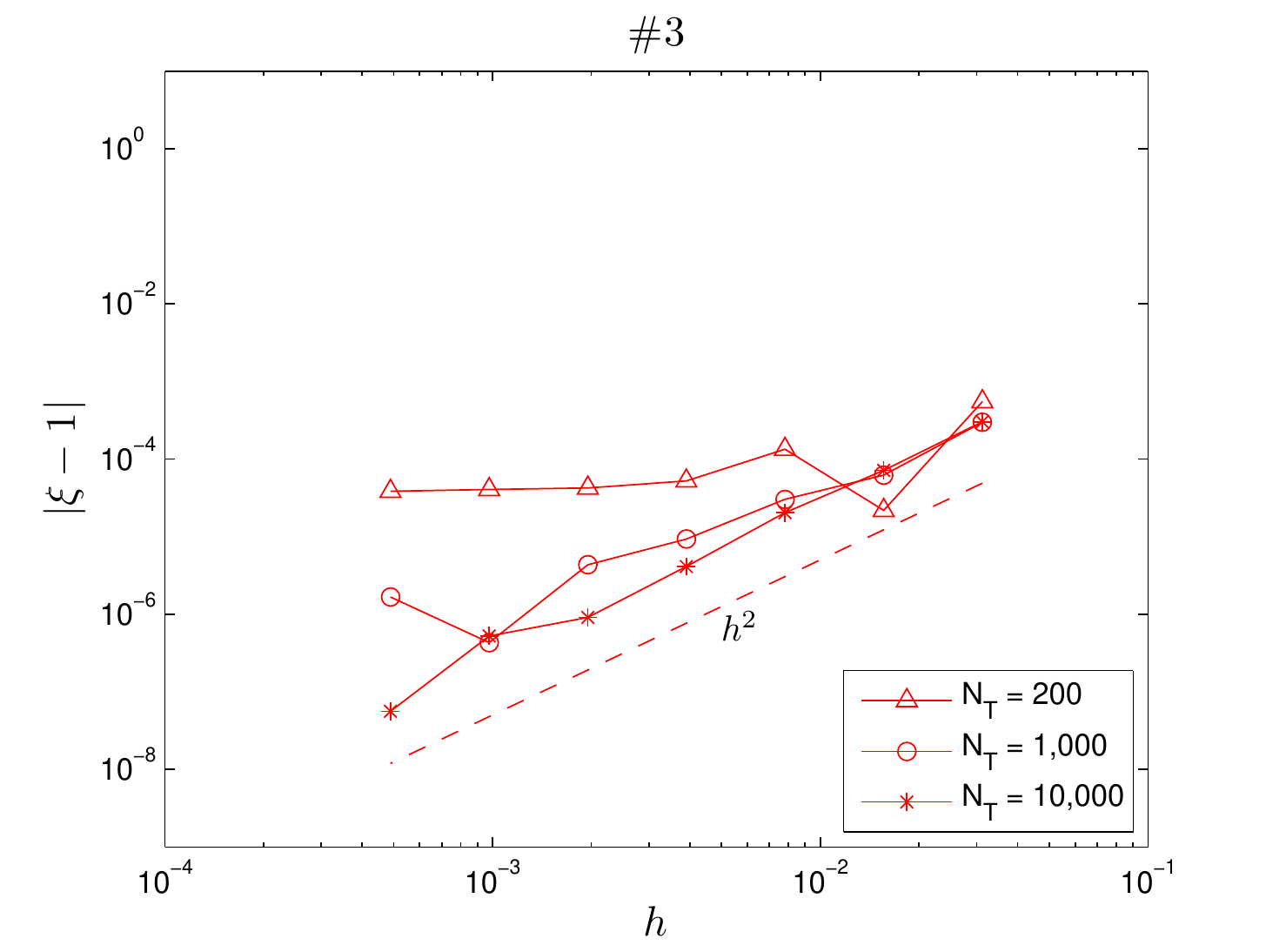}}}
\mbox{
\subfigure[]{\includegraphics[width=0.33\textwidth]{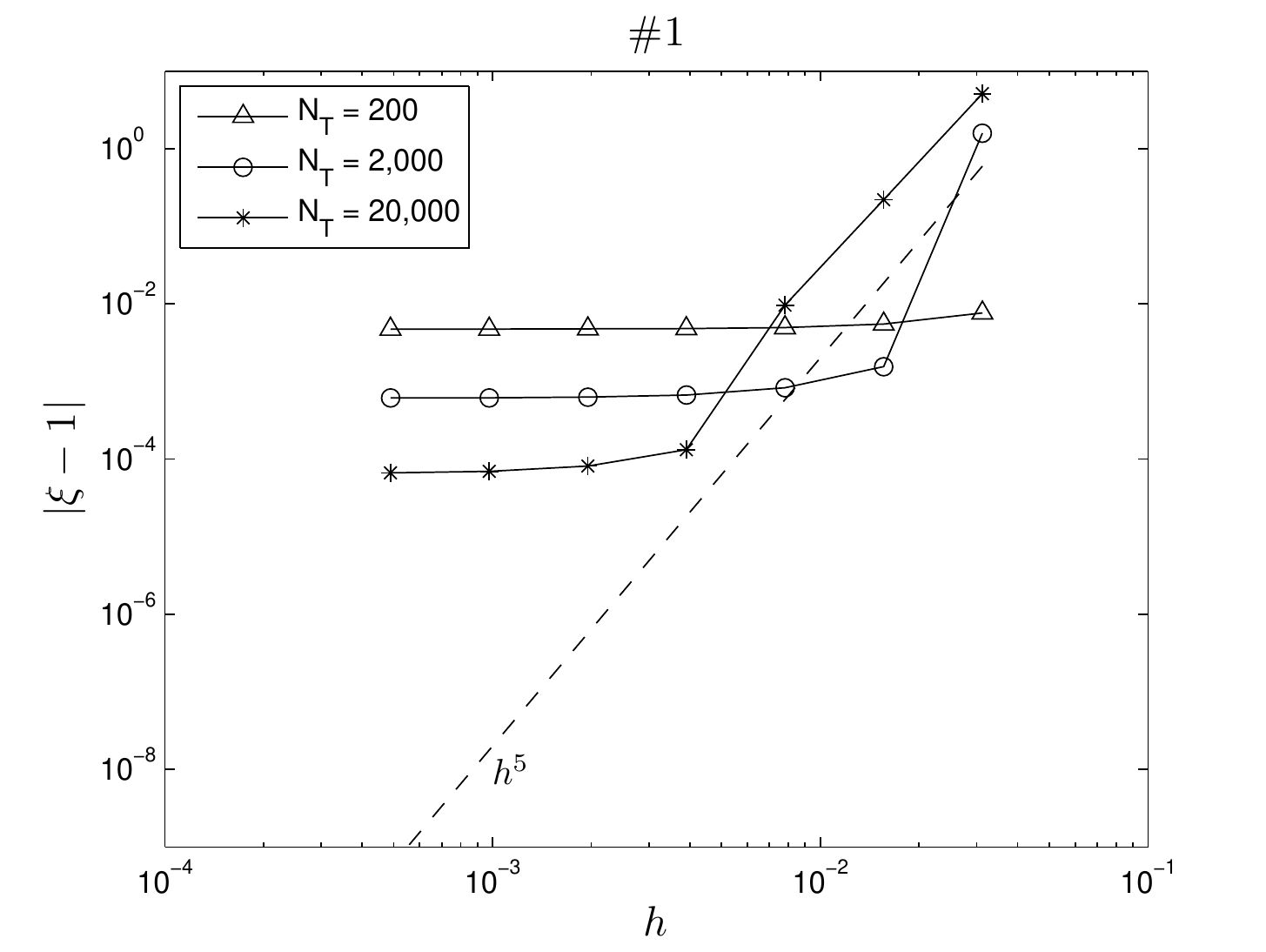}}
\subfigure[]{\includegraphics[width=0.33\textwidth]{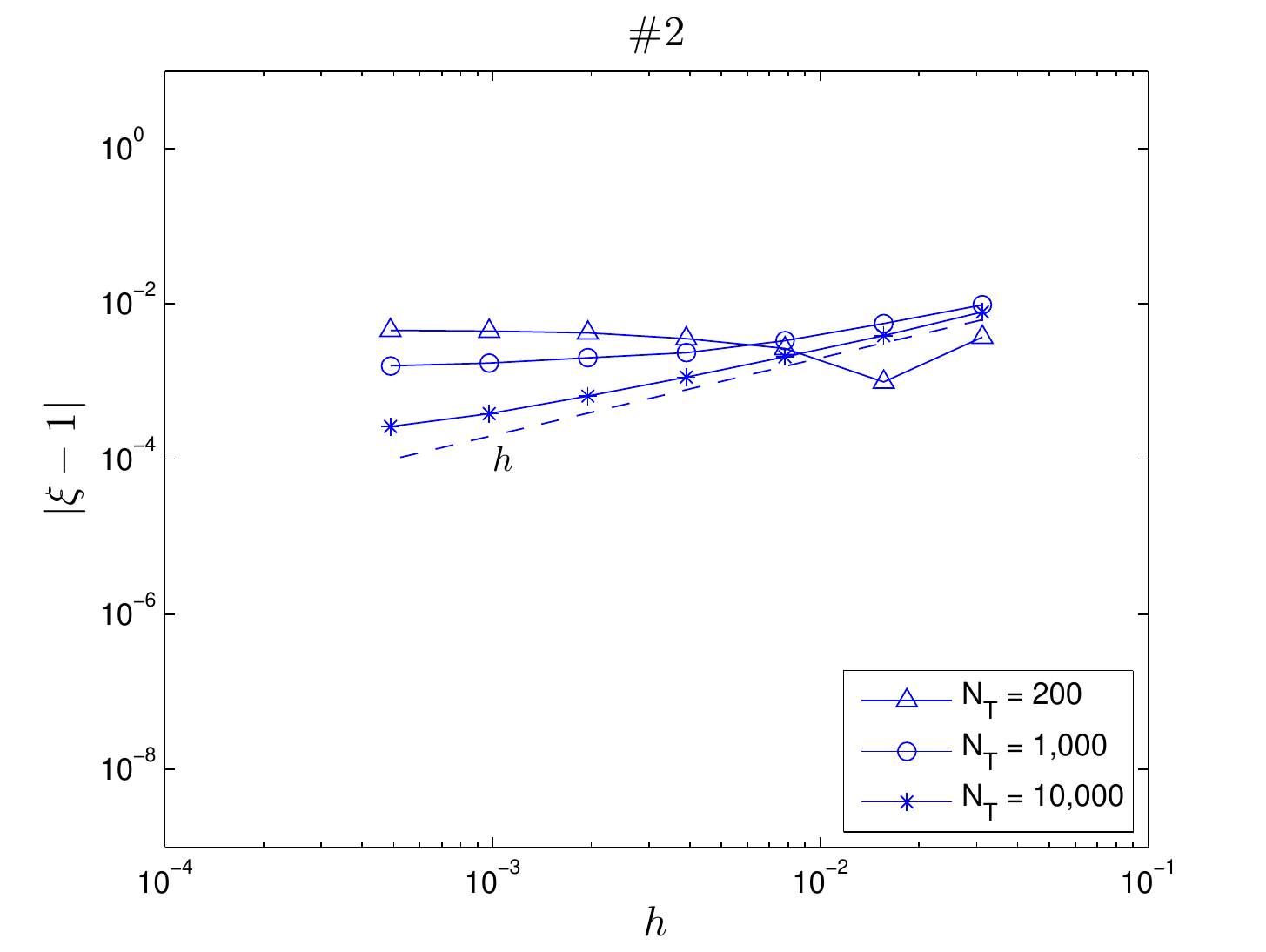}}
\subfigure[]{\includegraphics[width=0.33\textwidth]{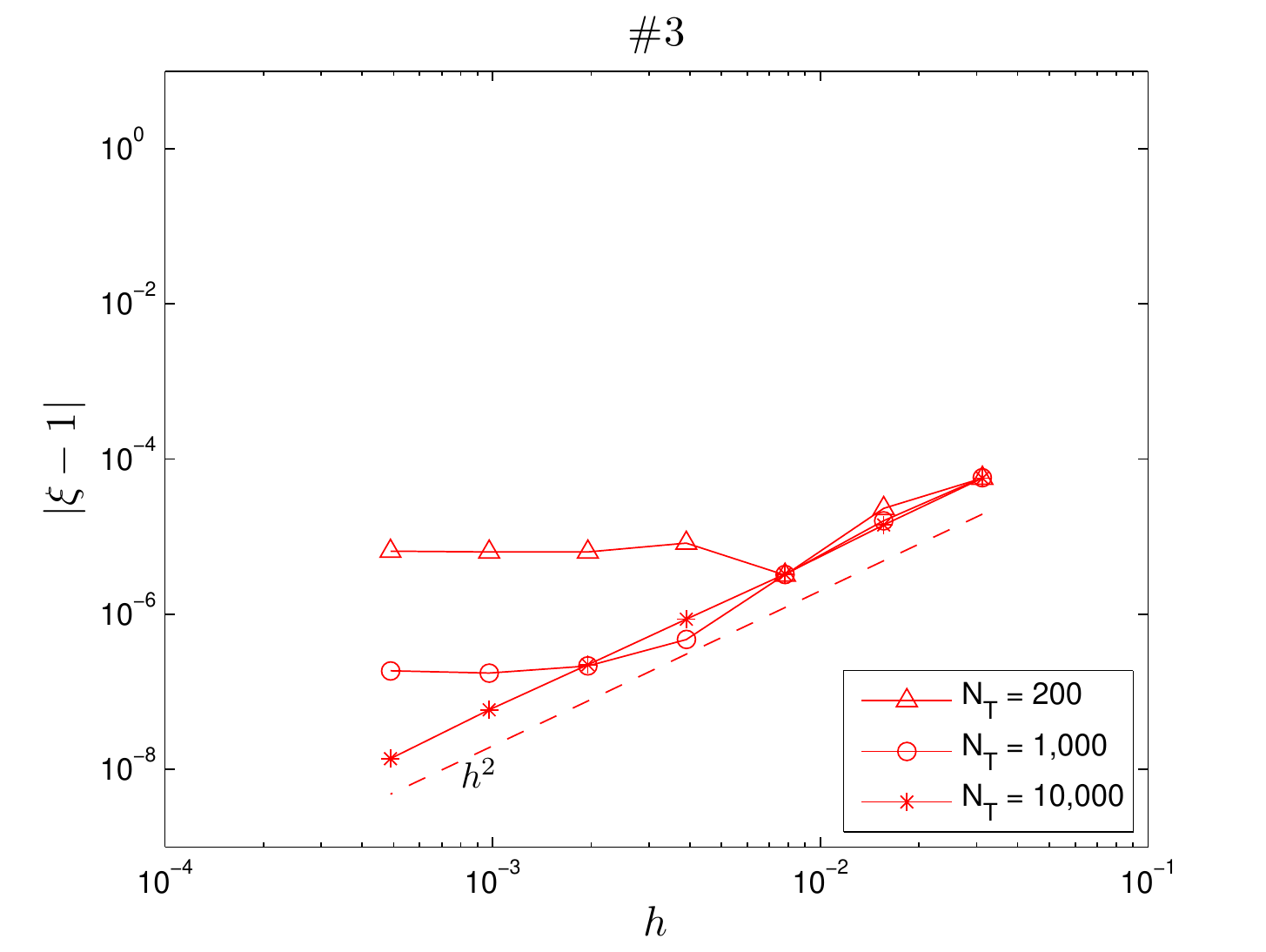}}}
\caption{Relative error $| \xi - 1 |$, where $\xi = \frac{\textrm{LHS of
  \eqref{eq:test2_int}}}{\textrm{RHS of \eqref{eq:test2_int}}}$, versus discretization
  step $h = \Delta x = \Delta y$ in approximating the RHS in \eqref{eq:test2_int}.
  The first, second and third row of figures show the results for $\mu'_1$, $\mu'_2$ and
  $\mu'_3$, respectively, while the figures in the first, second and third column
  represent line integration (\#1), delta function approximation (\#2) and
  area integration (\#3) methods, respectively.}
\label{fig:test2a}
\end{center}
\end{figure}

\begin{figure}
\begin{center}
\mbox{
\subfigure[]{\includegraphics[width=0.33\textwidth]{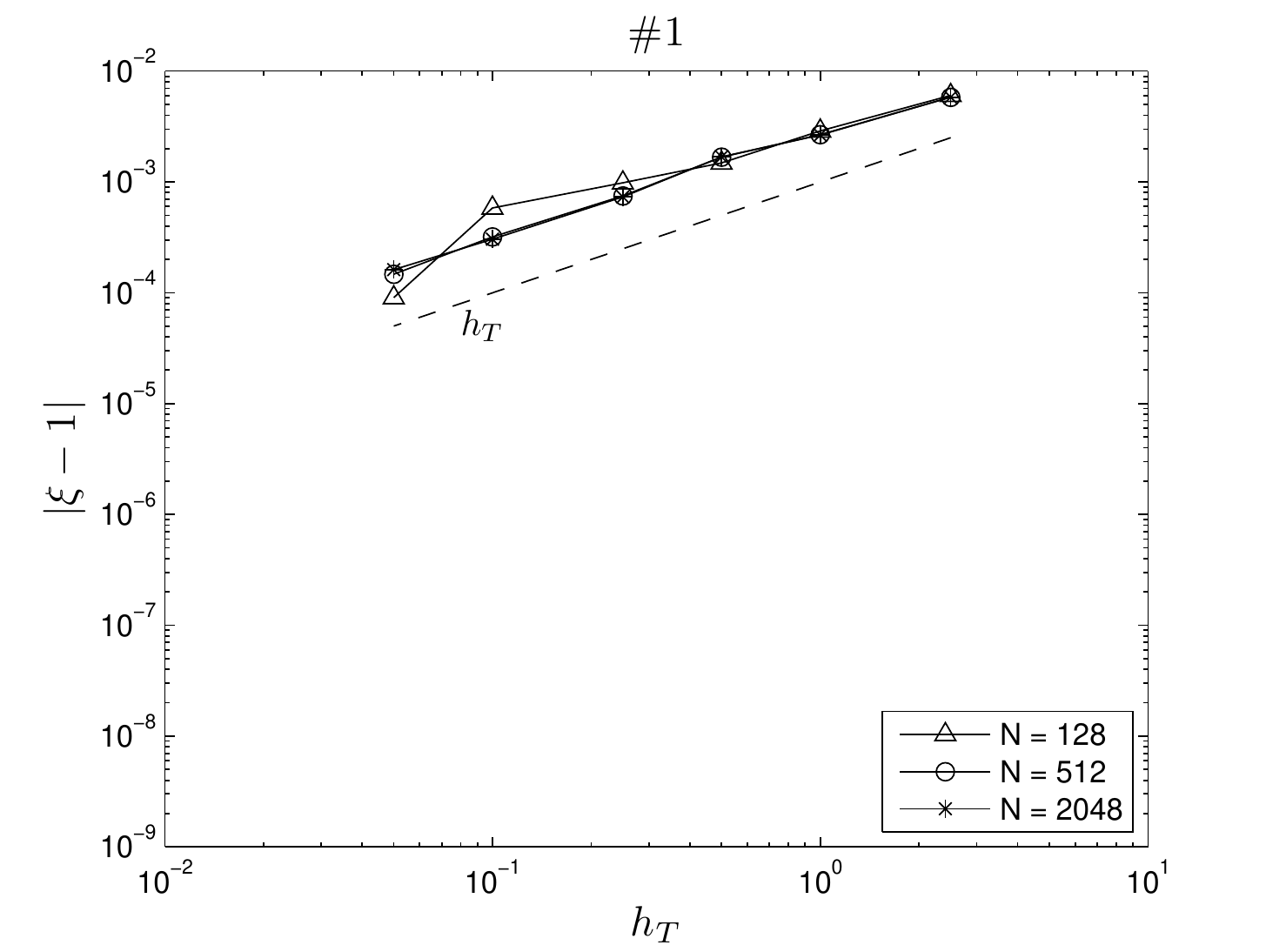}}
\subfigure[]{\includegraphics[width=0.33\textwidth]{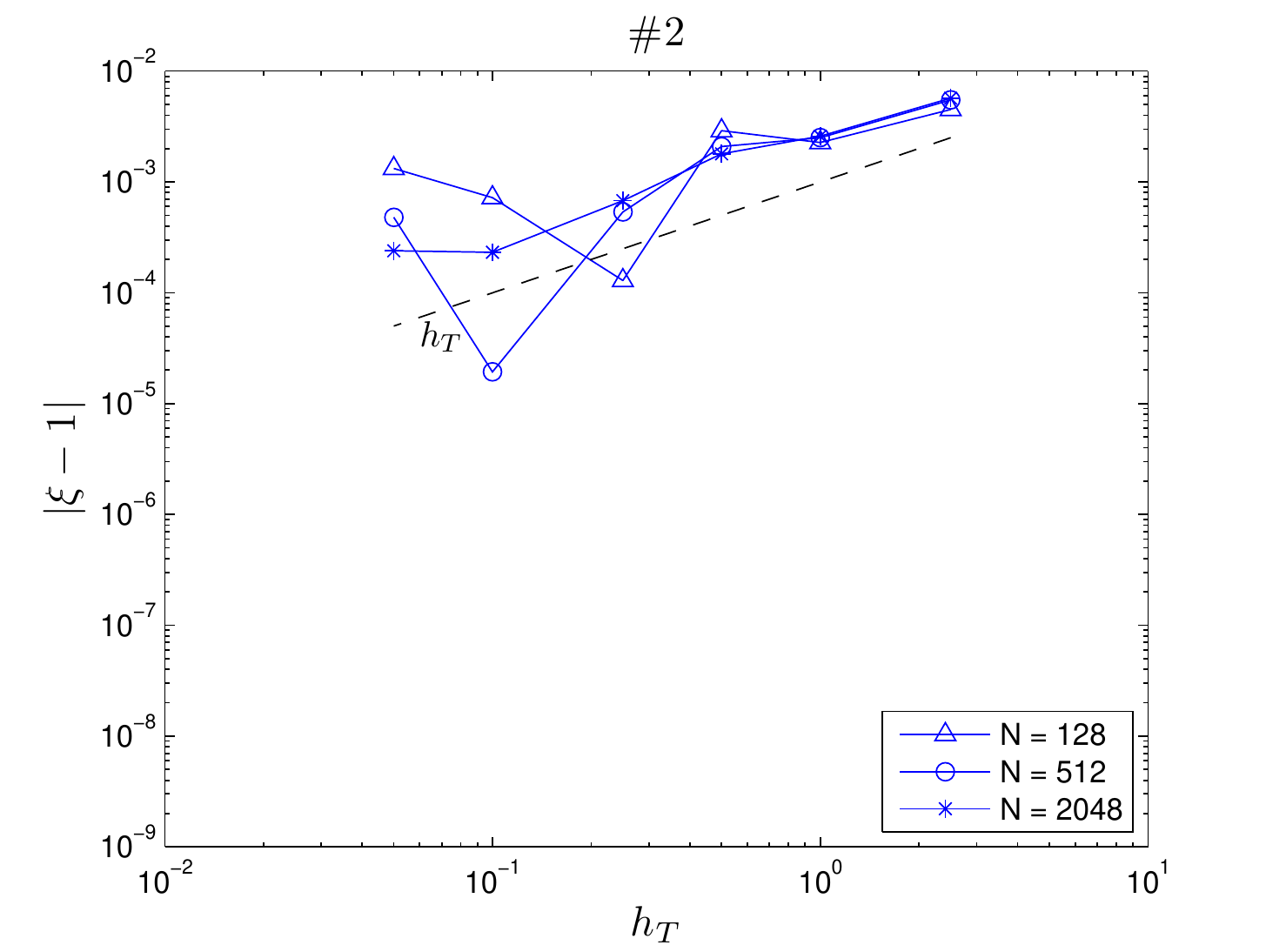}}
\subfigure[]{\includegraphics[width=0.33\textwidth]{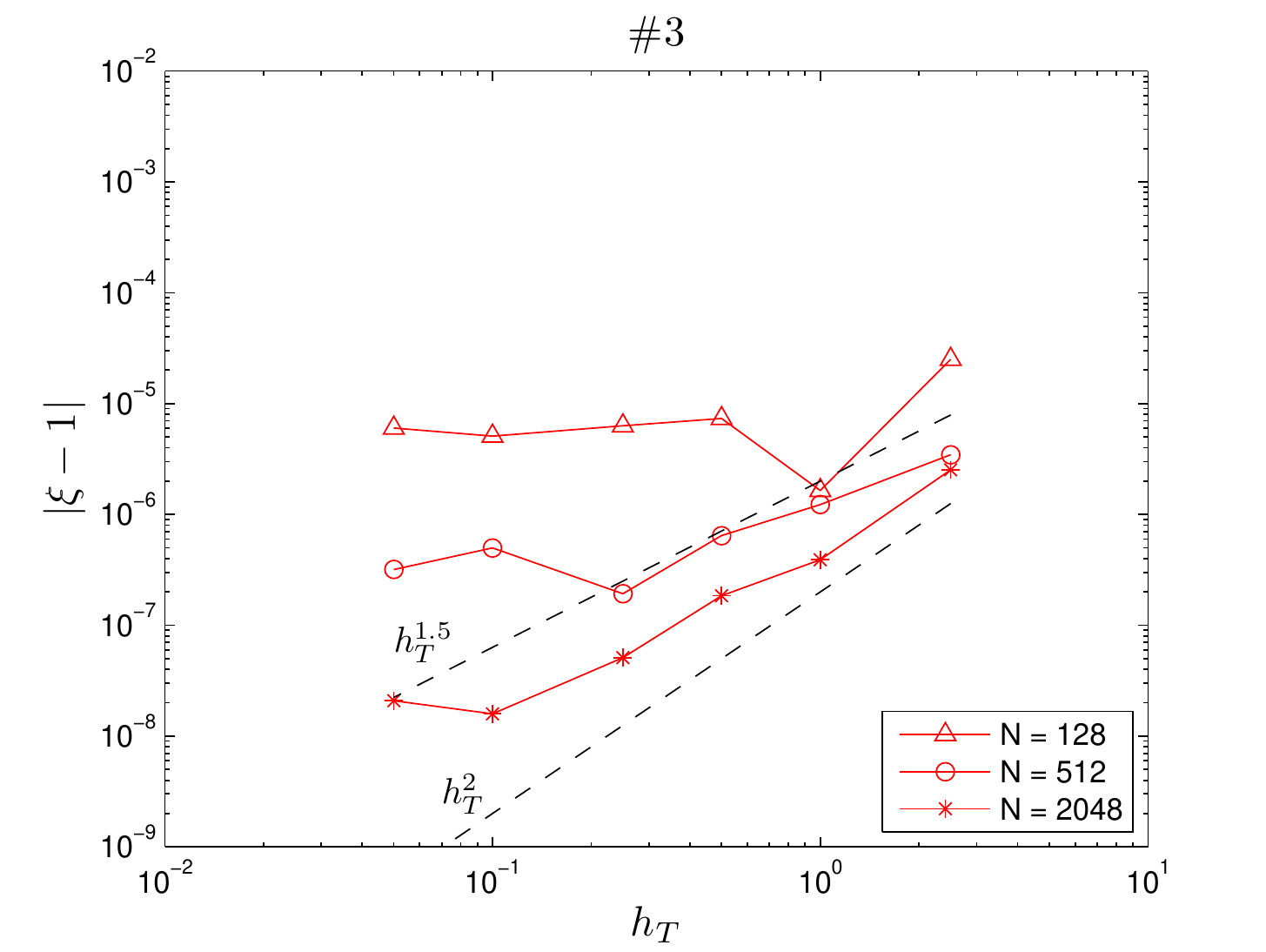}}}
\mbox{
\subfigure[]{\includegraphics[width=0.33\textwidth]{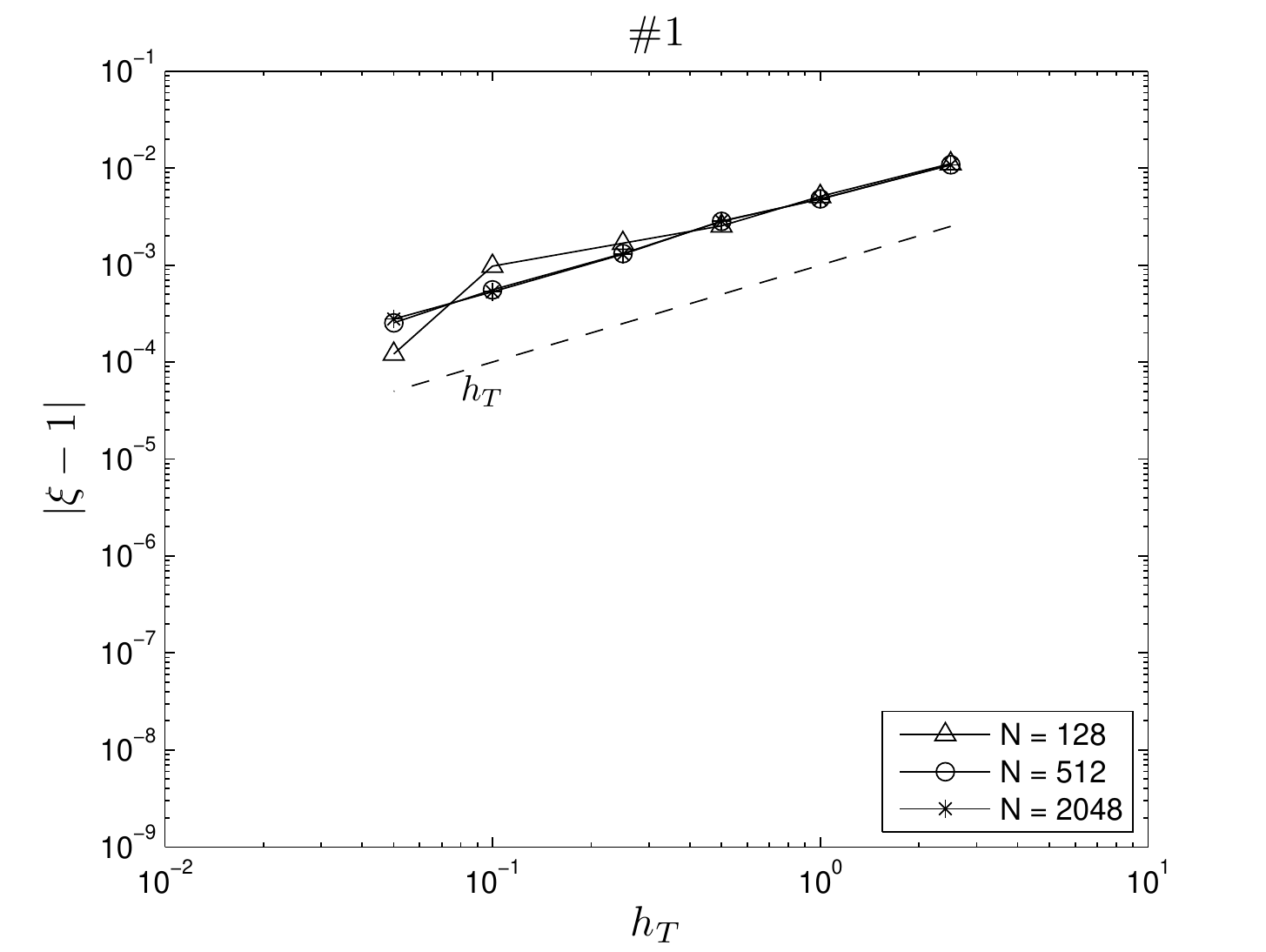}}
\subfigure[]{\includegraphics[width=0.33\textwidth]{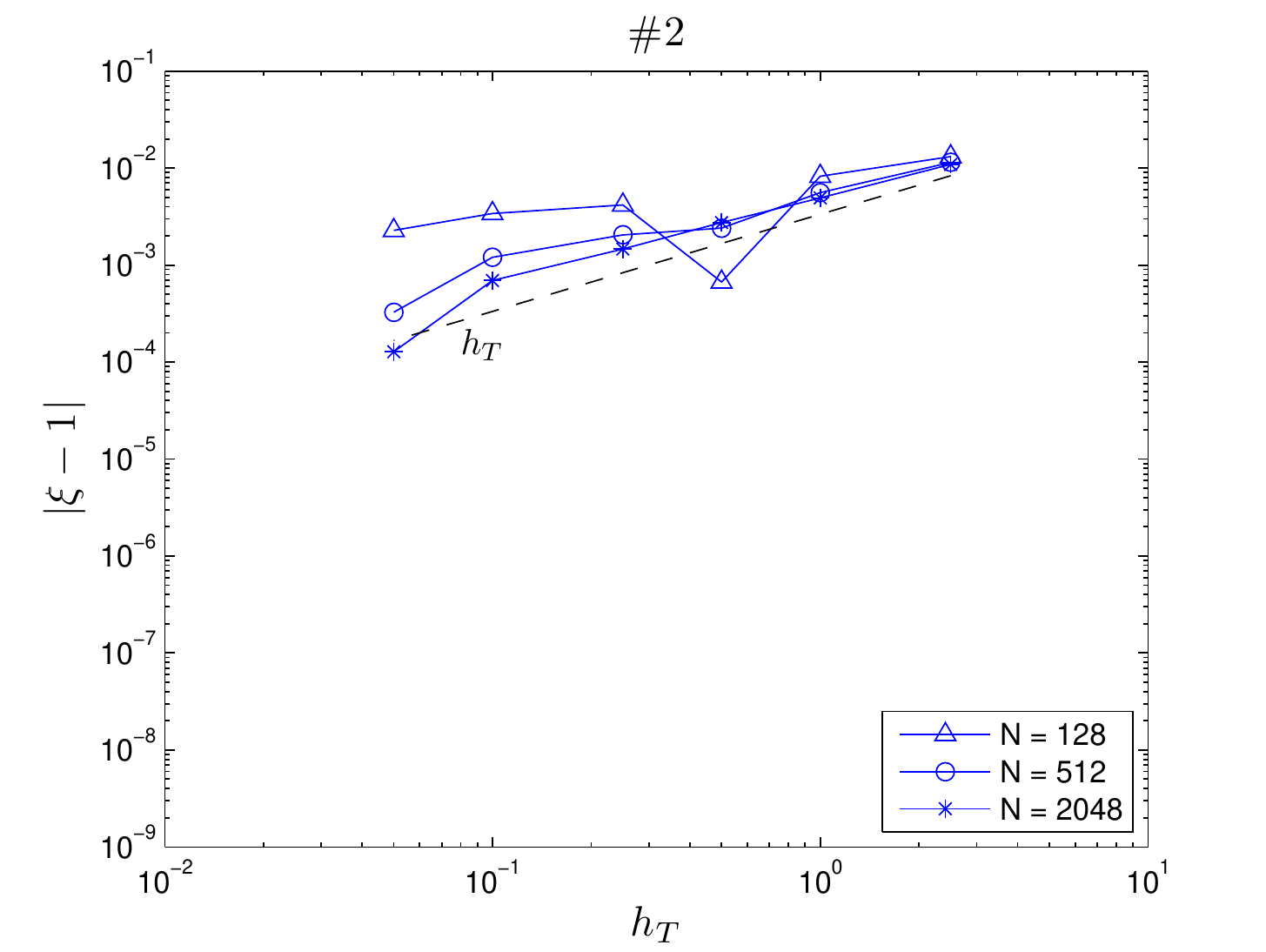}}
\subfigure[]{\includegraphics[width=0.33\textwidth]{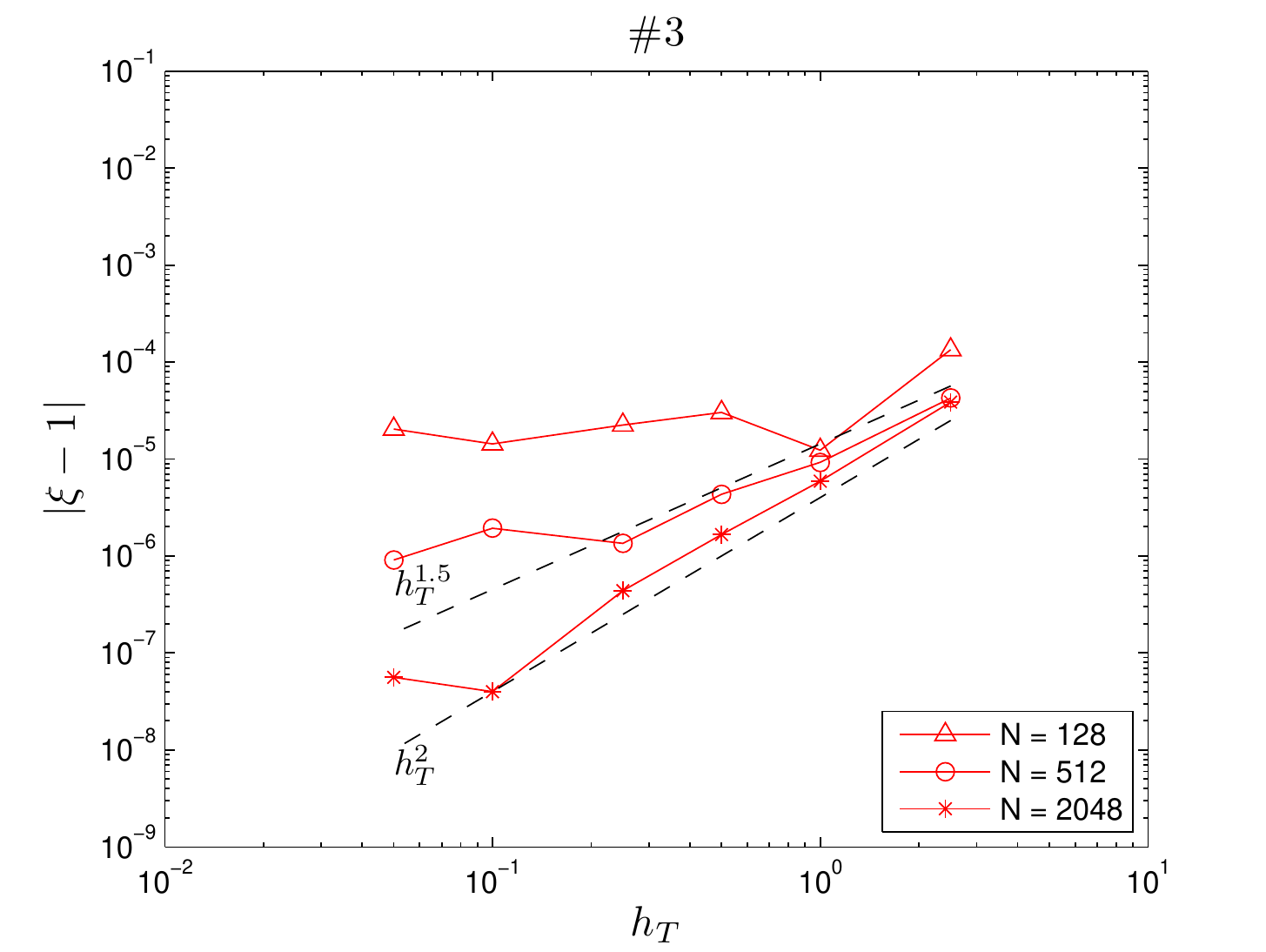}}}
\mbox{
\subfigure[]{\includegraphics[width=0.33\textwidth]{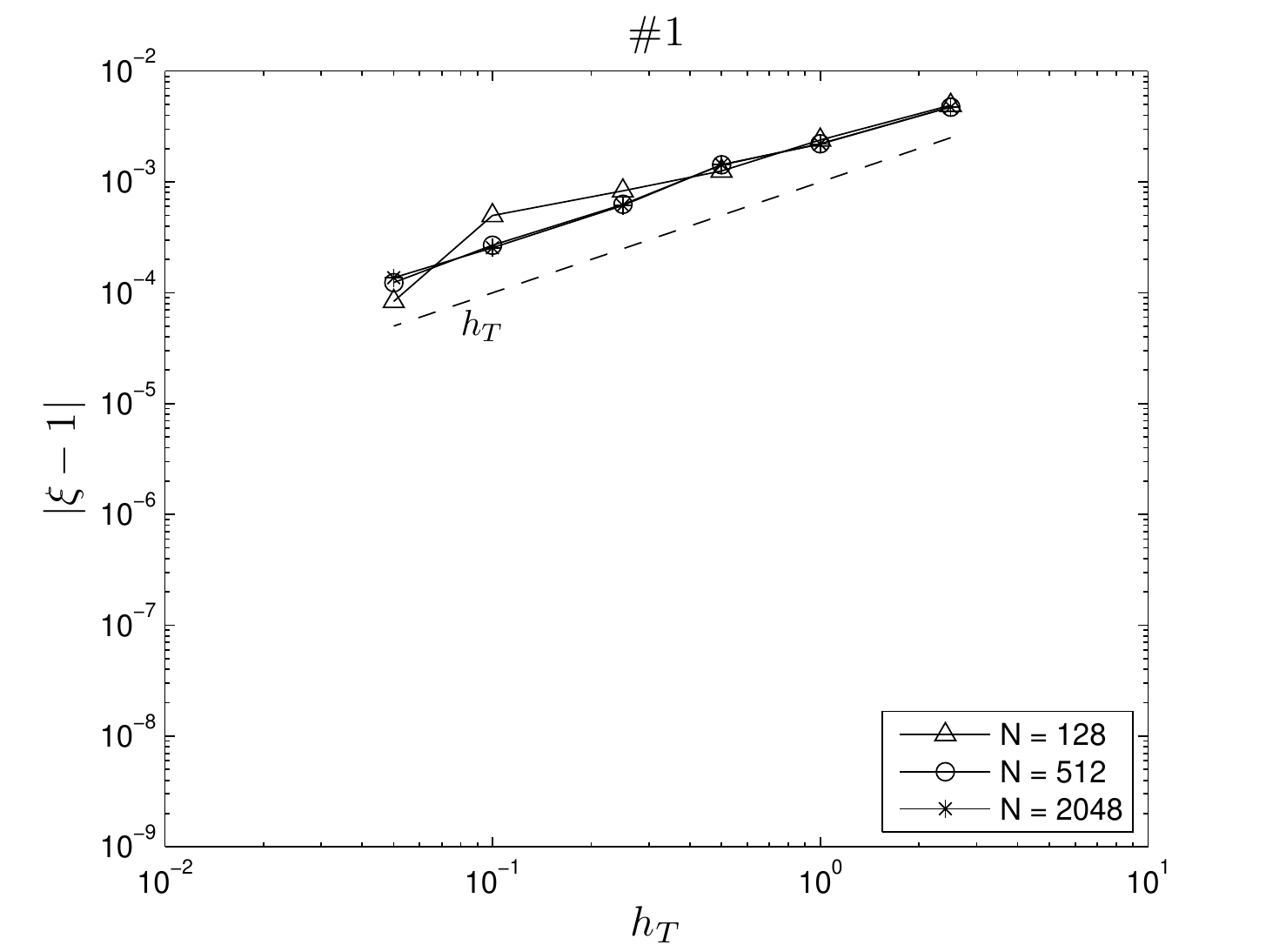}}
\subfigure[]{\includegraphics[width=0.33\textwidth]{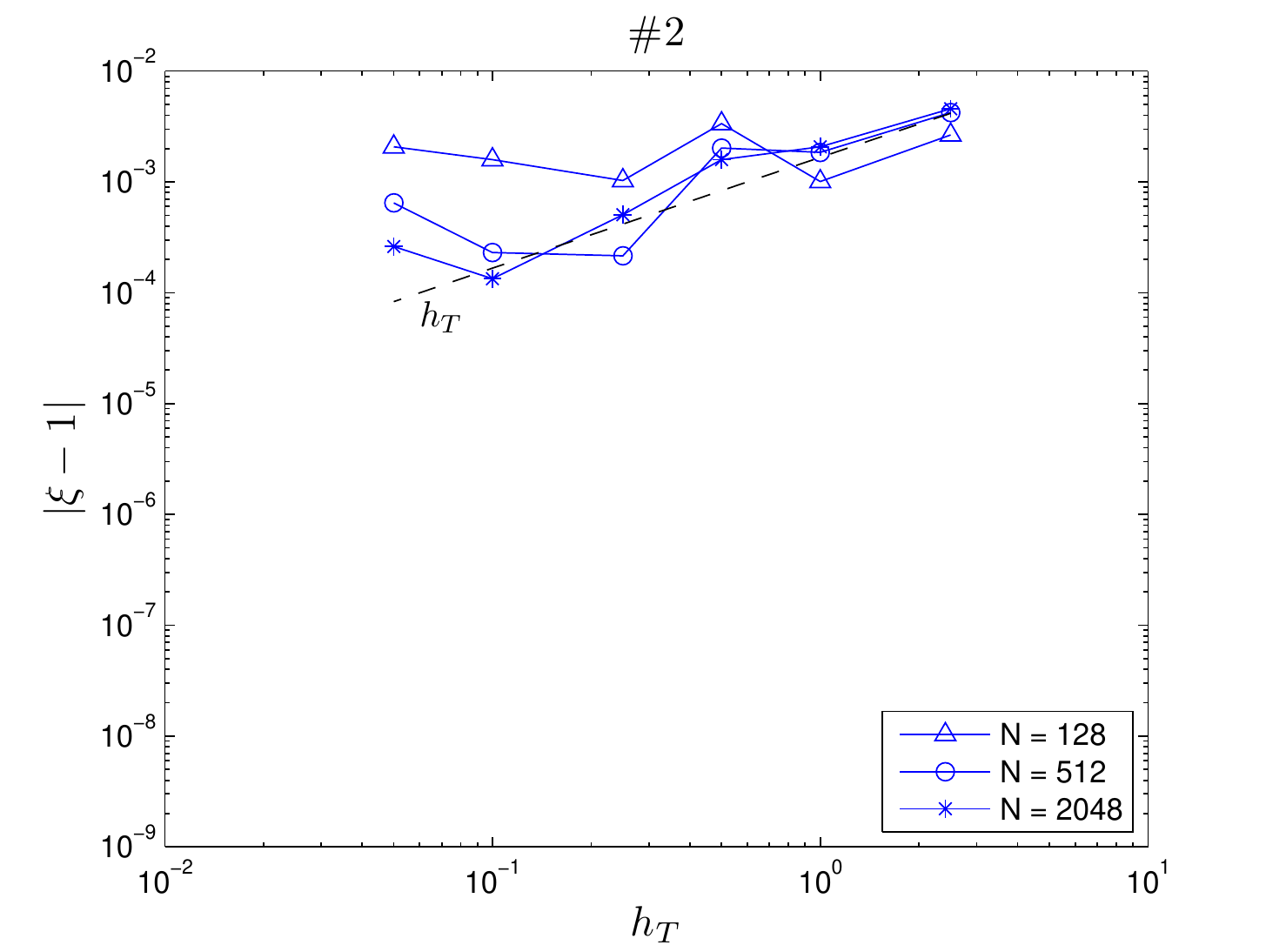}}
\subfigure[]{\includegraphics[width=0.33\textwidth]{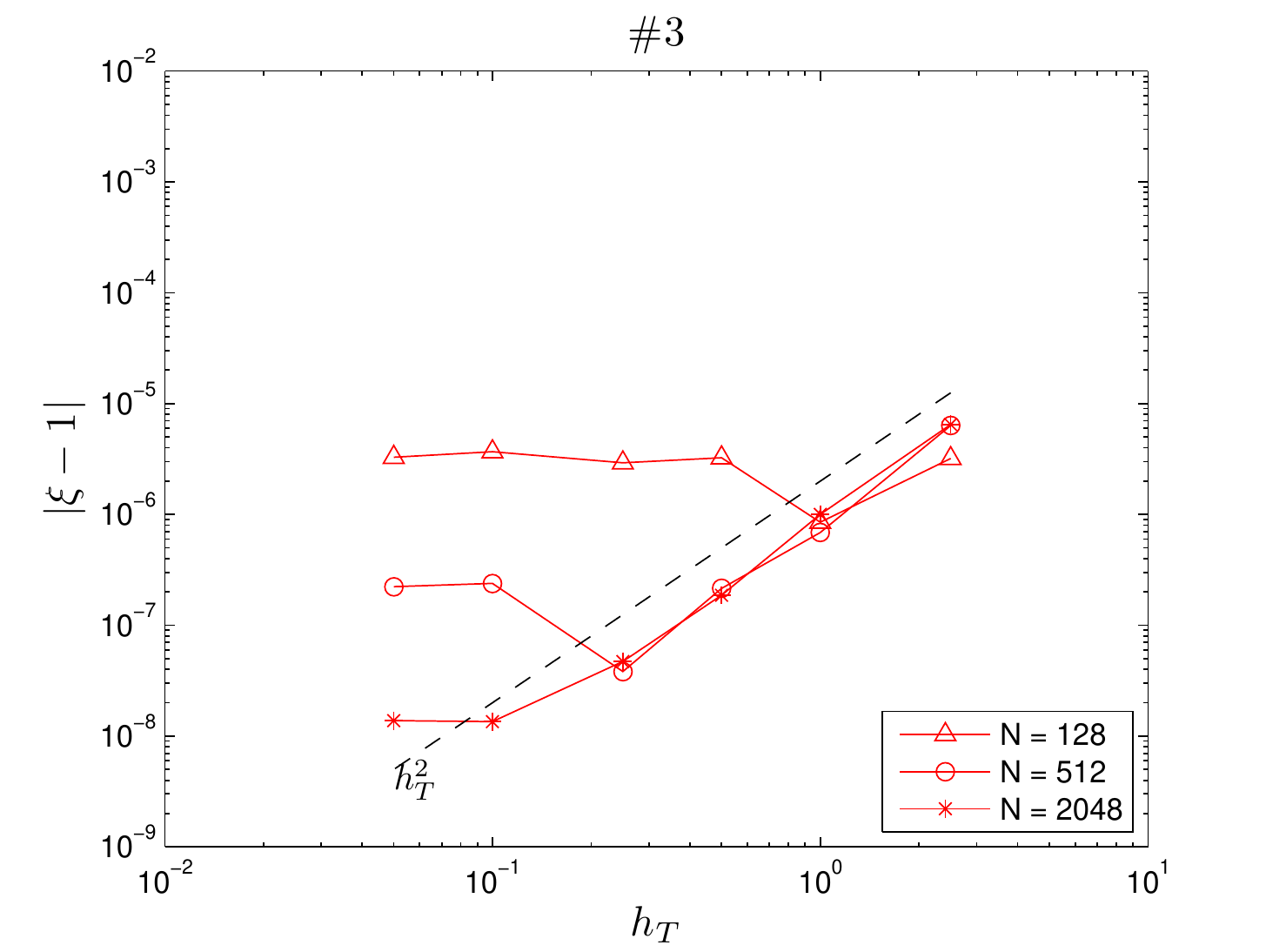}}}
\caption{Relative error $| \xi - 1 |$, where $\xi = \frac{\textrm{LHS of
  \eqref{eq:test2_int}}}{\textrm{RHS of \eqref{eq:test2_int}}}$, versus discretization
  step $h_T$ in the state space $\L$ in approximating the RHS in \eqref{eq:test2_int}.
  The first, second and third row of figures show the results for $\mu'_1$, $\mu'_2$ and
  $\mu'_3$, respectively, while the figures in the first, second and third column
  represent line integration (\#1), delta function approximation (\#2) and
  area integration (\#3) methods, respectively.}
\label{fig:test2b}
\end{center}
\end{figure}

As is evident from Figure \ref{fig:test2a}, all three methods show
similar qualitative behavior, namely, the error decreases with
decreasing $h$ until it saturates which is due to the error terms
depending on $h_T$ becoming dominant. The saturation value of the
error depends on the state space resolution $h_T$ and is different for
the different methods.  Method \#3 (area integration) reveals accuracy
$\O(h^2)$, whereas method \#2 (delta function approximation) is again
only of accuracy about $\O(h)$ for the same discretization of the
interval $\L$.  Method \#1 (line integration) performs better and
shows accuracy up to $\O(h^5)$, but requires much finer resolution in
the state space, namely $N_T > 20000$ ($h_T < 0.03$) is needed for
this behavior to be visible. On the other hand, method \#3 (area
integration) leads to the smallest errors for all the cases tested.

Analogous data is plotted in Figure \ref{fig:test2b} now as a function of
the state space resolution $h_T$ with $h$ acting as a
parameter. Similar trends are observed as in Figure \ref{fig:test2a},
namely, the errors decrease with $h_T$ until they eventually saturate
when the error terms depending on $h$ become dominant. Methods \#1 and
\#2 reveal accuracy $\O(h_T)$, whereas method \#3 has accuracy
$\O(h_T^{1.5 \div 2})$ which is actually better than stipulated
by Theorem \ref{thm:m3}, cf.~\eqref{eq:direct_frm_acc}. Method \#3 is
also characterized by the smallest value of the constant prefactor
leading to the smallest overall errors.

\subsubsection{Tests for $s$ Varying Over a Finite Range with
$\phi(s,\x)$ and $g(\x)$ Given by Solutions of Direct and Adjoint Problem}
\label{sec:test_3}

We now repeat the test described in Section \ref{sec:test_2} using
$\phi(s,\x) = T(\x) - s$ and $g(\x) = [\bnabla \u(\x) + (\bnabla \u(\x))^T] :
\bnabla \u^*(\x)$, where the fields $\u$, $T$ and $\u^*$ come
from the solutions of the direct and adjoint problem
\eqref{eq:coupled_PDEs}--\eqref{eq:coupled_BC} and
\eqref{eq:adjoint_coupled}--\eqref{eq:BC_adjoint}
at some fixed time $t$, see Figure \ref{fig:test3_fields} (details of
these computations will be given in Section \ref{sec:estim}). As
before, we discretize the domain $\Omega = [0,1]^2$ with the step $h =
\Delta x = \Delta y = 2^{-(4+i)}$, $i = 1 \dots 7$, and the state space
$\L = [T_a, T_b]$, $T_a = 100$ and $T_b = 700$, with the step $h_T =
0.06$ ($N_T = 10000$).

\begin{figure}
\begin{center}
\mbox{
\subfigure[]{\includegraphics[width=0.5\textwidth]{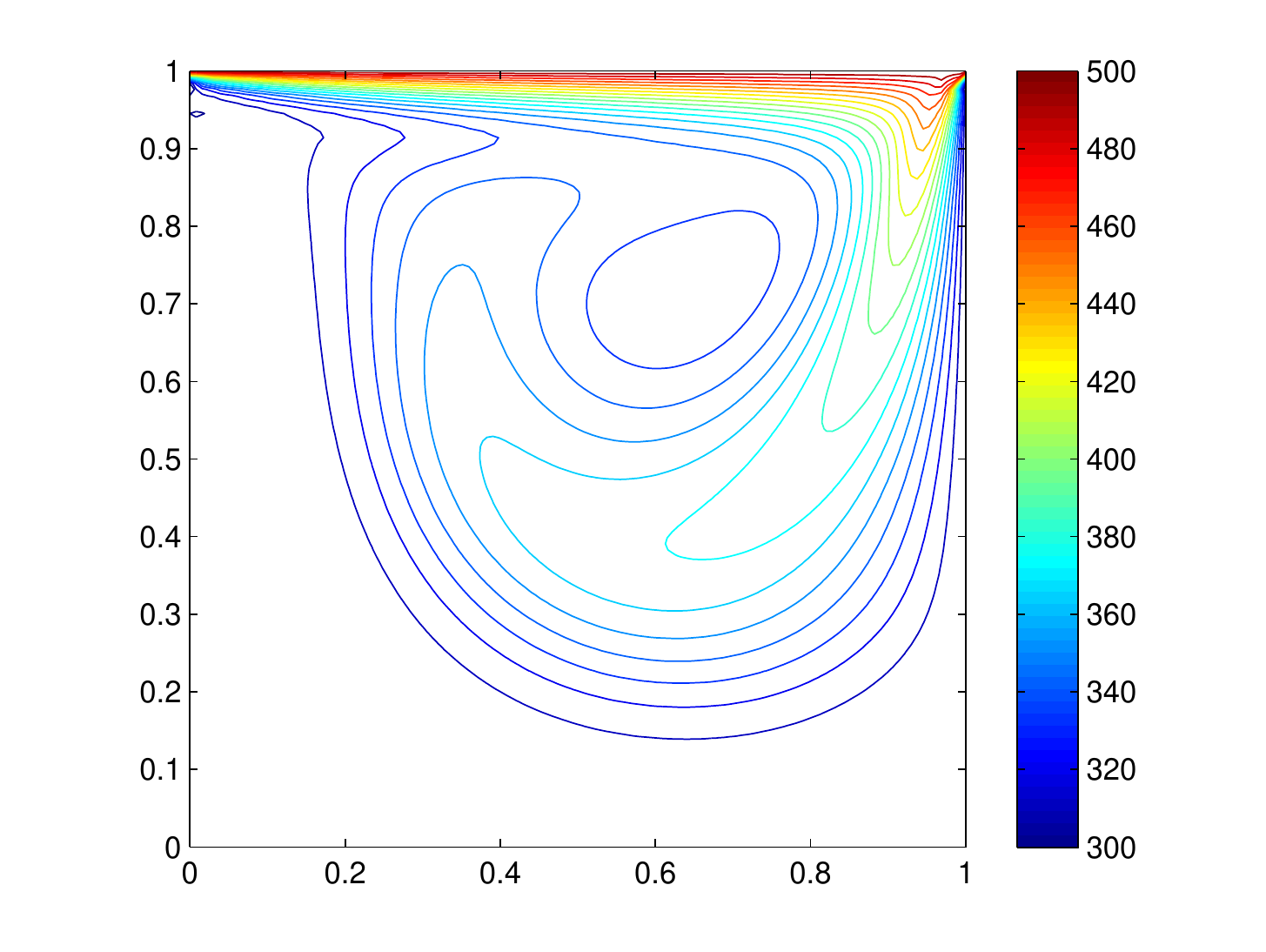}}
\subfigure[]{\includegraphics[width=0.5\textwidth]{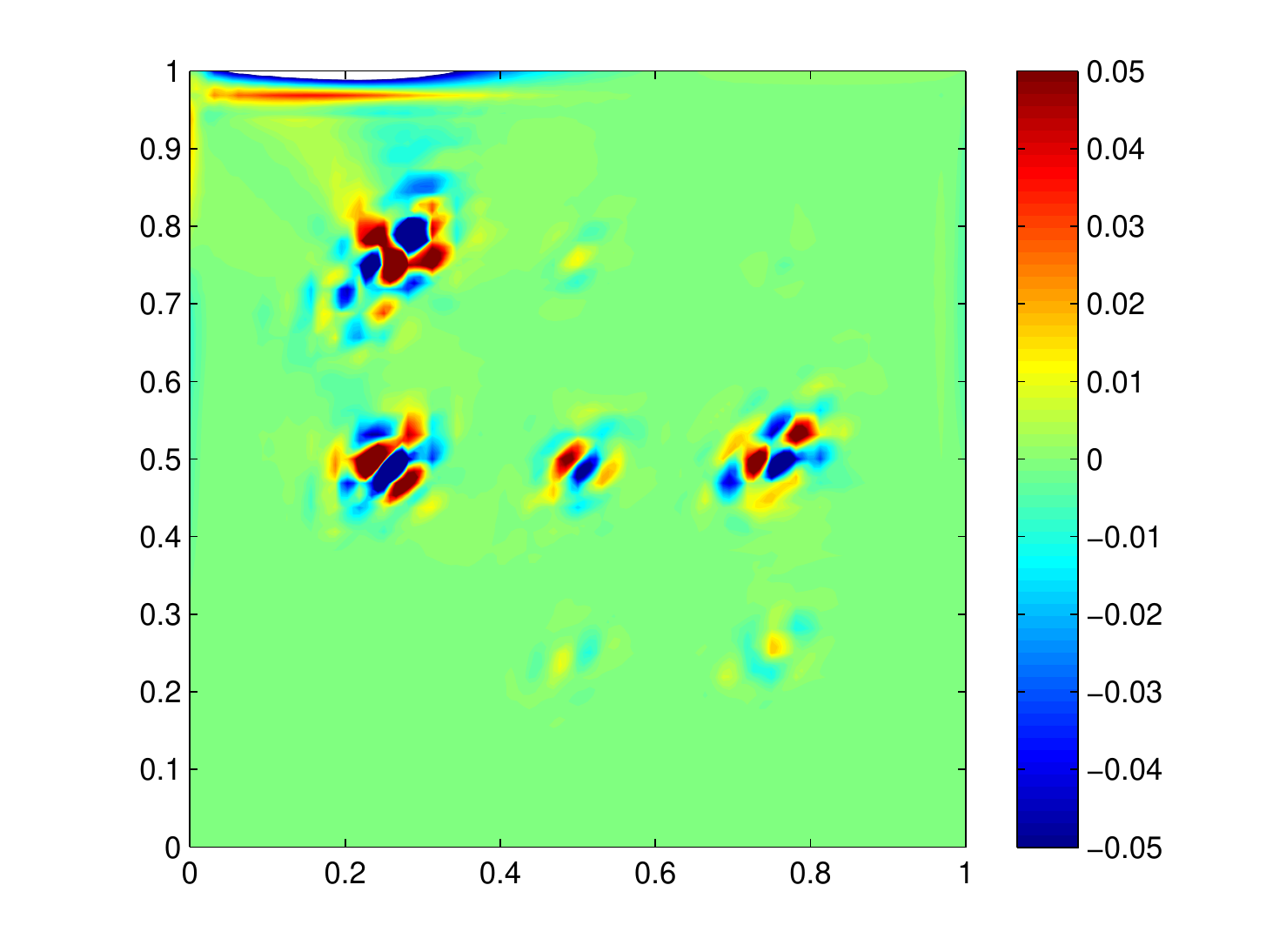}}}
\caption{The fields (a) $T(\x)$ (level set contours) and (b) $g(\x) =
  [\bnabla \u(\x) + (\bnabla \u(\x))^T] : \bnabla \u^*(\x)$ obtained
  at some time $t$ by solving
  \eqref{eq:coupled_PDEs}--\eqref{eq:coupled_BC} and
  \eqref{eq:adjoint_coupled}--\eqref{eq:BC_adjoint}.}
\label{fig:test3_fields}
\end{center}
\end{figure}

The data shown in Figure \ref{fig:test3} confirms our findings from
Sections \ref{sec:test_1} and \ref{sec:test_2}, namely, that in this
case as well the error of all three methods decreases with $h$ until
it eventually saturates when the errors depending on $h_T$ become
dominant. Method \#3 is again characterized by the smallest prefactor
and hence leads to much smaller overall errors than in methods \#1 and
\#2. The computational complexity of our three approaches is addressed
in Figure \ref{fig:CPU_time}, where $N_e$ is defined as a number of
computational elements, i.e., $N_e = N_{\triangle}$, or $N_e =
N_{\square}$, using \eqref{eq:discr_trg} or \eqref{eq:discr_qdr},
respectively. We see that, while the complexity of methods \#1 and \#2
scales as $\O(\sqrt{N_e})$, method \#3 exhibits the scaling of
$\O(N_e)$. On the other hand, however, method \#3 has the smallest
prefactor and, at least in the range of resolutions considered here,
results in the shortest execution time.

In conclusion, these observations make the area integration approach (method \#3)
the method of choice for the present parameter reconstruction problem,
and this is the approach we will use in all subsequent computations.

\begin{figure}
\begin{center}
\mbox{
\subfigure[]{\includegraphics[width=0.33\textwidth]{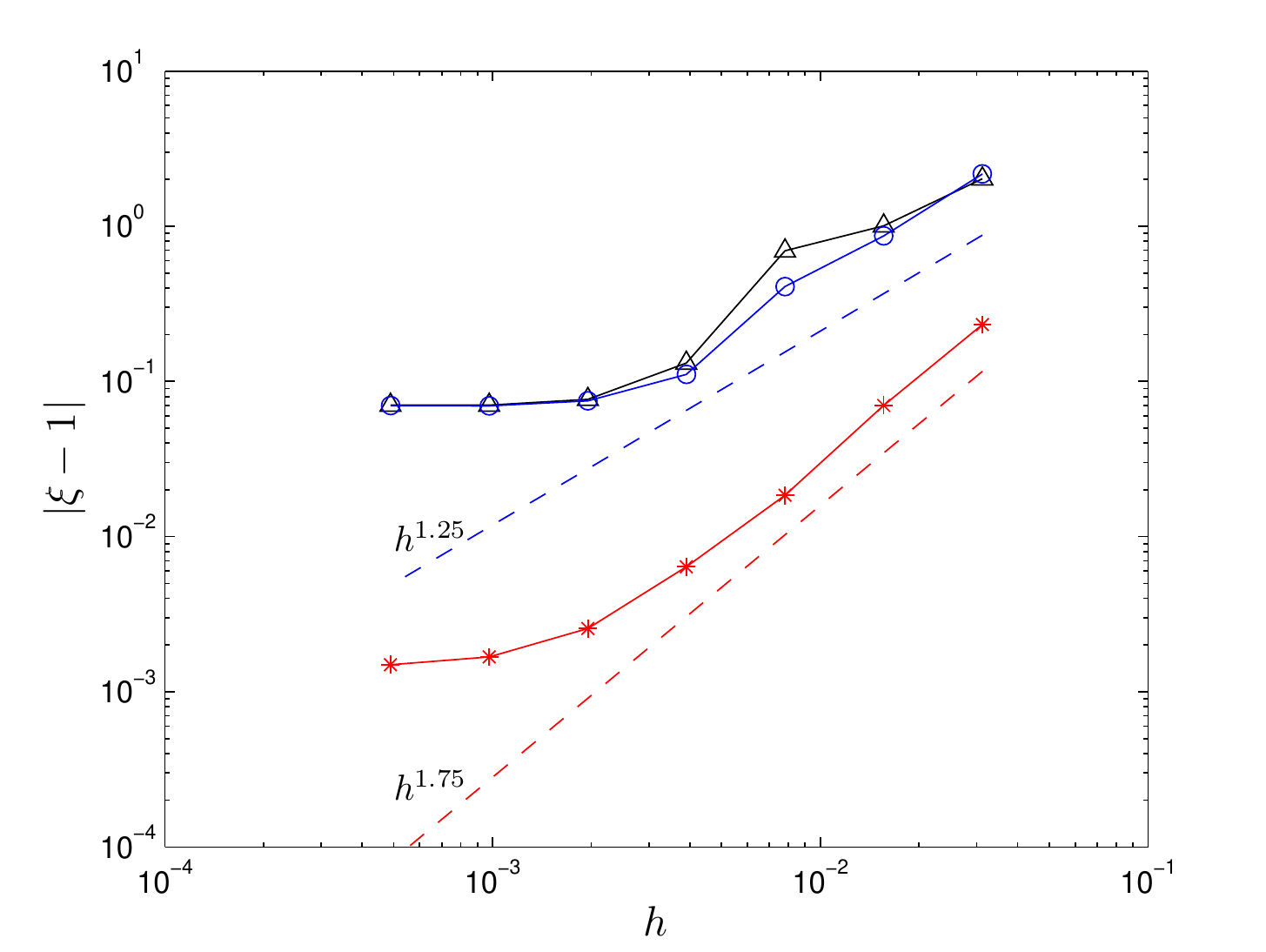}}
\subfigure[]{\includegraphics[width=0.33\textwidth]{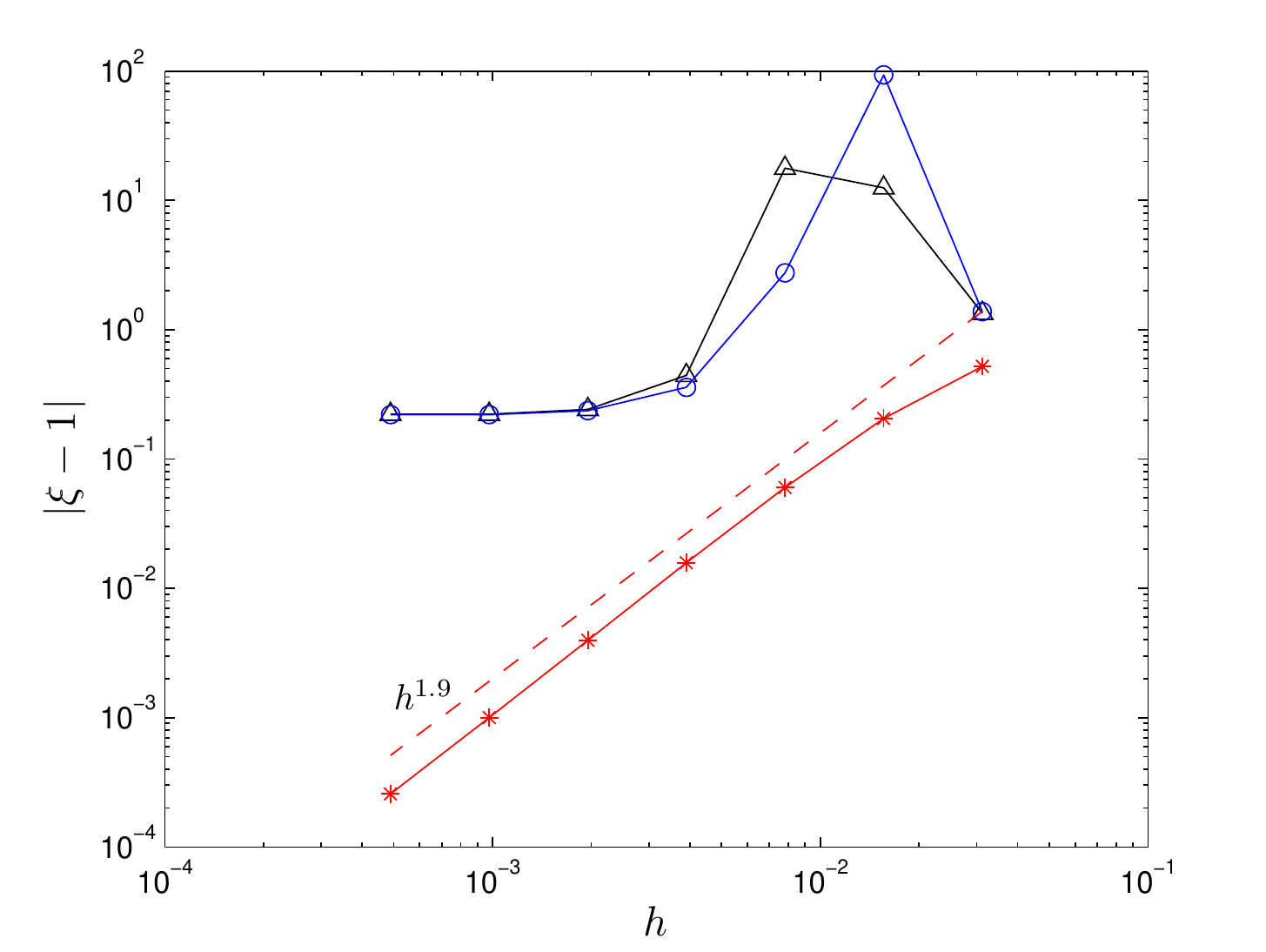}}
\subfigure[]{\includegraphics[width=0.33\textwidth]{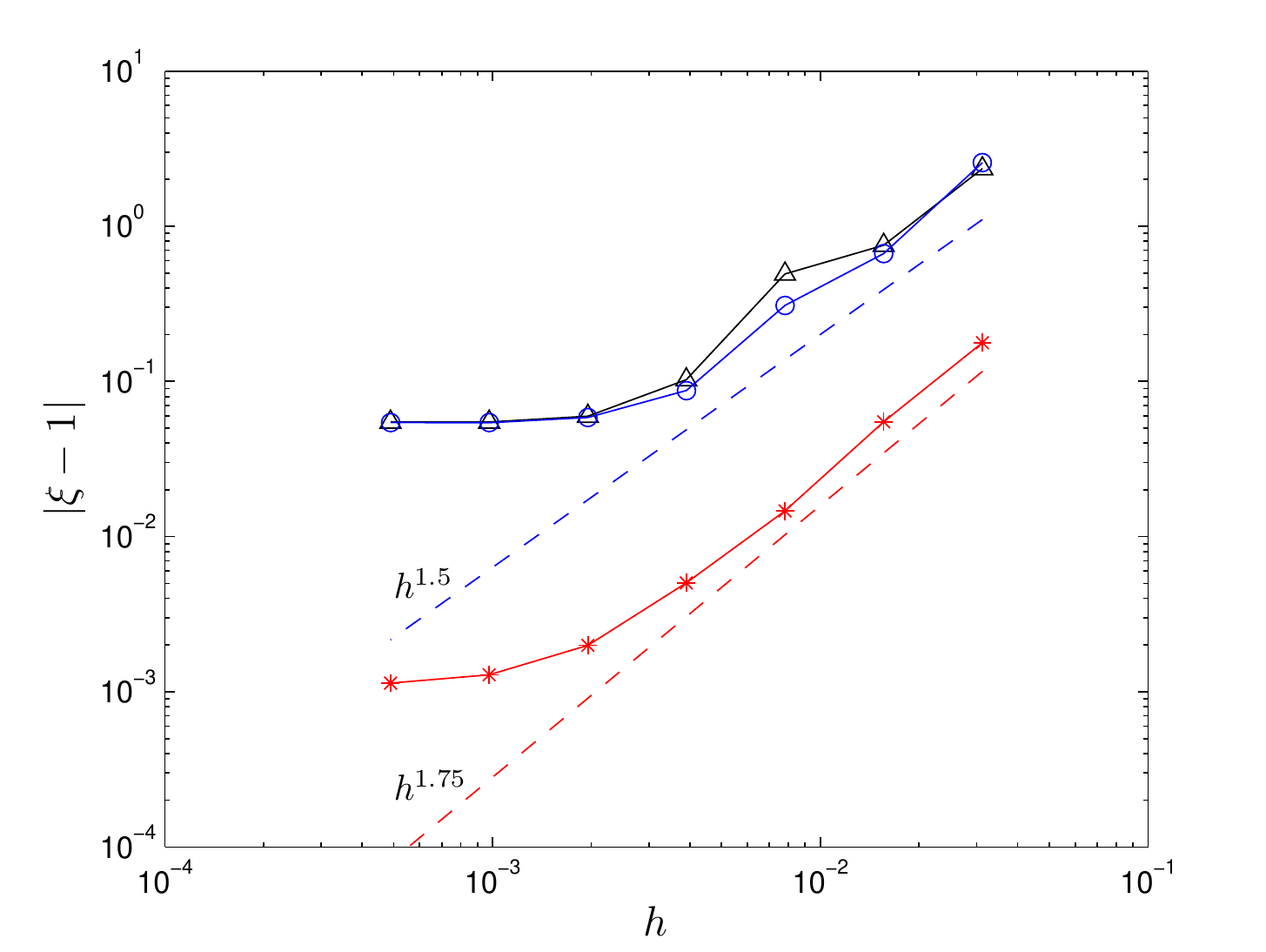}}}
\caption{Relative error $| \xi - 1 |$, where $\xi = \frac{\textrm{LHS
      of \eqref{eq:test2_int}}}{\textrm{RHS of
      \eqref{eq:test2_int}}}$, versus discretization step $h = \Delta
  x = \Delta y$ in estimating the RHS in \eqref{eq:test2_int}, where
  $\phi(s, \x) = T(\x) - s$, and $g(\x) = [\bnabla \u + (\bnabla
  \u)^T] : \bnabla \u^* $ is obtained by solving
  \eqref{eq:coupled_PDEs}--\eqref{eq:coupled_BC} and
  \eqref{eq:adjoint_coupled}--\eqref{eq:BC_adjoint}.  Figures (a), (b)
  and (c) show the results for $\mu'_1$, $\mu'_2$ and $\mu'_3$,
  respectively, using the same discretization of the state space $\L$
  with $N_T = 10000$. Triangles represent the line integration
  approach (\#1), circles show the results for the method of the delta
  function approximation (\#2), while asterisks show the data from the
  area integration approach (\#3).}
\label{fig:test3}
\end{center}
\end{figure}

\begin{figure}
\begin{center}
\includegraphics[width=0.5\textwidth]{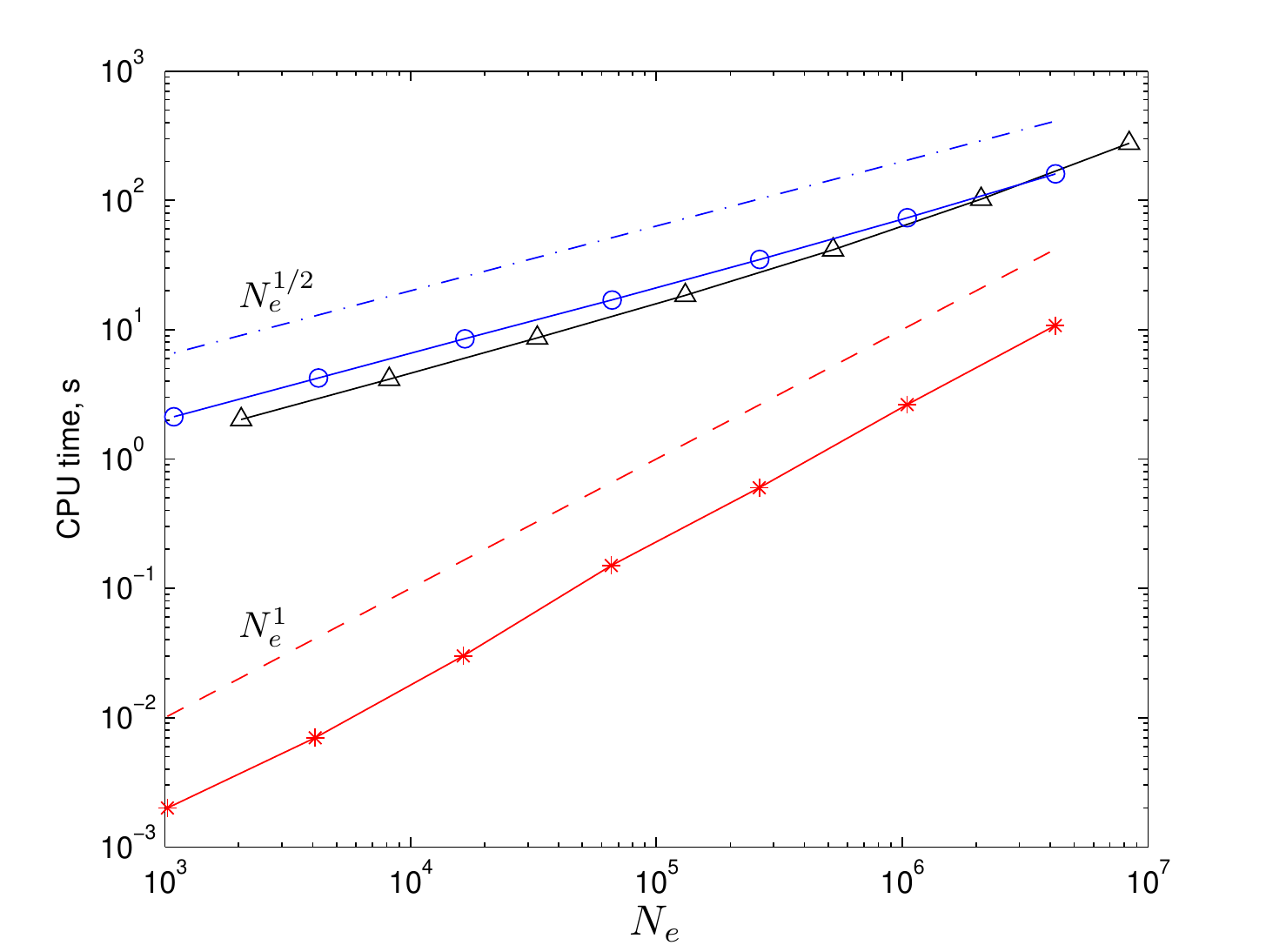}
\caption{CPU time (in seconds) versus the number $N_e$ of
  computational elements used in the different approaches, namely,
  (triangles) finite elements $\Omega^{\vartriangle}_i$ in the line
  integration approach (method \#1), (circles) grid nodes in the
  method of the delta function approximation (\#2) and (asterisks)
  finite elements $\Omega^{\square}_i$ in the area integration
  approach (\#3). The data shown corresponds to the estimation of the
  RHS in \eqref{eq:test2_int} for $\mu'_1$ using the same
  discretization of the state space with $N_T = 10000$, $\phi(s, \x) =
  T(\x) - s$ and $g(\x) = [\bnabla \u + (\bnabla \u)^T] : \bnabla \u^*
  $ obtained by solving \eqref{eq:coupled_PDEs}--\eqref{eq:coupled_BC}
  and \eqref{eq:adjoint_coupled}--\eqref{eq:BC_adjoint}.}
\label{fig:CPU_time}
\end{center}
\end{figure}

\subsection{Models for Constitutive Relations}
\label{sec:models}

For validation purposes one needs an algebraic expression to
represent the dependence of the viscosity coefficient on temperature
which could serve as the ``true'' material property we will seek to
reconstruct. The dynamic viscosity in liquids is usually approximated
by exponential relations \cite{Perez07} and one of the most common
expression for the coefficient of the dynamic viscosity is the law
of Andrade (also referred to as the Nahme law) which is given
in the dimensional form valid for $T$ expressed in Kelvins
in \eqref{eq:Andrade} below
\begin{equation}
\label{eq:Andrade}
\tilde \mu(T) = C_1 e^{C_2/T},
\end{equation}
where $C_1, \ C_2 > 0$ are constant parameters.
As regards the thermal conductivity $k$, since it typically reveals a
rather weak dependence on the temperature, for the sake of simplicity
we will treat it as a constant setting $k = 0.002$ in all computations
presented in this paper.

\subsection{Model Geometry and PDE Solvers}
\label{sec:solvers}

To validate the accuracy and performance of the proposed approach to
reconstruct $\mu(T)$, we use a simple 2D lid--driven (shear--driven)
cavity flow, cf.~Figure \ref{fig:cavity}, as our model problem.  Due
to the simplicity of its geometry and boundary conditions, the
lid--driven cavity flow problem has been used for a long time to
validate novel solution approaches and codes \cite{bs06, Ghia1982}.
Numerical results are available for different aspect ratios and the
problem was solved in both laminar and turbulent regimes using
different numerical techniques. Thus, this problem is a useful testbed
as there is a great deal of numerical data that can be used for
comparison. Our code for solving direct problem
\eqref{eq:coupled_PDEs}--\eqref{eq:coupled_BC} and adjoint problem
\eqref{eq:adjoint_coupled}--\eqref{eq:BC_adjoint} has been implemented
using {\tt FreeFem++} \cite{FreeFem}, an open--source, high--level
integrated development environment for the numerical solution of PDEs
based on the the Finite Element Method. The direct solver has been
thoroughly validated against available benchmark data from
\cite{bs06,Ghia1982}, and all details are reported in \cite{b11}.

\begin{figure}
\begin{center}
\includegraphics[width=0.5\textwidth]{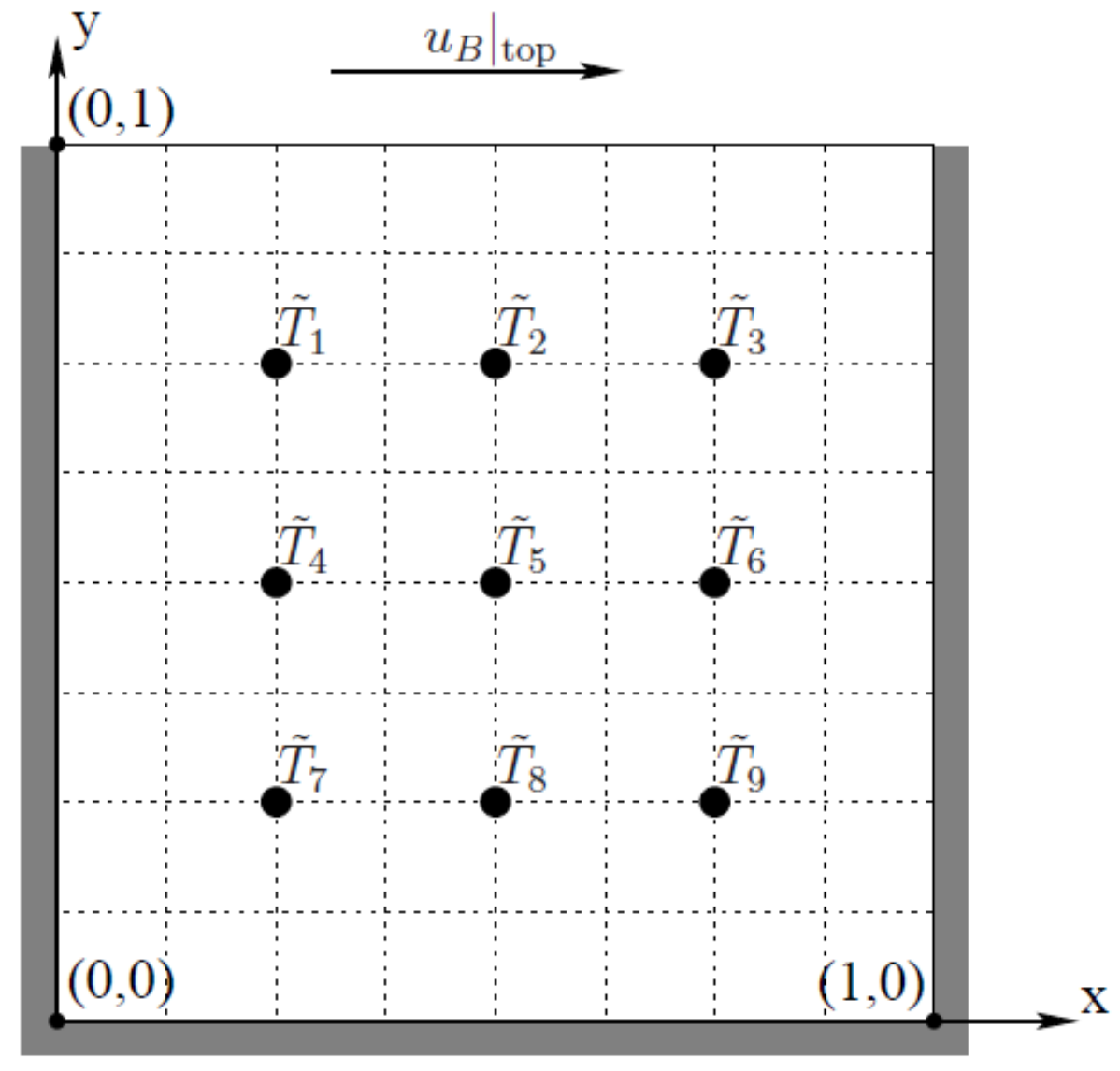}
\end{center}
\caption{Geometry of the 2D lid-driven (shear-driven) cavity.}
\label{fig:cavity}
\end{figure}

To solve numerically the direct problem we discretize system
\eqref{eq:coupled_PDEs}--\eqref{eq:coupled_BC} in time using a
second--order accurate semi-implicit approach. Spatial discretization
is carried out using triangular finite elements \eqref{eq:discr_trg}
and the P2 piecewise quadratic (continuous) representations for the
velocity $\u$ and the temperature $T$ fields, and the P1 piecewise
linear (continuous) representation for the pressure $p$ field.  The
system of algebraic equations obtained after such discretization is
solved at every time step with {\tt UMFPACK}, a solver for
nonsymmetric sparse linear systems \cite{umfpack}. We add that
incompressibility is ensured by an implicit treatment of equation
\eqref{eq:coupled_incompress}. Stability is enforced by choosing the
time step $\Delta t$ so that it satisfies the following CFL condition
\begin{equation} \left| \Delta t \left(\frac{\max_{\x \in \Omega} u(\x)}{h_x} +
      \frac{\max_{\x \in \Omega} v(\x)}{h_y} \right) \right| \leq 1.
\label{eq:CFL}
\end{equation}
The same technique is used for the numerical solution of adjoint
problem \eqref{eq:adjoint_coupled}--\eqref{eq:BC_adjoint}.

All our computations are performed using a 2D square domain $\Omega =
[0,1]^2$ shown in Figure \ref{fig:cavity}. Governing system
\eqref{eq:coupled_PDEs}--\eqref{eq:coupled_BC} and adjoint system
\eqref{eq:adjoint_coupled}--\eqref{eq:BC_adjoint} are discretized on a
uniform mesh with $N = N_x = N_y = 32$ grid points in every direction
using triangular finite elements combined with the cubic spline
interpolation of the function $\mu(T(\x))$. The rather modest spatial
resolution used is a consequence of the fact that in a single
reconstruction problem the governing and adjoint systems need to be
solved $\O(10^3 - 10^4)$ times, hence there are limitations of the
computational time. Unless stated otherwise, the interval $\L =
[100.0, 700.0]$ is discretized using an equispaced grid with $N_T =
600$ points. The actual constitutive relation $\tilde \mu(T)$ we seek
to reconstruct is given by Andrade law \eqref{eq:Andrade} with $C_1 =
0.001$ and $C_2 = 10^3$. In the computational tests reported below we
used $M = 9$ measurement points distributed uniformly inside the
cavity (Figure \ref{fig:cavity}). To mimic an actual experimental
procedure, first relation \eqref{eq:Andrade} is used in combination
with governing system \eqref{eq:coupled_PDEs}--\eqref{eq:coupled_BC}
to obtain pointwise temperature measurements
$\{\tilde{T}_i\}_{i=1}^M$. Relation \eqref{eq:Andrade} is then
``forgotten'' and is reconstructed using gradient--based algorithm
\eqref{eq:desc}. In terms of the initial guess in \eqref{eq:desc},
unless stated otherwise, we take a constant approximation $\mu_0$ to
\eqref{eq:Andrade}, given by $\mu_0 = \frac{1}{2} \left( \tilde \mu
  (T_{\alpha}) + \tilde \mu (T_{\beta}) \right) = \frac{C_1}{2} \left(
  e^{C_2/T_{\alpha}} + e^{C_2/T_{\beta}} \right)$ which translates
into the following expression for the new optimization variable
$\theta$, cf.~\eqref{eq:slack}, $\theta_0 = \sqrt{\mu_0 - m_{\mu}}$,
where $m_{\mu} = \frac{1}{2} \tilde \mu (T_{\beta}) = \frac{C_1}{2}
e^{C_2/T_{\beta}}$.  Since in the present problem the viscosity
$\mu(T)$ is a function of the temperature, the Reynolds number is
defined locally (both in space and in time) and varies in the range
$Re = 0.05 \div 240$.

Unless stated otherwise, the boundary conditions for the temperature
are $T_B|_{\textrm{top}} = T_{\beta} = 500$ and $T_B|_{\textrm{else}}
= T_{\alpha} = 300$ which results in the identifiability region $\I =
[300.0,500.0]$. The velocity boundary conditions $\u_B = [u_B,v_B]^T$
are given by $u_{B}|_{\textrm{top}} = U_0 \cos(2\pi t)$, $U_0 = 1$ and
$v_{B}|_{\textrm{top}} = 0$ on the top boundary segment and
$\u_B|_{\textrm{else}} = \0$ on the remaining boundary segments. Their
time--dependent character ensures that the obtained flow is unsteady
at the values of the Reynolds number for which self--sustained
oscillations do not spontaneously occur (the study of higher Reynolds
numbers was restricted by the numerical resolution used, see comments
above). The initial conditions $\{\u_0, T_0 \}$ used in the
reconstruction problem correspond to a developed flow obtained at $t =
10$ from the following initial and boundary conditions
$T_B|_{\textrm{top}} = 500$, $T_B|_{\textrm{else}} = 300$ and
$u_{B}|_{\textrm{top}} = 1$, $v_{B}|_{\textrm{top}} = 0$,
$\u_B|_{\textrm{else}} = \0$, $T_0 = 300$, $\u_0 = \0$. We emphasize
that adjoint system \eqref{eq:adjoint_coupled}--\eqref{eq:BC_adjoint}
is in fact a terminal--value problem which needs to be integrated
backwards in time, and its coefficients depend on the solution $\{\u,
T\}$ of the direct problem around which linearization is performed at
the given iteration. Our reconstructions are performed using the
following time windows $[0, t_f]$, $t_f = \left\{ \frac{1}{4},
  \frac{1}{2}, 1 \right\}$ which correspond to a fraction of, or a
full, forcing cycle in the boundary conditions described above. These
time windows are all discretized with the time step $\Delta t = 5
\cdot 10^{-3}$ in both the direct and adjoint problems.  This choice
of the time step $\Delta t$ ensures stability by satisfying the CFL
condition \eqref{eq:CFL}.

\subsection{Validation of Gradients}
\label{sec:kappa}

In this Section we present results demonstrating the consistency of
the cost functional gradients obtained with the approach described in
Section \ref{sec:grad_adj}. In Figure \ref{fig:grads} we present the
$L_2$ and several Sobolev $H^1$ gradients obtained at the first
iteration. In the first place, we observe that, as was
anticipated in Section \ref{sec:grad_adj}, the $L_2$ gradients indeed
exhibit quite irregular behaviour lacking necessary smoothness
which makes them unsuitable for the reconstruction of constitutive
relations with required properties, cf.~\eqref{eq:S_theta}. On the
other hand, the gradients extracted in the Sobolev space $H^1$ are
characterized by the required smoothness and therefore hereafter we
will solely use the Sobolev gradients. Next, in Figure
\ref{fig:kappa_test} we present the results of a diagnostic test
commonly employed to verify the correctness of the cost functional
gradient \cite{hnl02}. It consists in computing the directional
G\^ateaux differential $\J'(\theta;\theta')$ for some selected
perturbations $\theta'$ in two different ways, namely, using a
finite--difference approximation and using \eqref{eq:grad_T_coupled}
which is based on the adjoint field, and then examining the ratio of
the two quantities, i.e.,
\begin{equation}
\kappa (\epsilon) \triangleq \dfrac{\epsilon^{-1} \left[
\J(\theta + \epsilon \theta') - \J(\theta) \right]}
{\int_{-\infty}^{+\infty} \bnabla_{\theta} \J (s) \, \theta'(s) \, ds}
\label{eq:kappa}
\end{equation}
for a range of values of $\epsilon$. If the gradient $\bnabla_{\theta}
\J(\theta)$ is computed correctly, then for intermediate values of
$\epsilon$, $\kappa(\epsilon)$ will be close to the unity.
Remarkably, this behavior can be observed in Figure
\ref{fig:kappa_test} over a range of $\epsilon$ spanning about 6
orders of magnitude for three different perturbations $\theta'(T)$.
Furthermore, we also emphasize that refining the time step $\Delta t$
used in the time--discretization of
\eqref{eq:coupled_PDEs}--\eqref{eq:coupled_BC} and
\eqref{eq:adjoint_coupled}--\eqref{eq:BC_adjoint} yields values of
$\kappa (\epsilon)$ closer to the unity. The reason is that in the
``optimize--then--discretize'' paradigm adopted here such
refinement of discretization leads to a better approximation of the
continuous gradient \eqref{eq:grad_T_coupled}. We add that the
quantity $\log_{10} | \kappa (\epsilon) - 1 |$ plotted in Figure
\ref{fig:kappa_test}b shows how many significant digits of accuracy
are captured in a given gradient evaluation. As can be expected, the
quantity $\kappa(\epsilon)$ deviates from the unity for very small
values of $\epsilon$, which is due to the subtractive cancellation
(round--off) errors, and also for large values of $\epsilon$, which is
due to the truncation errors, both of which are well--known effects.

\begin{figure}
\begin{center}
\includegraphics[width=0.8\textwidth]{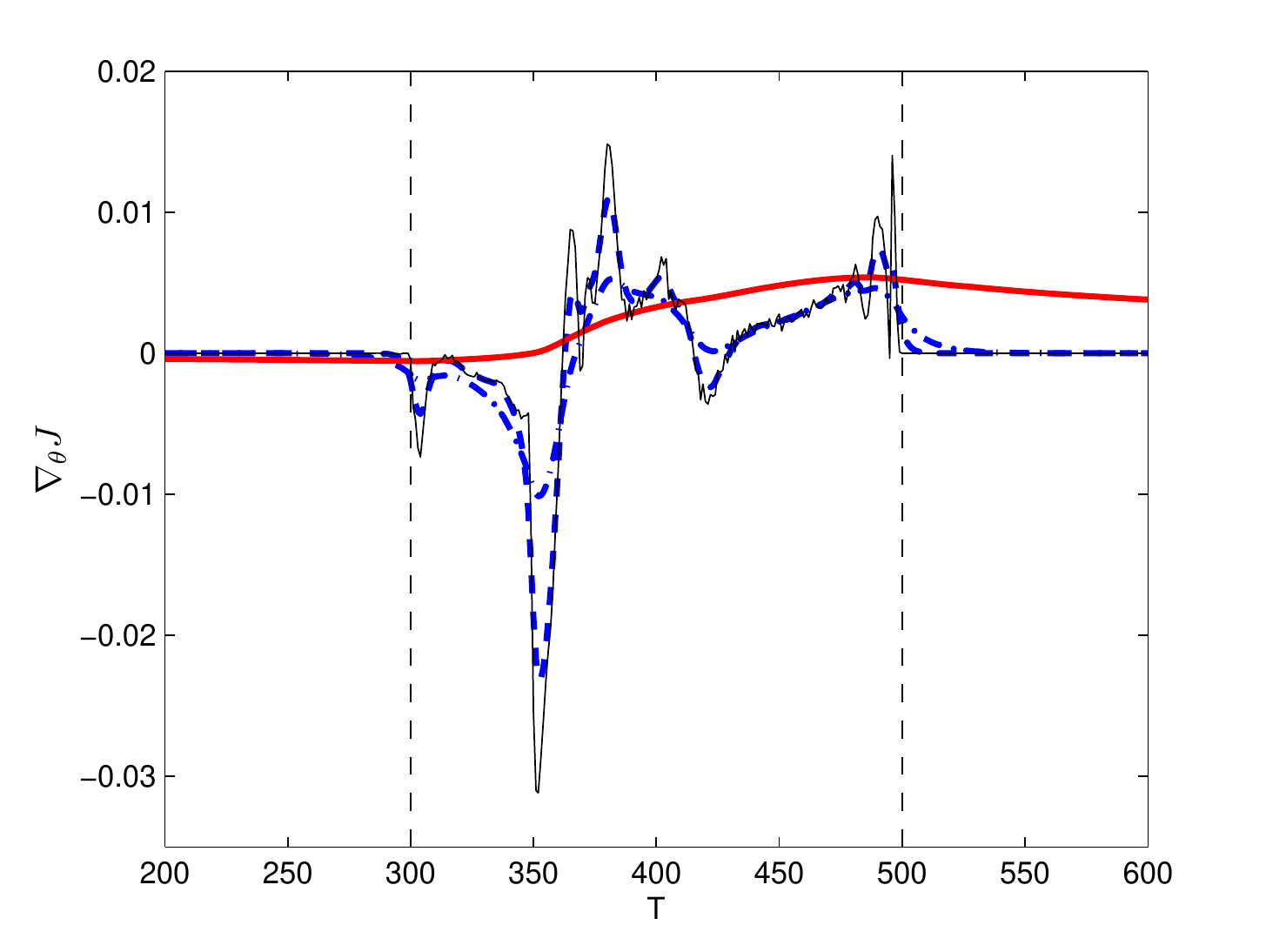}
\caption{Comparison of (thin solid line) the $L_2$ gradient
  $\bnabla^{L_2}_{\theta} \J$ and the Sobolev gradients
  $\bnabla^{H^1}_{\theta} \J$ defined in \eqref{eq:helm} for different
  values of the smoothing coefficient (thick dashed line) $\ell =
  2.5$, (thick dash--dotted line) $\ell = 10.0$ and (thick solid line)
  $\ell = 200.0$ at the first iteration with the initial guess
  $\mu_{0} = const = 0.0177$.  The vertical dashed lines represent the
  boundaries of the identifiability interval $\I$ and the vertical
  scale in the plot is arbitrary.}
\label{fig:grads}
\end{center}
\end{figure}

\begin{figure}
\centering
\mbox{
\subfigure[]{\includegraphics[width=0.5\textwidth]{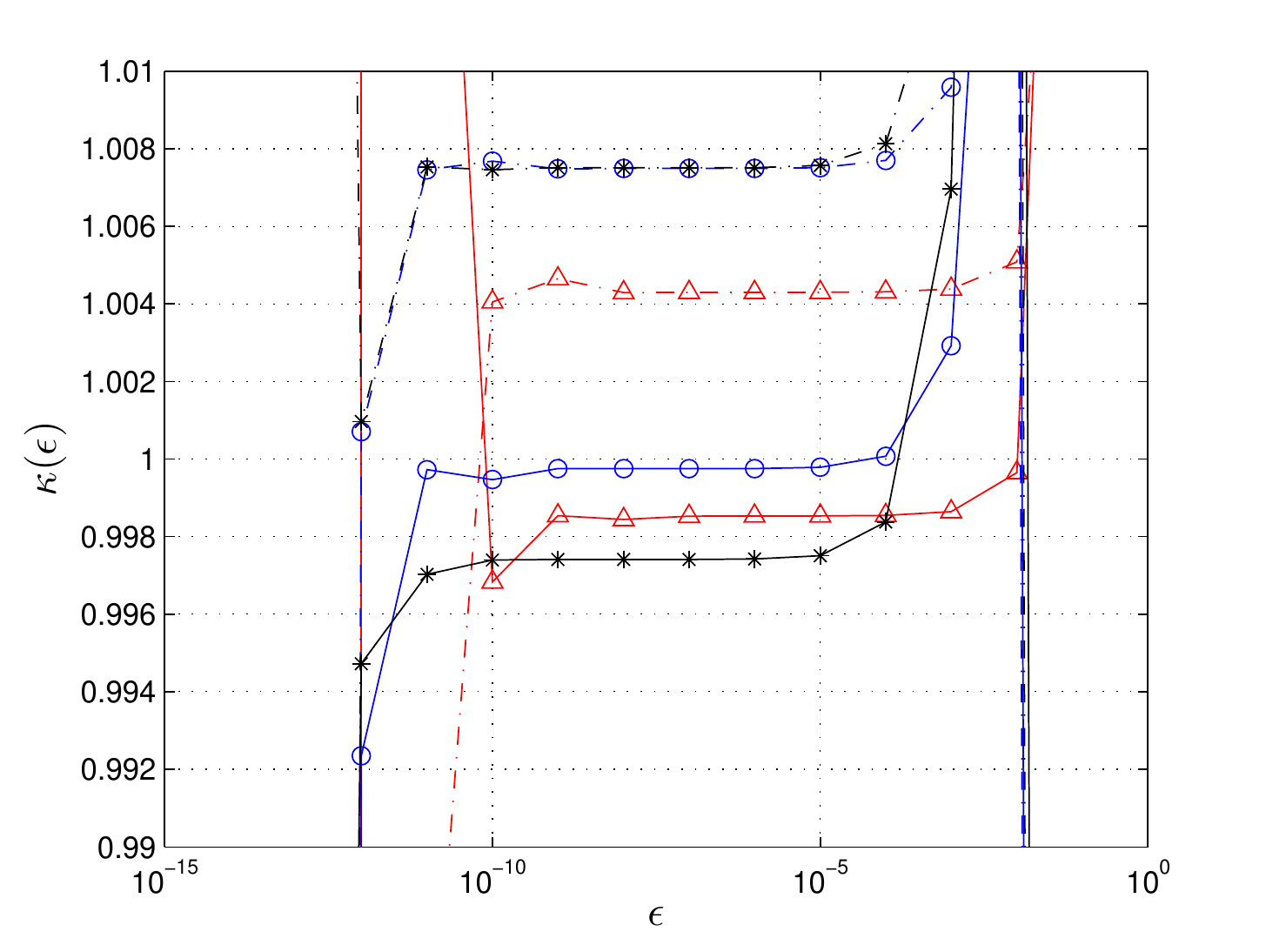}}
\subfigure[]{\includegraphics[width=0.5\textwidth]{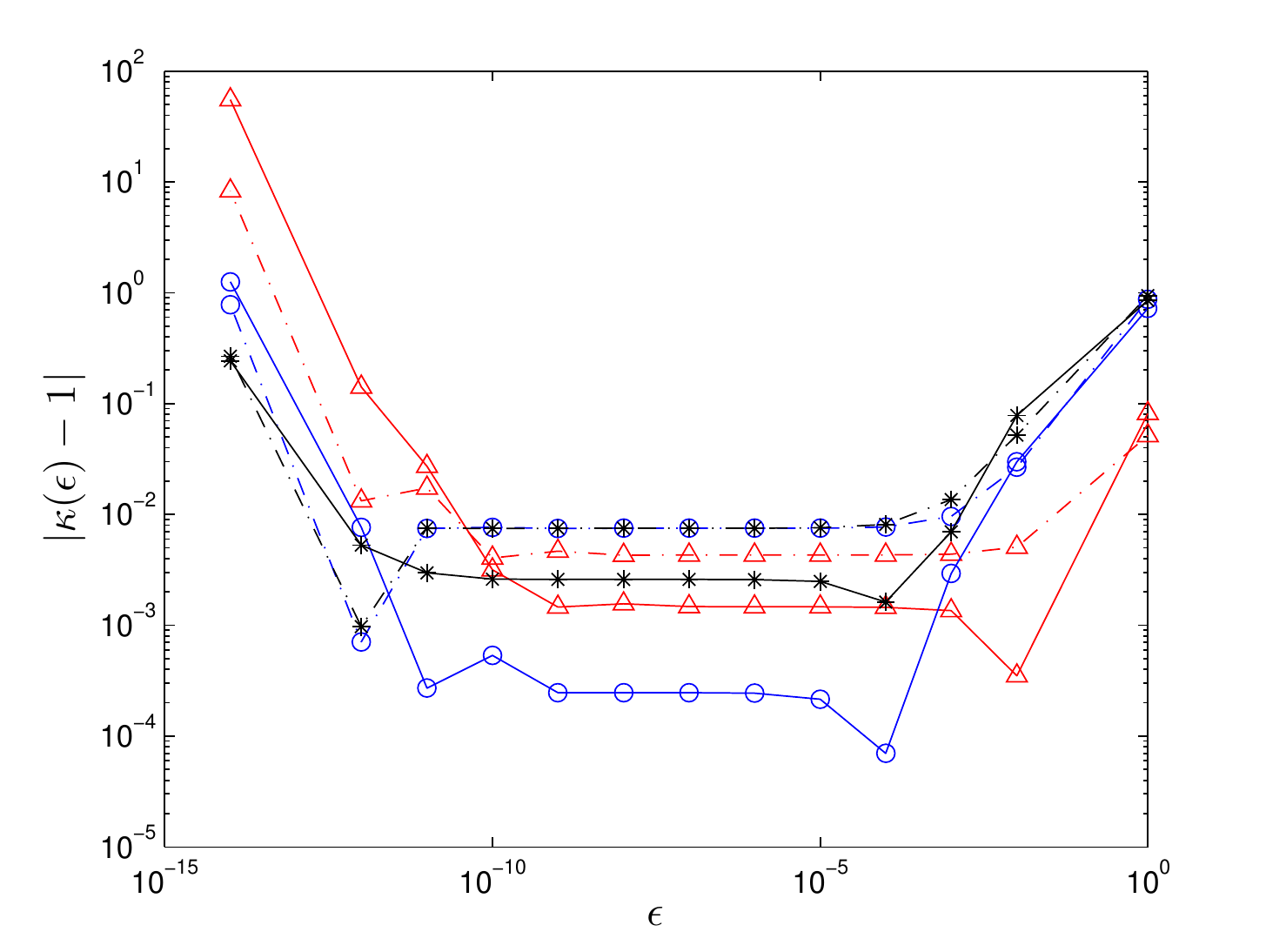}}}
\caption{ The behavior of (a) $\kappa (\epsilon)$ and (b) $\log_{10}
  |\kappa(\epsilon) -1 |$ as a function of $\epsilon$ for different
  perturbations (triangles) $\theta'(T) = \frac{10}{T}$,
  (circles) $\theta'(T) = e^{-\frac{T}{1000}}$ and
  (asterisks) $\theta'(T) = -\frac{T^2}{90000} + \frac{2T}{225} + \frac{2}{9}$.
  The time steps used in the time integration of
  \eqref{eq:coupled_PDEs}--\eqref{eq:coupled_BC} and
  \eqref{eq:adjoint_coupled}--\eqref{eq:BC_adjoint} are (dash--dotted
  line) $\Delta t = 5.0 \cdot 10^{-3}$ and (solid line) $\Delta t =
  5.0 \cdot 10^{-4}$.}
\label{fig:kappa_test}
\end{figure}

\subsection{Reconstruction Results}
\label{sec:estim}

We solve minimization problem \eqref{eq:minJslack} using the Steepest
Descent (SD), Conjugate Gradient (CG) and BFGS algorithms \cite{nw00}
and, unless indicated otherwise, using Sobolev gradients computed with
$\ell=200.0$ which was found by trial--and--error to result in the
fastest rate of convergence of iterations \eqref{eq:desc}. The
termination condition used was $\left| \frac{\J(\theta^{(n)})-
    \J(\theta^{(n-1)})}{\J(\theta^{(n-1)})} \right| < 10^{-6}$.  The
behavior of the cost functional $\J(\theta^{(n)})$ as a function of
the iteration count $n$ is shown in Figure \ref{fig:CF}a for all three
minimization algorithms (SD, CG and BFGS).  We note that in all cases
a decrease over several orders of magnitude is observed in just a few
iterations. Since the three descent methods tested reveal comparable
performance, our subsequent computations will be based on the steepest
descent method as the simplest one of the three. The effect of the
different initial guesses $\mu_0$ on the decrease of the cost
functional is illustrated in Figure \ref{fig:CF}b. Again, a
substantial decrease of the cost functional corresponding to about
5--6 orders of magnitude is observed for all the initial guesses
tested. Reconstructions $\hat{\mu}(T)$ of the constitutive relation
obtained using the initial guess $\mu_0 = \frac{C_1}{2} \left(
  e^{C_2/T_{\alpha}} + e^{C_2/T_{\beta}} \right) = 0.0177$ and the
optimization time windows with $t_f = \frac{1}{4}, \frac{1}{2}, 1$ are
shown in Figure \ref{fig:base_T-4}. Comparing the accuracy of the
reconstruction obtained for these different time windows, we can
conclude that better results are achieved on shorter time windows $t_f
= \frac{1}{4}, \frac{1}{2}$. Given considerations of the computational
time, hereafter we will therefore focus on the case with $t_f =
\frac{1}{4}$. In Figure \ref{fig:new_guess} we show the
reconstructions $\hat{\mu}(T)$ of the constitutive relation obtained
from different initial guesses already tested in Figure \ref{fig:CF}b
such as constant values of $\mu_0$, $\mu_0(T)$ varying linearly with
the temperature $T$ and $\mu_0$ given as a rescaling of the true
relationship $\tilde {\mu}$. As expected, the best results are
obtained in the cases where some prior information about the true
material property is already encoded in the initial guess $\mu_0$,
such as the slope, cf.~Figure \ref{fig:new_guess}c, or the exponent,
cf.~Figure \ref{fig:new_guess}d. We may also conclude that, since all
the reconstructions shown in Figures \ref{fig:base_T-4} and
\ref{fig:new_guess} are rather different, the iterations starting from
different initial guesses converge in fact to different local
minimizers. However, it should be emphasized that in all cases the
reconstructions do capture the main qualitative features of the actual
material property with differences becoming apparent only outside the
measurement span interval $\M$. In order to make our tests more
challenging, in the subsequent computations we will use the initial
guess $\mu_0 = \frac{1}{2} \left(\tilde \mu(T_{\alpha}) + \tilde
  \mu(T_{\beta})\right) = 0.0177$ (cf.~Figure \ref{fig:base_T-4})
which contains little prior information about the true material
property.

In Figure \ref{fig:shift_combo} we show the results of the
reconstruction performed with a larger identifiability region $\I =
[250.0, 700.0] = \L$ which is done by adopting the corresponding
values for temperature boundary conditions \eqref{eq:BC_T}, i.e.,
$T_B|_{\textrm{else}}=250$ and $T_B|_{\textrm{top}}=700$. We note that
in this case the target interval for the reconstruction $\L$ has been
chosen to coincide with identifiability interval $\I$. Fairly accurate
reconstruction can be observed in this problem as well, and we
emphasize that this is also the case for values of the temperature
outside the identifiability interval studied in the previous case.
This example demonstrates that accurate reconstruction on different
intervals $\L$ can in fact be achieved by adjusting the
identifiability region via a suitable choice of temperature boundary
conditions \eqref{eq:BC_T}. This process can be interpreted as
adjusting the conditions of the actual experiment used to obtain the
measurements $\{\tilde T_i \}_{i=1}^M$.

\begin{figure}
\begin{center}
\mbox{
\subfigure[]{\includegraphics[width=0.5\textwidth]{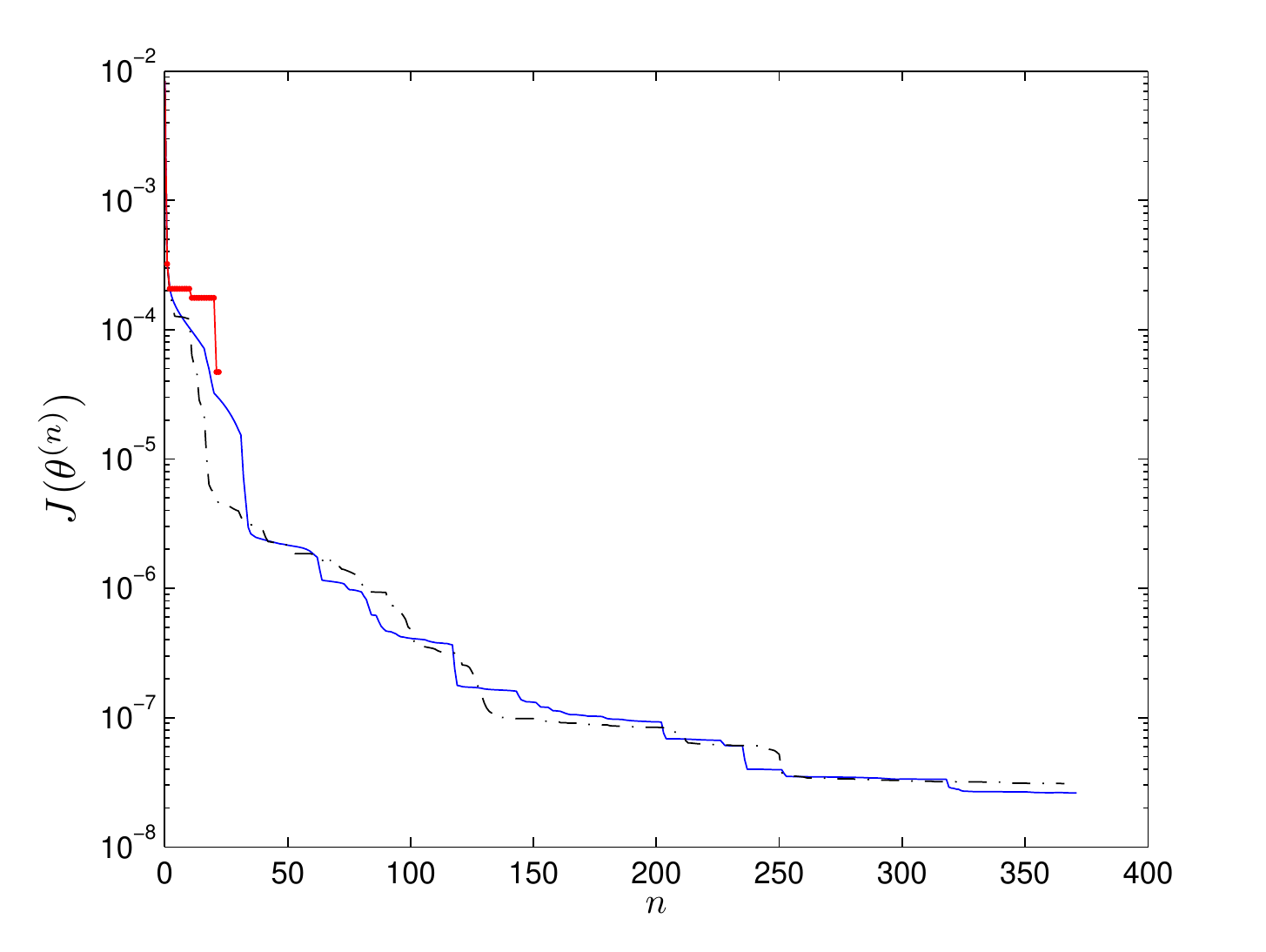}}
\subfigure[]{\includegraphics[width=0.5\textwidth]{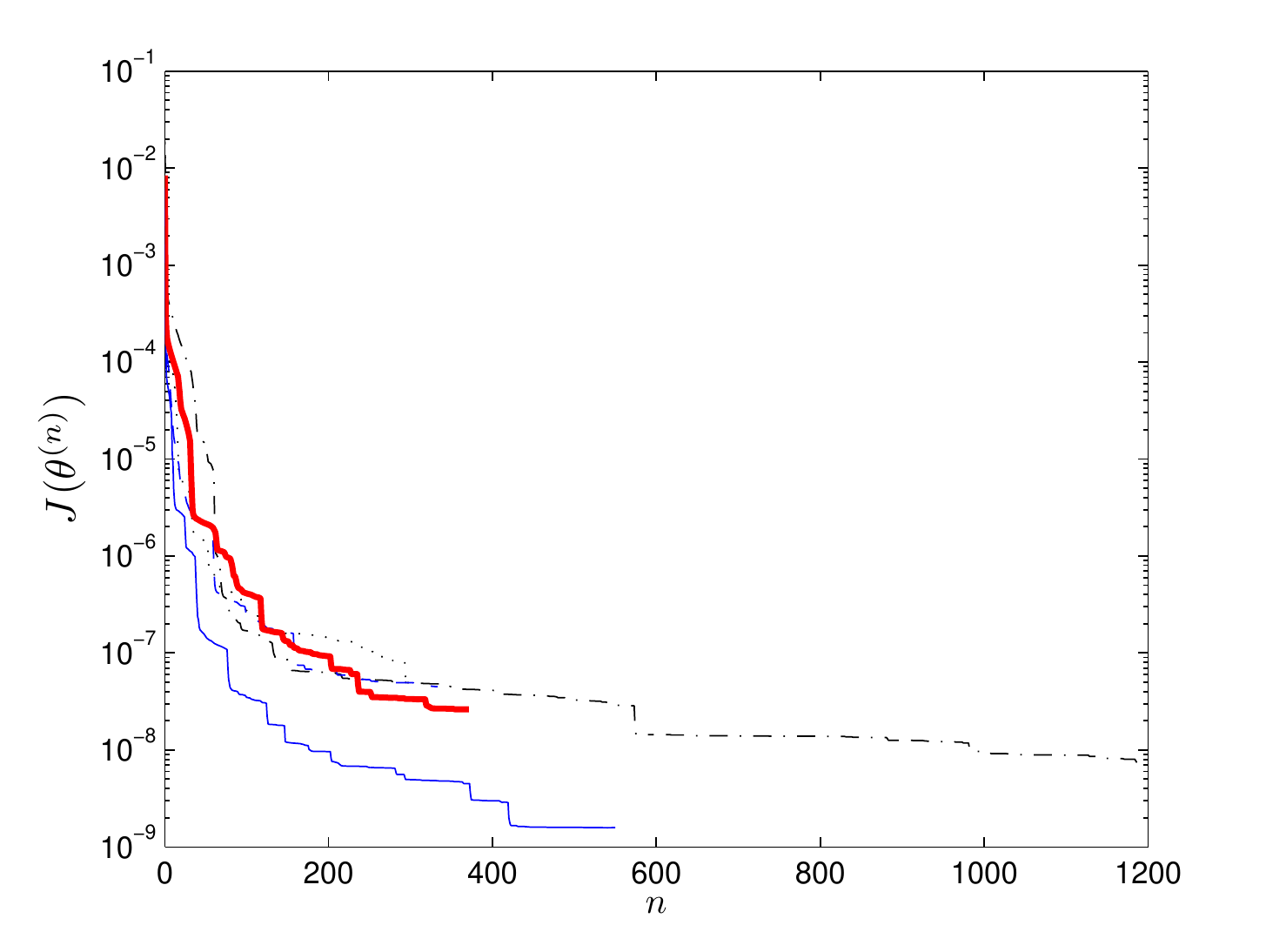}}}
\caption{(a) Decrease of the cost functional $\J(\theta^{(n)})$ with
  iterations $n$ using the Sobolev gradient $\bnabla^{H^1}_{\theta}
  \J$ defined in \eqref{eq:helm} and obtained with (solid line)
  steepest descent, (dash dotted line) conjugate gradient and (line
  with dots) BFGS methods with initial guess $\mu_{0} = 0.0177$.  (b)
  Decrease of the cost functional $\J(\theta^{(n)})$ with iterations
  $n$ for different initial guesses: (dots) $\mu_0 = \tilde
  {\mu}(T_{\alpha}) = 0.0280$, (dash--dotted line) $\mu_0 = \tilde
  {\mu}(T_b) = 0.0042$, (dashed line) $\mu_0(T)$ varying linearly
  between $\tilde {\mu}(T_{\alpha})$ and $\tilde{\mu}(T_{\beta})$,
  (thin solid line) $\mu_0 = \frac{1}{2} \tilde{\mu}(T)$ and (thick
  solid line) $\mu_{0} = 0.0177$.}
\label{fig:CF}
\end{center}
\end{figure}

\begin{figure}
\begin{center}
\mbox{
\subfigure[]{\includegraphics[width=0.5\textwidth]{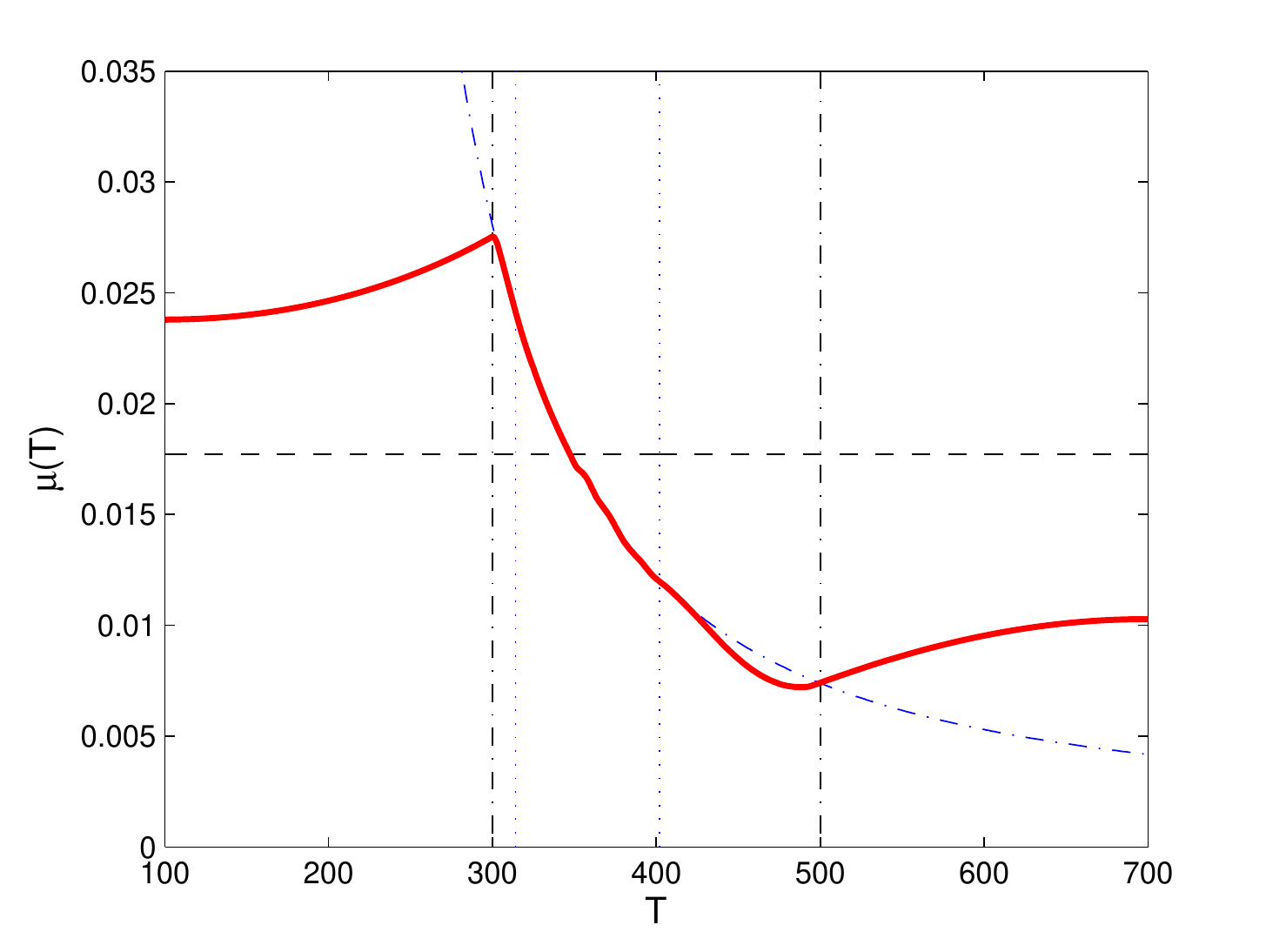}}
\subfigure[]{\includegraphics[width=0.5\textwidth]{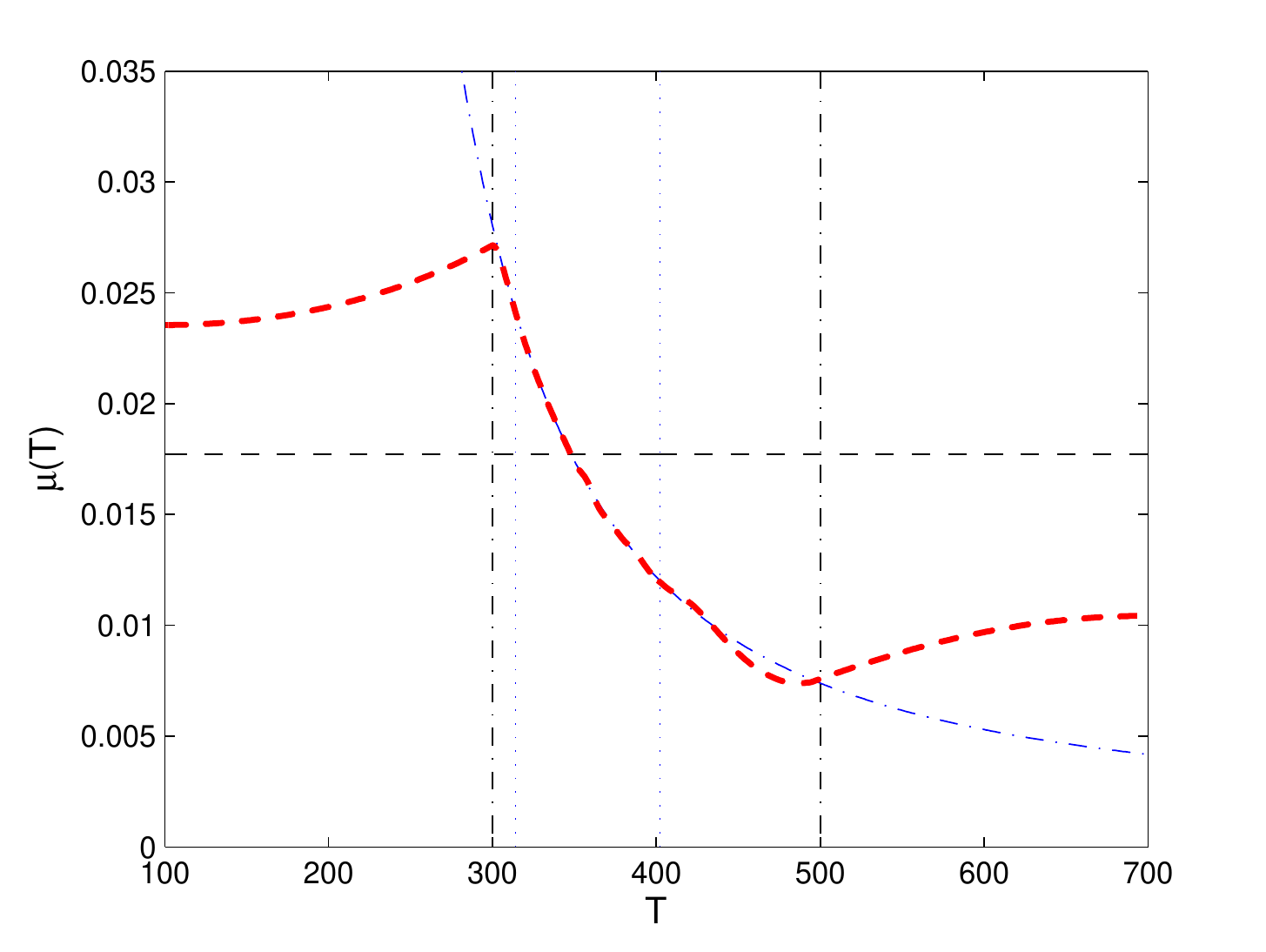}}}
\mbox{
\subfigure[]{\includegraphics[width=0.5\textwidth]{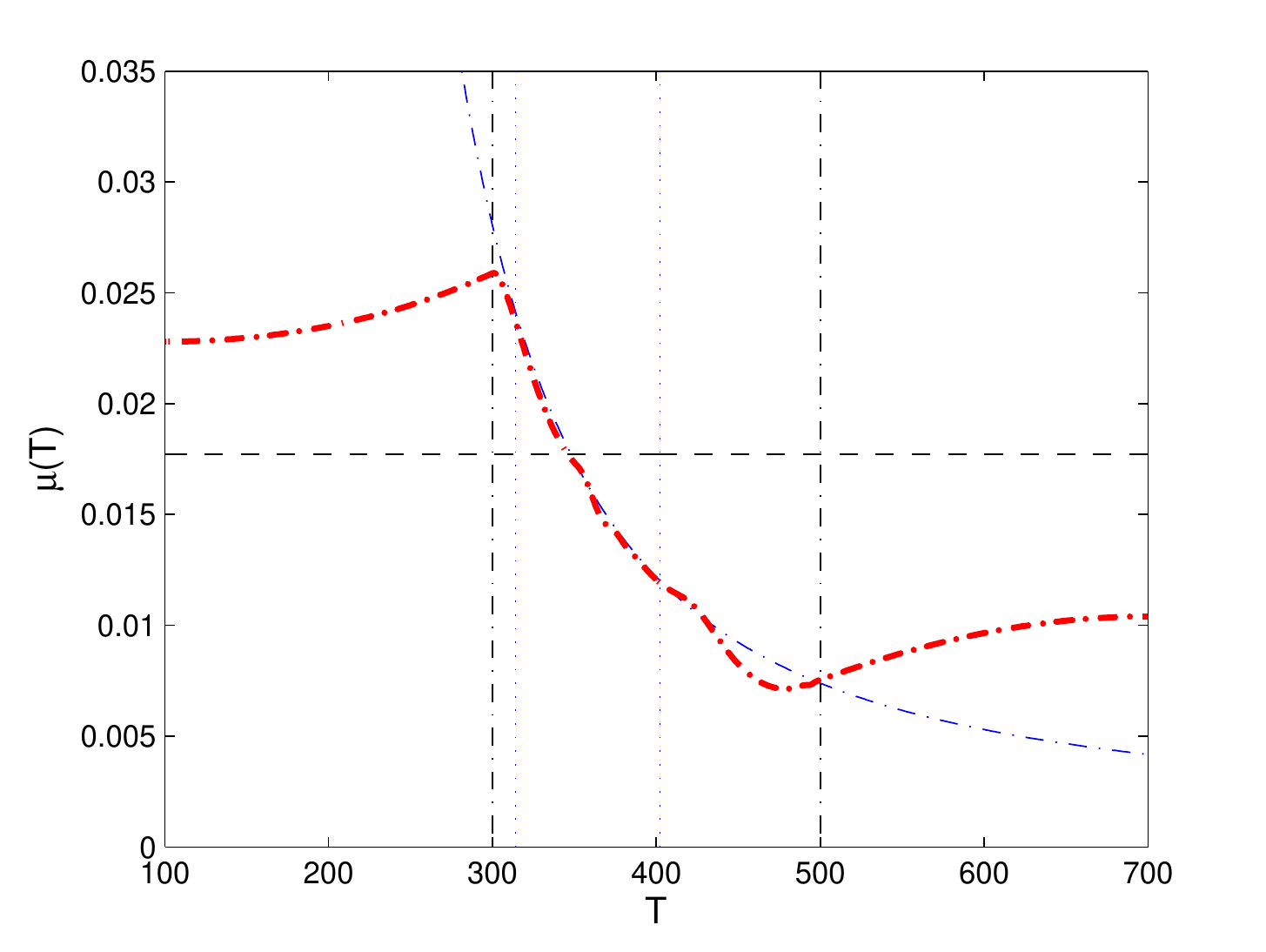}}
\subfigure[]{\includegraphics[width=0.5\textwidth]{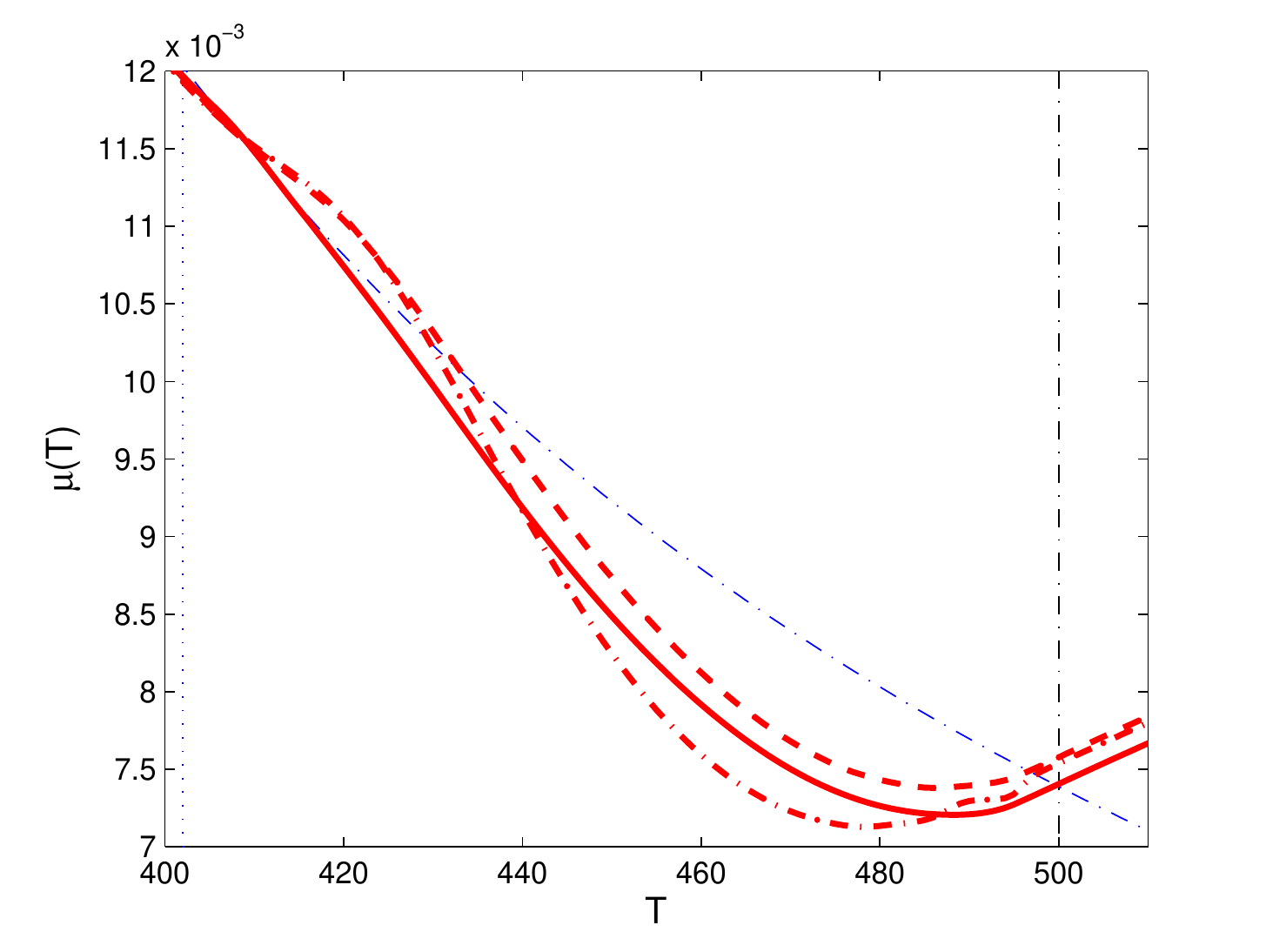}}}
\caption{Reconstruction $\hat{\mu}(T)$ of the material property obtained using
the Sobolev gradients defined in \eqref{eq:helm} on (a,b,c) the
interval $\L$ and (d) close--up view showing the interval outside the
identifiability region $\I$ with the time window $[0, t_f]$, where (a)
$t_f = \frac{1}{4}$ (b) $t_f = \frac{1}{2}$ and (c) $t_f = 1$. The
dash--dotted line represents the true material property
\eqref{eq:Andrade}, the thick solid, dashed and dash dotted lines are
the reconstructions for (a,d) $t_f = \frac{1}{4}$, (b,d) $t_f =
\frac{1}{2}$ and (c,d) $t_f = 1$, respectively, whereas the
dashed line represents the initial guess $\mu_{0} = 0.0177$; the
vertical dash--dotted and dotted lines represent, respectively, the
boundaries of the identifiability interval $\I$ and the measurement span
$\M$.}
\label{fig:base_T-4}
\end{center}
\end{figure}

\begin{figure}
\begin{center}
\mbox{
\subfigure[]{\includegraphics[width=0.5\textwidth]{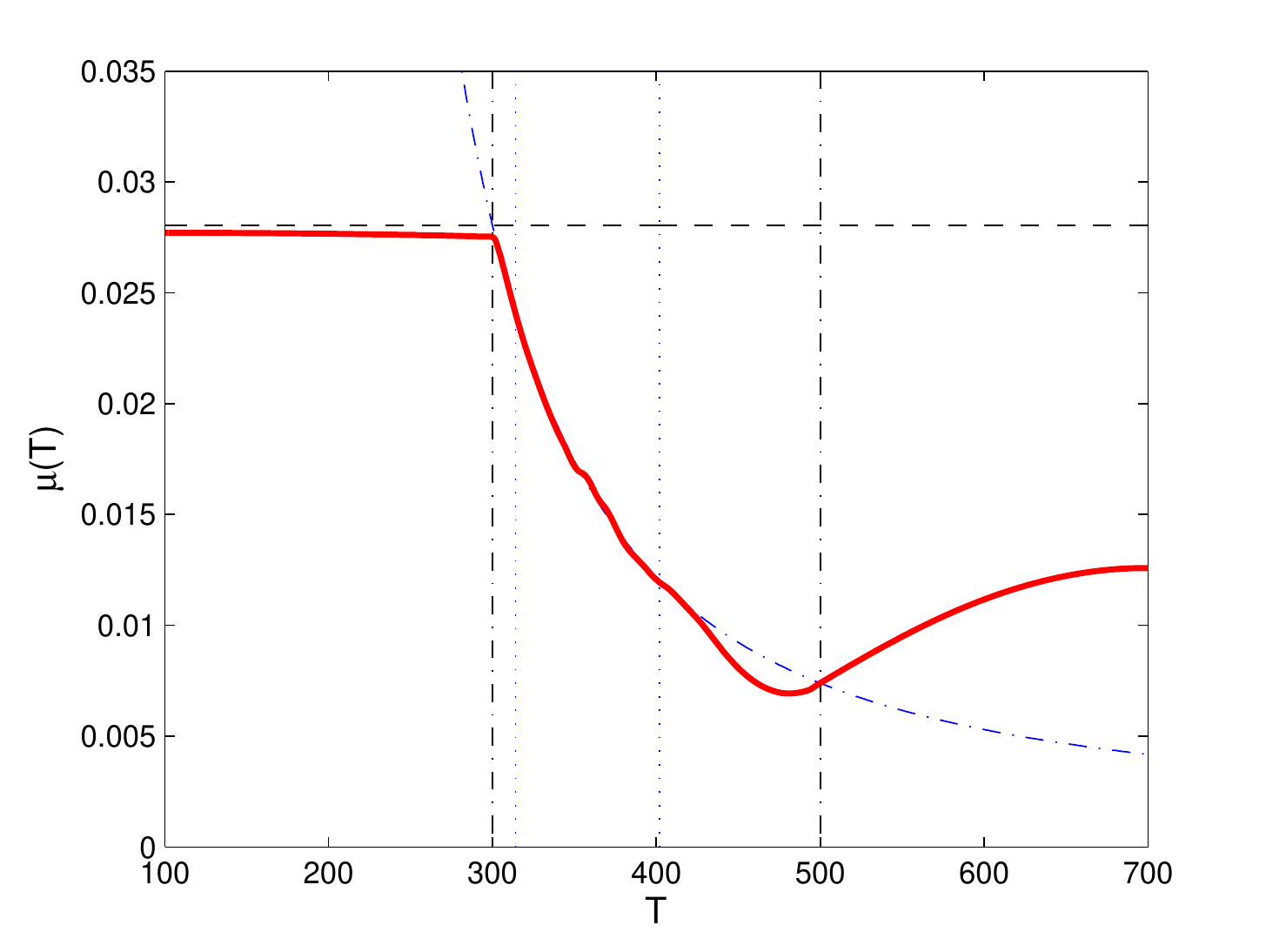}}
\subfigure[]{\includegraphics[width=0.5\textwidth]{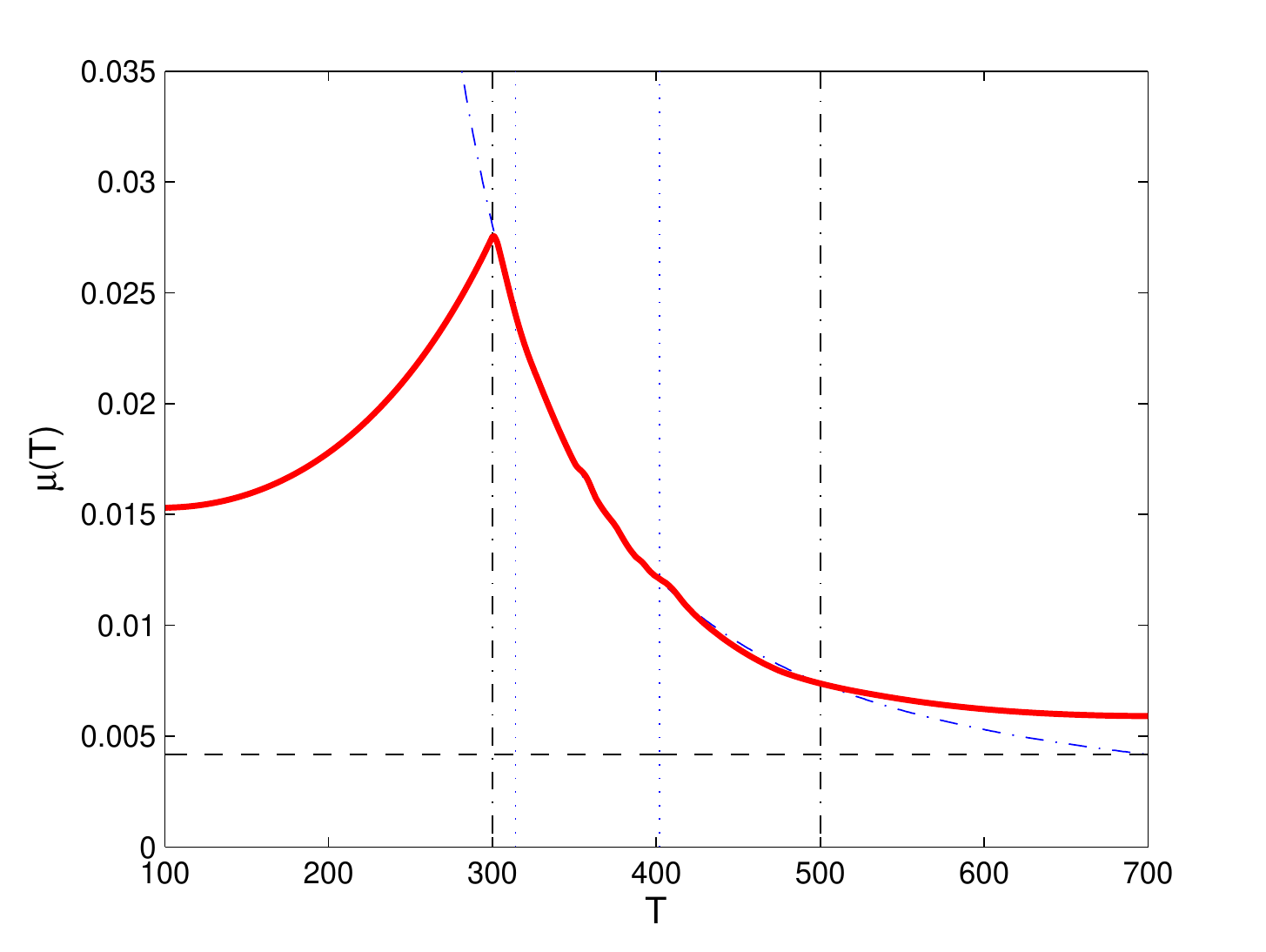}}}
\mbox{
\subfigure[]{\includegraphics[width=0.5\textwidth]{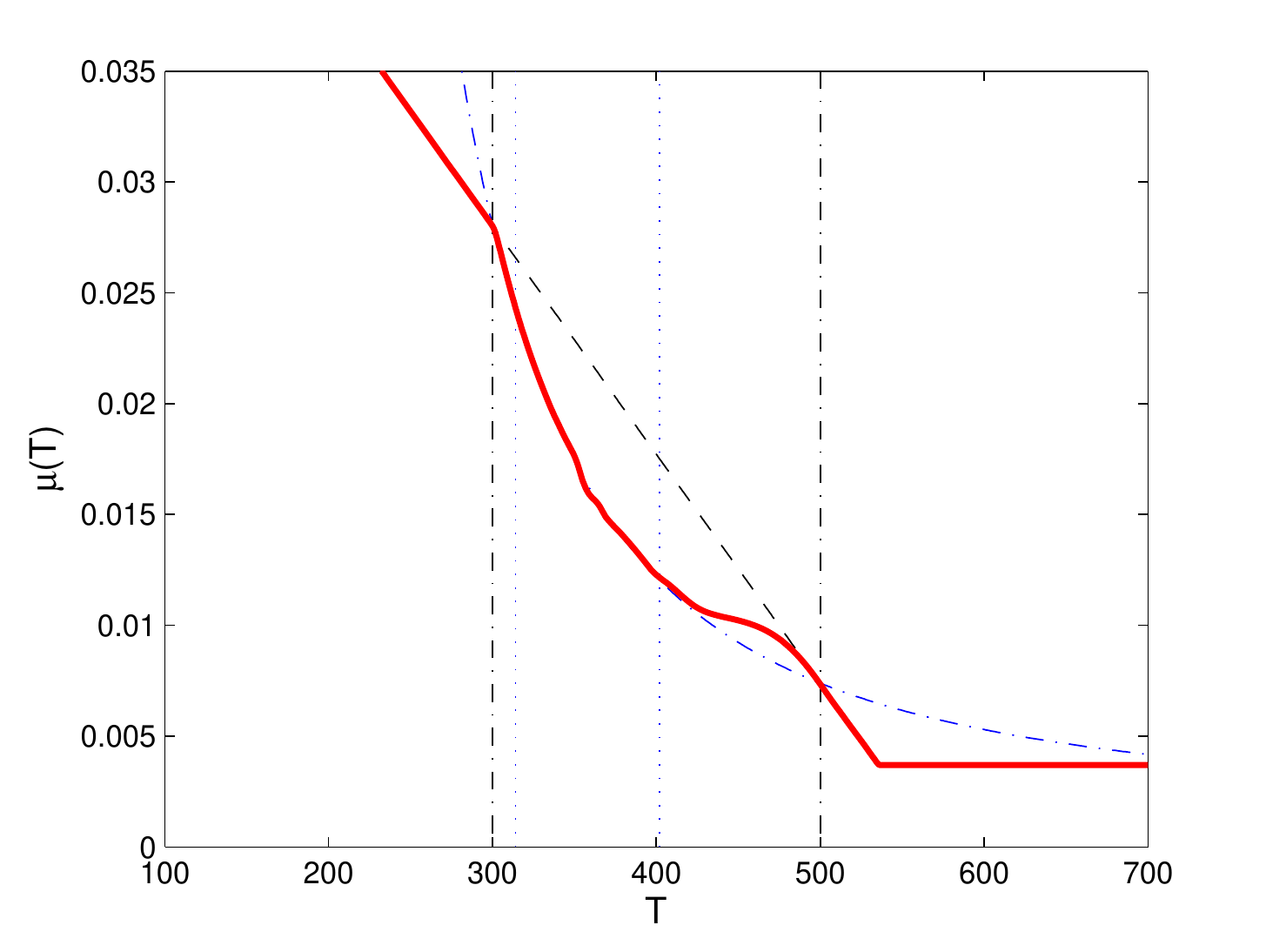}}
\subfigure[]{\includegraphics[width=0.5\textwidth]{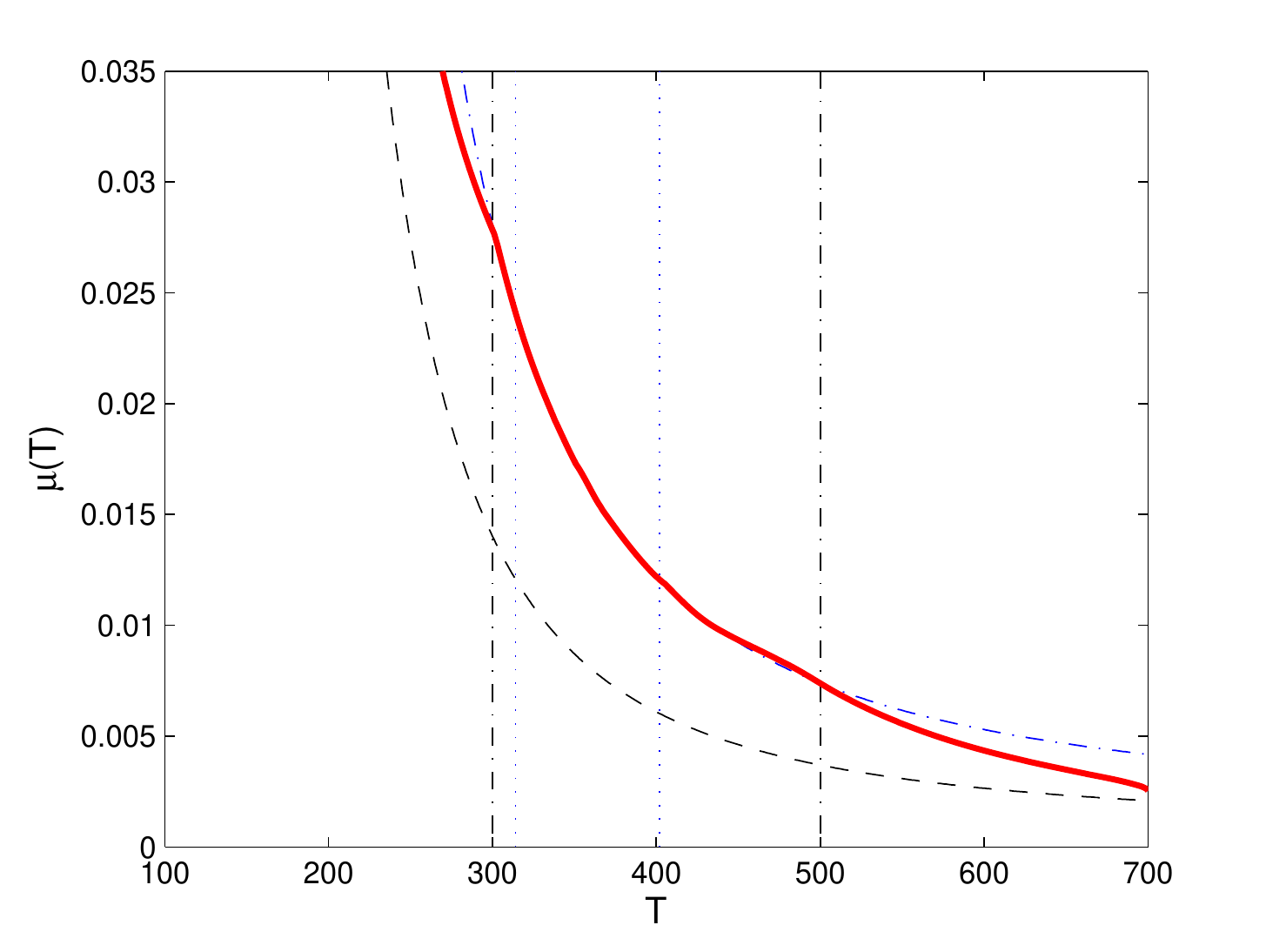}}}
\caption{Reconstruction $\hat{\mu}(T)$ of the material property
  obtained using different initial guesses (a) $\mu_0 = \tilde
  {\mu}(T_{\alpha}) = 0.0280$, (b) $\mu_0 = \tilde {\mu}(T_b) =
  0.0042$, (c) $\mu_0(T)$ varying linearly between $\tilde
  {\mu}(T_{\alpha})$ and $\tilde{\mu}(T_{\beta})$ and (d) $\mu_0 =
  \frac{1}{2} \tilde {\mu}(T)$, and the Sobolev gradients defined in
  \eqref{eq:helm} on the interval $\L$.  The dash--dotted line
  represents the true material property \eqref{eq:Andrade}, the solid
  line is the reconstruction $\hat{\mu}(T)$, whereas the dashed line
  represents the initial guess $\mu_{0}$; the vertical dash--dotted
  and dotted lines represent, respectively, the boundaries of the
  identifiability interval $\I$ and the measurement span $\M$.}
\label{fig:new_guess}
\end{center}
\end{figure}

\begin{figure}
\begin{center}
\mbox{
\subfigure[]{\includegraphics[width=0.5\textwidth]{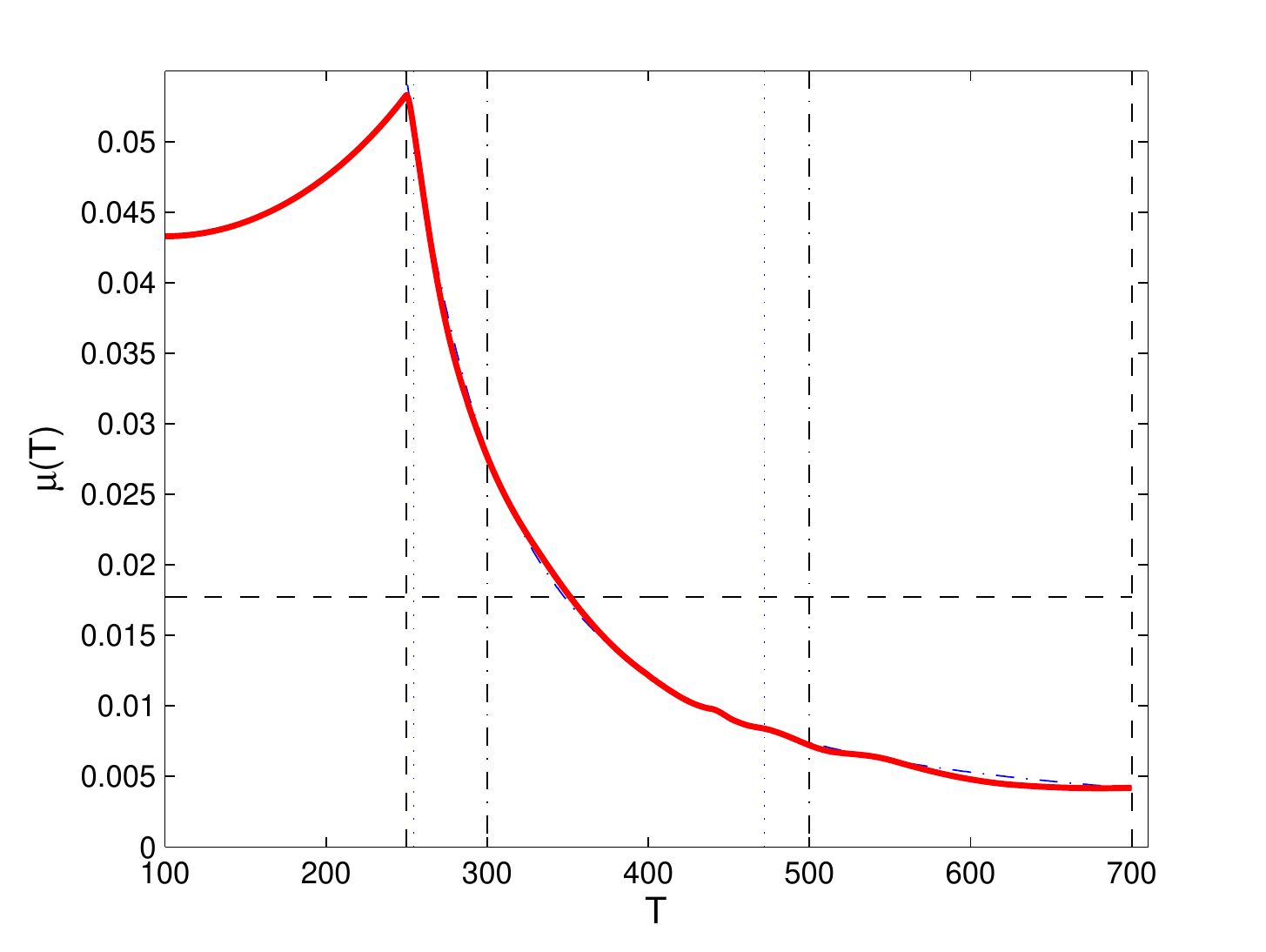}}
\subfigure[]{\includegraphics[width=0.5\textwidth]{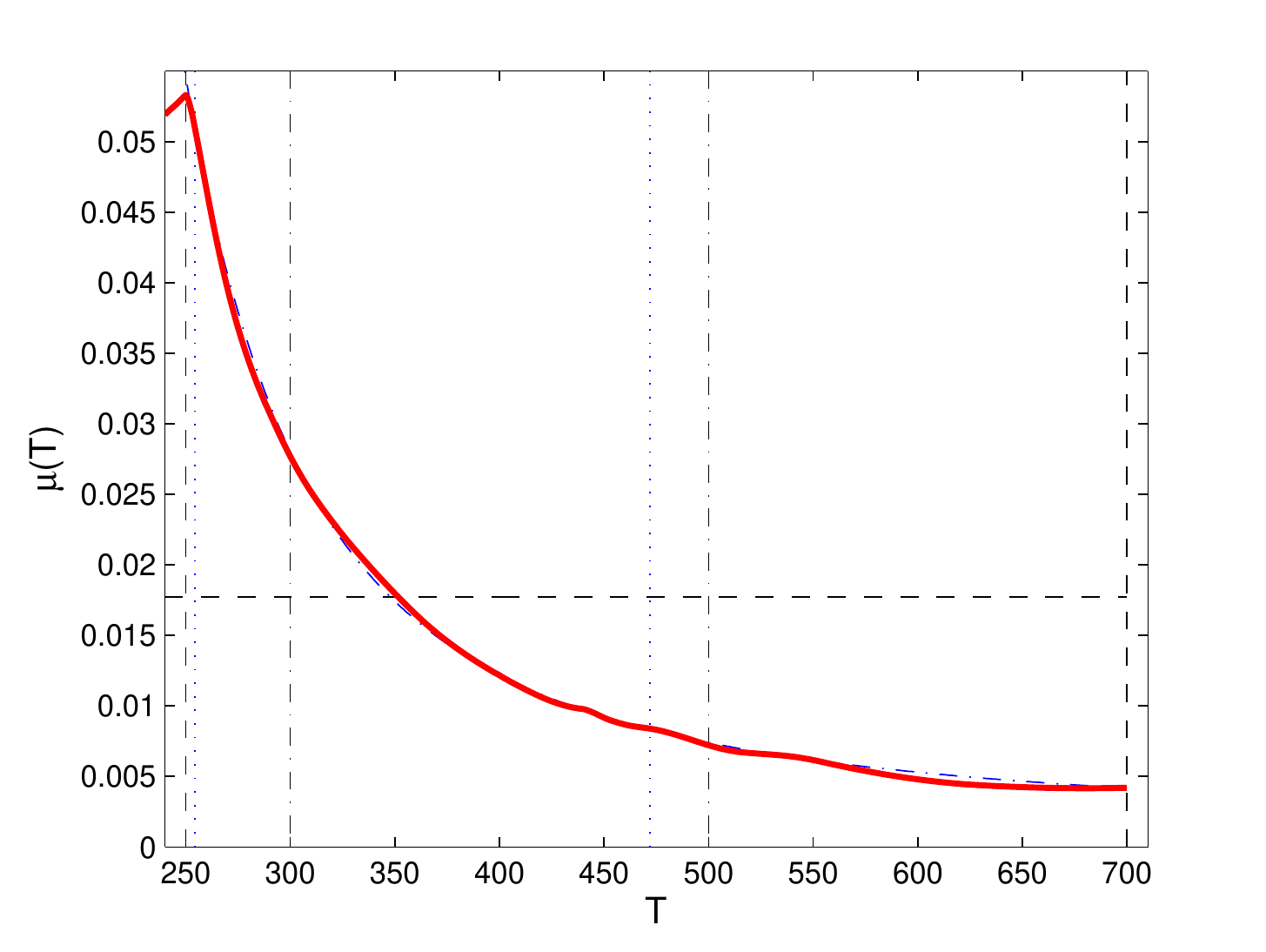}}}
\caption[]{Reconstruction $\hat{\mu}(T)$ of the material property using an extended
identifiability region $\I = [250.0, 700.0] = \L$ shown on (a) the interval
$[100.0, 700.0]$ and (b) magnification of this new identifiability region. The
dash--dotted line represents the true material property
\eqref{eq:Andrade}, the solid line is the reconstruction
$\hat{\mu}(T)$, whereas the dashed line represents the initial guess
$\mu_{0}$; the vertical dash--dotted and dotted lines represent,
respectively, the boundaries of the identifiability interval
$\I_0 = [300.0, 500.0]$ used previously and the measurement span
$\M$, while the dashed vertical lines show the boundaries of the new identifiability
interval $\I$.}
\label{fig:shift_combo}
\end{center}
\end{figure}

\subsection{Reconstruction in the Presence of Noise}
\label{sec:noise}

In this Section we first assess the effect of noise on the
reconstruction without the Tikhonov regularization and then study the
efficiency of the regularization techniques introduced in Section
\ref{sec:reg}. In Figure \ref{fig:noise_Tikh}a,b we revisit the case
already presented in Figure \ref{fig:base_T-4}a (reconstruction on the
interval $\L = [100.0, 700.0]$ with the identifiability region $\I =
[300.0, 500.0]$), now for measurements contaminated with 0.05\%,
0.1\%, 0.3\%, 0.5\% and 1.0\% uniformly distributed noise and without
Tikhonov regularization. To incorporate noise, say of $\eta\%$, into
the measurements $\{\tilde T_i(t) \}_{i=1}^M$, we replace these
measurements at every discrete time step $t_j \in [0,t_f]$ with a new
set $\{\tilde T_i^{\eta}(t_j) \}_{i=1}^M$, where the independent
random variables $\tilde{T}_i^{\eta}(t_j)$ have a uniform distribution
with the mean $\tilde{T}_i(t_j)$ and the standard deviation
$\Delta\eta = \frac{1}{M} \sum_{i=1}^M \tilde{T}_i(t_j) \cdot
\frac{\eta}{100\%}$. Unless stated otherwise, in order to be able to
directly compare reconstructions from noisy measurements with
different noise levels, the same noise realization is used after
rescaling to the standard deviation $\Delta\eta$. As expected, in
Figure \ref{fig:noise_Tikh}a,b we see that increasing the level of
noise leads to oscillatory instabilities developing in the
reconstructed constitutive relations $\hat{\mu}(T)$. We note that the
reconstructions become poor already for relatively low noise levels,
i.e., on the order of $1\%$. One reason for this seems to be the
time-dependent nature of the problem in which independent noise is
added to the measurements at every (discrete) time instant leading to
accumulation of uncertainty. Indeed, such loss of information was not
observed in the case of the {\em steady} problem studied in
\cite{bvp10} where reliable reconstructions could be obtained with
noise levels an order of magnitude larger than in the present
investigation. In regard to the results shown in Figure
\ref{fig:noise_Tikh}a,b, we add that the pattern introduced by the
noise in the reconstructions depends on the specific noise sample used
(which was the same for all the reconstructions shown in the Figure).
Reconstructions performed using different realizations of the noise
produce distinct patterns in the reconstructed constitutive relations
$\hat{\mu}(T)$. In our computational experiments we also observe that
inclusion of noise in the measurements tends to replace the original
minimizers with perturbed ones (this is evident from the uniform, with
respect to $T$, convergence of the perturbed minimizers to the
noise-free reconstructions as the noise level goes to zero in Figure
\ref{fig:noise_Tikh}a,b).

The effect of the Tikhonov regularization is studied in Figure
\ref{fig:noise_Tikh}c,d, where we illustrate the performance of the
technique described in Section \ref{sec:reg}, cf.~\eqref{eq:regH1}, on
the reconstruction problem with 1.0\% noise in the measurement data
(i.e., the ``extreme'' case presented in Figures
\ref{fig:noise_Tikh}a,b). In terms of the (constant) reference
function we take $\overline{\theta} = \sqrt{\mu_0 - m_{\mu}}$, where
$\mu_0 = 0.0177$. We note that by increasing the values of the
regularization parameter $\lambda$ in \eqref{eq:regH1} from 0 (no
regularization) to 2500 we manage to eliminate the instabilities
caused by the presence of noise in the measurements and obtain as a
result smoother constitutive relations, cf.~Figure
\ref{fig:noise_Tikh}c,d. We add that, while after introducing the
Tikhonov regularization the reconstructed solutions converge in fact
to different local minimizers (in comparison with the reconstructions
without noise), this does not prevent the reconstructions from
capturing the main qualitative features of the actual material
property. Systematic methods for determining the optimal values of
regularization parameters are discussed for instance in \cite{ehn96}.
Finally, in Figure \ref{fig:noise_avr} we present the relative
reconstruction errors $\|\hat{\mu} - \tilde{\mu} \|_{L_1(\I)} \,/
\,\|\tilde{\mu}\|_{L_1(\I)}$ obtained using the approach described
earlier in Section \ref{sec:reg} for data with different noise levels
and averaged over 10 different noise samples. From Figure
\ref{fig:noise_avr} we conclude that larger values of the
regularization parameter $\lambda$ are required for more noisy
measurements. We close this Section by concluding, in agreement with
our earlier results reported in \cite{bvp10}, that Tikhonov
regularization performs as expected in problems with noise present in
the measurement data.

\begin{figure}
\begin{center}
\mbox{
\subfigure[]{\includegraphics[width=0.5\textwidth]{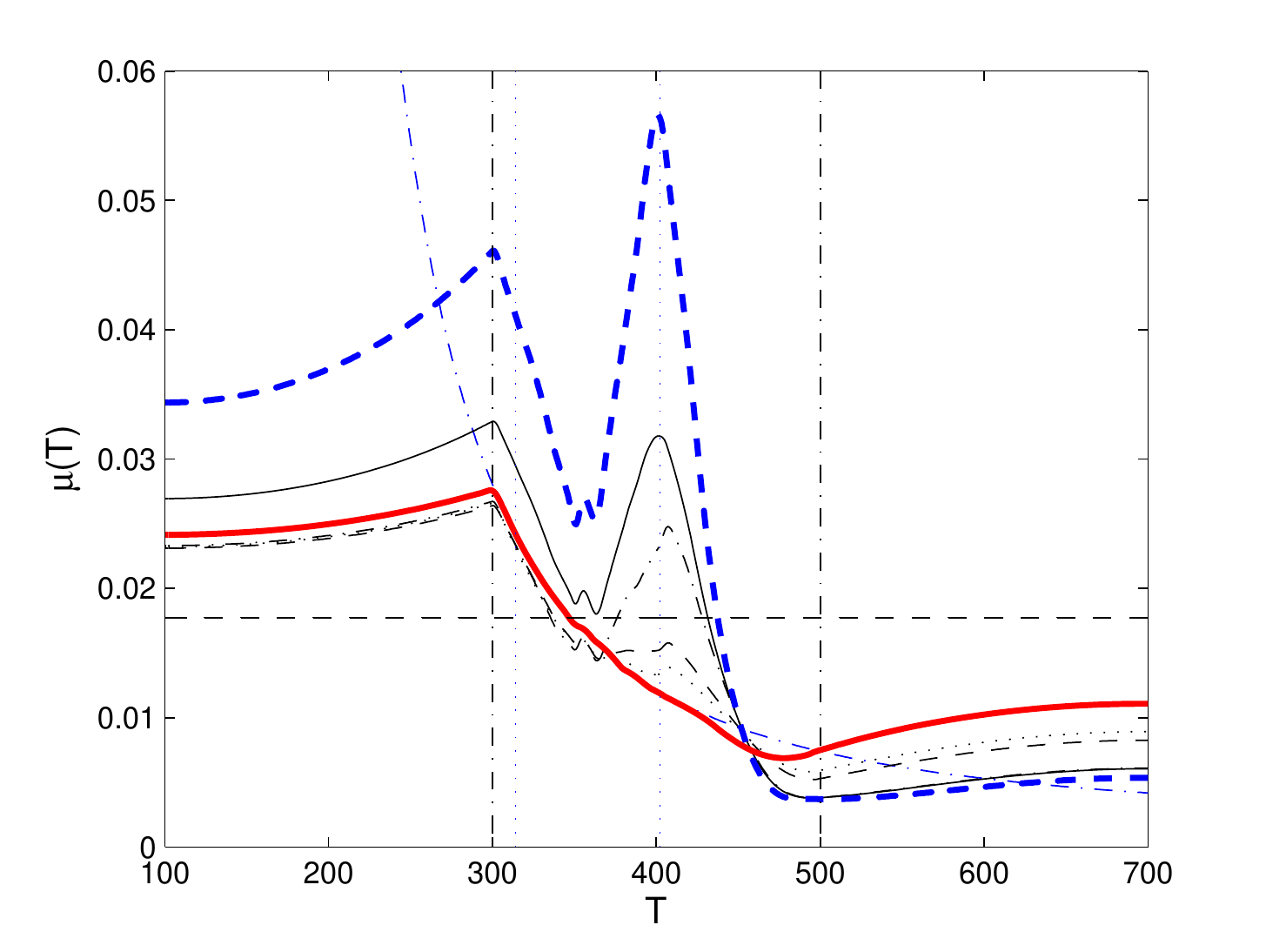}}
\subfigure[]{\includegraphics[width=0.5\textwidth]{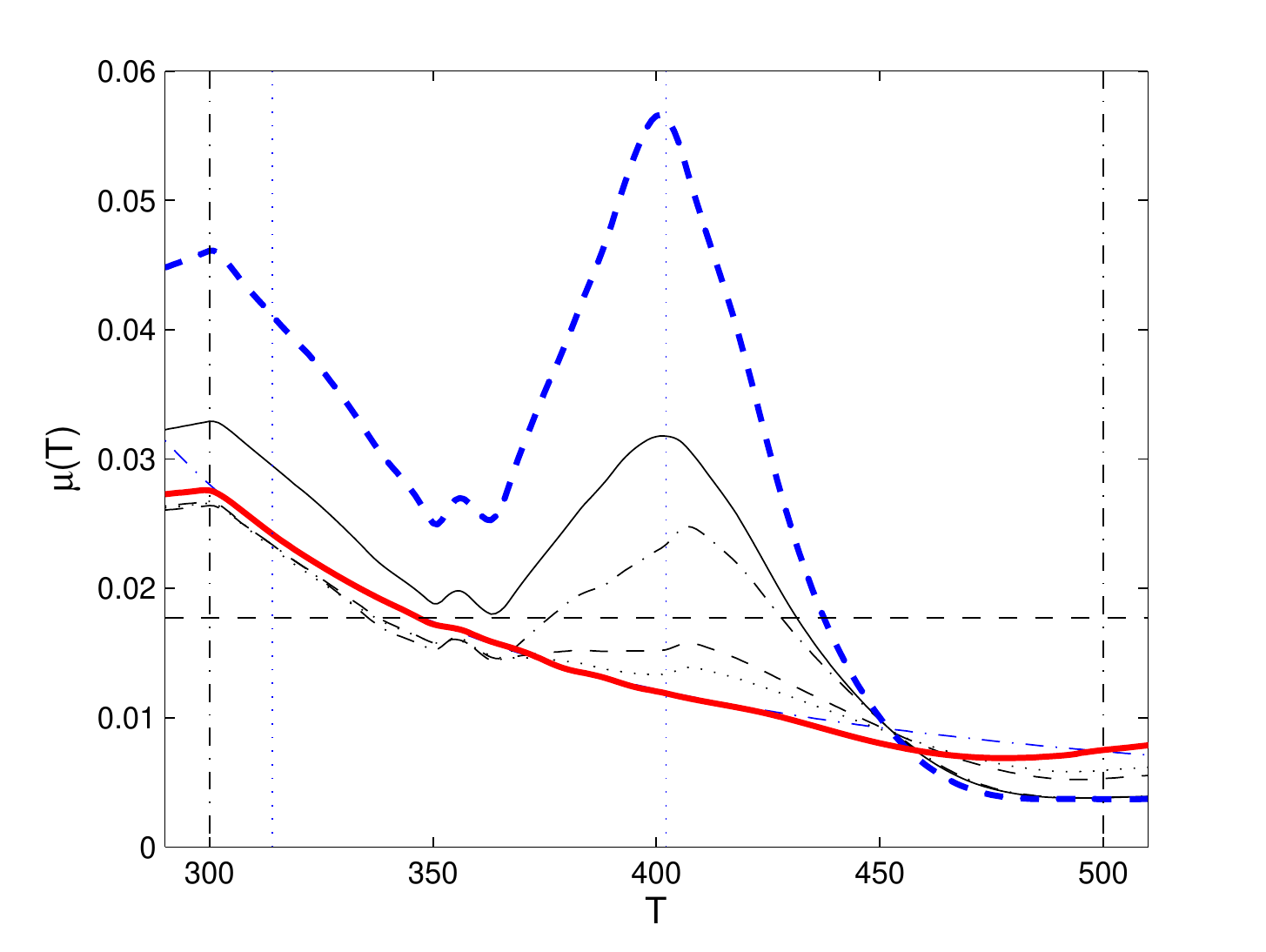}}}
\mbox{
\subfigure[]{\includegraphics[width=0.5\textwidth]{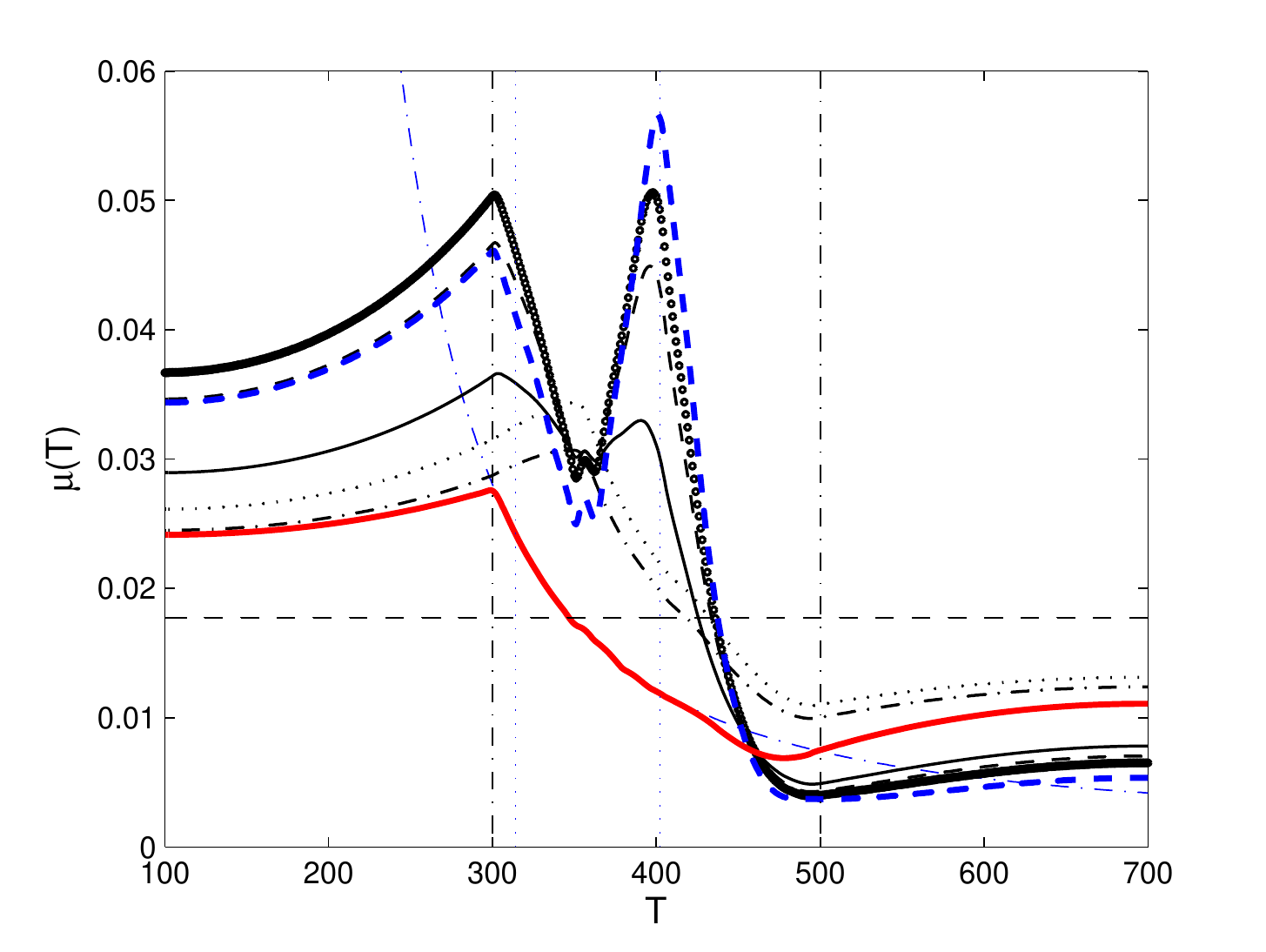}}
\subfigure[]{\includegraphics[width=0.5\textwidth]{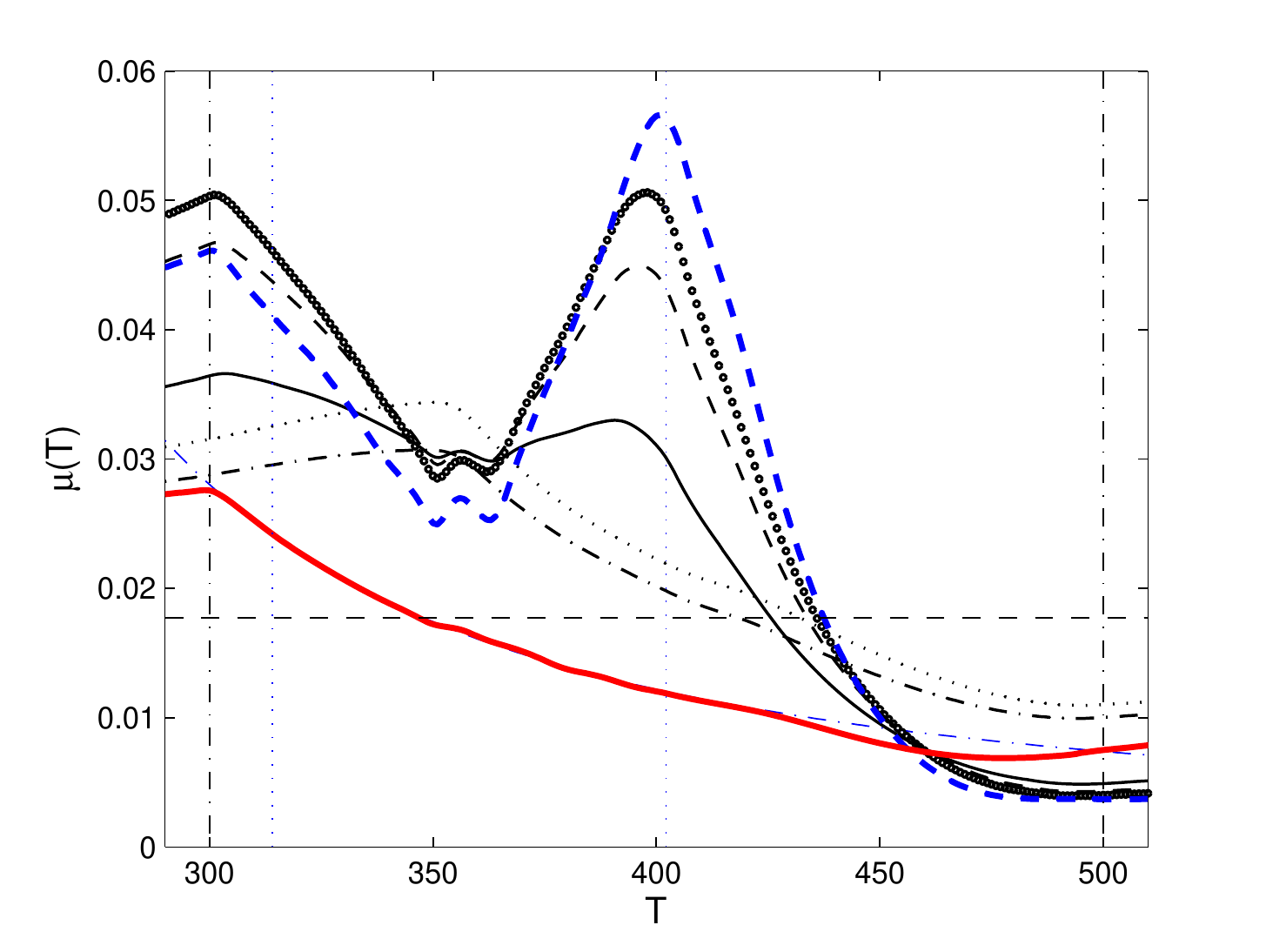}}}
\caption{(a,b) Reconstruction $\hat{\mu}(T)$ of the material property
  obtained in the presence of different noise levels in the
  measurement data: (thick solid line) no noise, (dotted line) 0.05\%,
  (dashed line) 0.1\%, (dash--dotted line) 0.3\%, (thin solid line)
  0.5\%, and (thick dashed line) 1.0\% on (a) the interval $\L$ and
  (b) close--up view showing the identifiability interval $\I$. (c,d)
  Effect of Tikhonov regularization on the reconstruction from the
  measurement data with 1.0\% noise using regularization term
  \eqref{eq:regH1} on (c) the interval $\L$ and (d) close--up view
  showing the identifiability interval $\I$. In both figures (c,d) the
  following values of the regularization parameter were used: (thick
  dashed line) $\lambda = 0$, (circles) $\lambda = 2.5$, (dashed line)
  $\lambda = 6.25$, (thin solid line) $\lambda = 25.0$, (dash--dotted
  line) $\lambda = 250.0$, and (dots) $\lambda = 2500.0$.  For all
  figures the horizontal dashed line represents the initial guess
  $\mu_{0} = 0.0177$; the vertical dash--dotted and dotted lines
  represent, respectively, the boundaries of the identifiability
  interval $\I$ and the measurement span $\M$.}
\label{fig:noise_Tikh}
\end{center}
\end{figure}

\begin{figure}
\begin{center}
\includegraphics[width=0.5\textwidth]{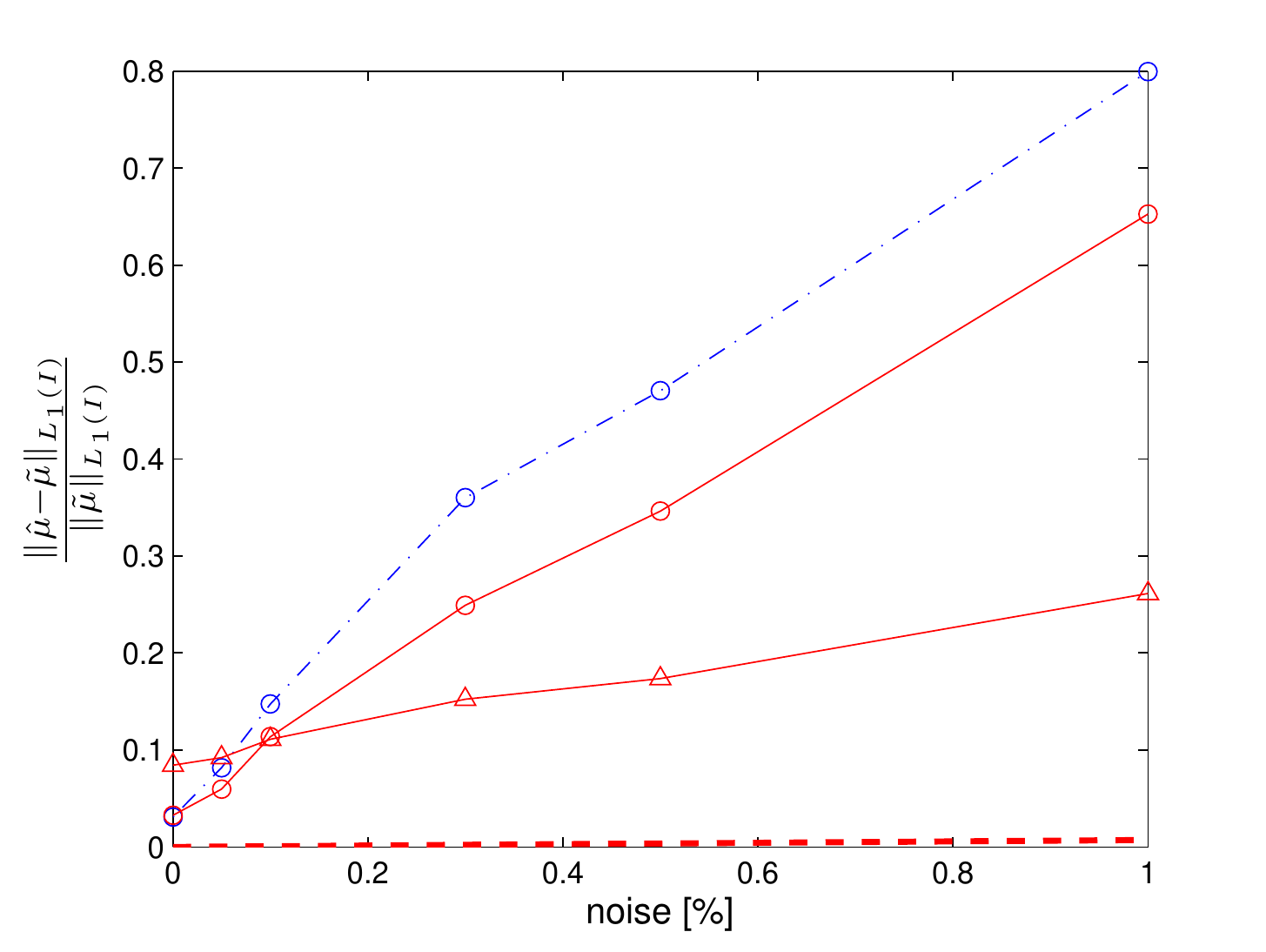}
\end{center}
\caption{Relative $L_1$ reconstruction errors $\|\hat{\mu} -
  \tilde{\mu}\|_{L_1(\I)} \, / \, \|\tilde{\mu}\|_{L_1(\I)}$ obtained
  in the presence of noise with the amplitude indicated and
  averaged over 10 samples: (dash--dotted line) reconstruction
  with Sobolev gradients and without Tikhonov regularization,
  and (solid line) reconstruction with $\dot{H}^1$ Tikhonov
  regularization term \eqref{eq:regH1} [(circles) $\lambda = 2.5$,
  (triangles) $\lambda = 250.0$]. The thick dashed line
  represents the ``error'' in the exact material property
  \eqref{eq:Andrade} obtained by adding noise to $T$
  and averaging over time steps.}
\label{fig:noise_avr}
\end{figure}

%\clearpage

\section{Conclusions and Summary}
\label{sec:final}

We developed an optimization--based approach to the problem of
reconstructing temperature--dependent material properties in complex
thermo--fluid systems. As a model problem we considered
two--dimensional unsteady flows in a lid--driven cavity involving also
heat transfer. A key element of this approach is the gradient of the
cost functional which is computed based on a suitably--defined adjoint
system, and is shown to have mathematical structure different than in
most problems involving adjoint--based PDE optimization. We discussed
three different numerical approaches to evaluation of these cost
functional gradients which are given in terms of integrals defined on
the level sets of the temperature field. As compared to earlier work
on the numerical approximation of such expressions
\cite{EngTornTs04,ZahTorn10,Mayo84,Smereka2006,Beale08,MinGib07,MinGib08,Tow09,Tow07},
we also addressed at length the question of the discretization of the
solution space which is specific to our reconstruction problem.
Evidence is shown for the superior performance with respect to the
discretizations of the physical and the solution space, as well as the
computational time, of a new approach to evaluation of gradients which
is proposed in this study.

The reconstruction results obtained demonstrate good performance of
the algorithm, in particular, it is shown that by suitably adjusting
the boundary conditions in the governing system we can extend the
identifiability region. There are two comments we wish to make
concerning these results. As regards the data shown in Figure
\ref{fig:base_T-4}, it might appear somewhat surprising that
reconstructions on longer time windows (which use more information
from the measurements) do not necessarily lead to better results. A
probable reason is that such reconstruction problems defined on longer
time windows tend to be harder to solve numerically, hence the
solutions found may not be global minimizers (or even ``good'' local
ones).  Another comment we have concerns the relatively modest noise
levels for which stable reconstructions could be obtained in Section
\ref{sec:noise}. We note that inclusion of even a small amount of
noise could alter the nature of the reconstructed constitutive
relations. On the other hand, we also remark that reconstructions
performed based on a steady-state problem and reported in \cite{bvp10}
did allow for significantly higher noise levels. We therefore
conjecture that the sensitivity to noise in the present problem is
related to its time-dependent character, as here the effect of
instantaneously small noise levels may be compounded by its continuous
accumulation over the entire time window. Since the flow problem
studied here was admittedly very simple, this issue may certainly
limit the applicability of the proposed method to problems
characterized by higher Reynolds numbers and therefore merits a more
thorough study in the future (this is not the question of the
numerical resolution alone, but rather of the interplay between this
resolution and the complexity of the underlying optimization problem).
Experience with other data assimilation problems may suggest that
acquiring the measurements less frequently in time may actually help
mitigate this effect.

The approach developed in the present study was recently successfully
applied to the problem of system identification involving a dynamical
system in the phase-space representation. More precisely, in
\cite{pnm12} we used this method to reconstruct an invariant manifold
in a realistic reduced-order model of a hydrodynamic instability.  Our
future work will involve extensions of the present approach to more
complicated problems involving systems of coupled PDEs depending on
time and defined on domains in three dimensions.  In the context of
such systems an interesting issue is the reconstruction of anisotropic
constitutive relations. A more challenging problem is related to
reconstruction of constitutive relations in systems involving phase
changes. In addition to the governing PDE in the form of a
free--boundary problem, one would also have to deal with constitutive
relations with distinct functional forms in each of the phases.
Interesting questions also arise in the reconstruction of material
properties defined on the interfaces between different phases, such as
for example the temperature--dependent surface tension coefficient
playing an important role in Marangoni flows. In the study of such
more complicated problems close attention will need to be paid to the
question of ensuring consistency of the reconstructed constitutive
relation with the second principle of thermodynamics, which can be
done by including a form of the Clausius--Duhem inequality
\cite{tpcm08a} in the optimization formulation. In such more general
problems it is not obvious whether this additional constraint can be
eliminated by introducing a slack--variable formulation similar to the
one used in this study, and one may have to perform numerical
optimization in the presence of inequality constraints. These
questions are left for future research.

\section*{Acknowledgments}

The authors wish to acknowledge generous funding provided for this
research by the Natural Sciences and Engineering Research Council of
Canada (Collaborative Research and Development Program), Ontario
Centres of Excellence --- Centre for Materials and Manufacturing, and
General Motors of Canada. The authors are also thankful to Professor
J.~Sesterhenn for bringing to their attention the slack--variable
approach to ensuring the positivity of the reconstructed material
property.

\clearpage

\end{document}